\newtheorem{theorem}{\bf Theorem}
\newtheorem{remark}{\bf Remark}
\newtheorem{proposition}{\bf Proposition}
\newtheorem{lemma}{\bf Lemma}
\newcommand{\indep}{\perp \!\!\! \perp}
\newcommand{\bx}{{\rvec{ x}}}
\newcommand{\by}{{\rvec{y} }}
\newcommand{\bw}{{ \rvec{w}}}
\newcommand{\bv}{{\rvec{ v}}}
\newcommand{\ba}{{\rvec{a}}}
\newcommand{\floor}[1]{\lfloor {#1} \rfloor}
\newcommand{\rvec}[1]{\boldsymbol{\mathsf{#1}} }
\newcommand{\dvec}[1]{{\mathsf{#1}} }
\newcommand{\dmat}[1]{{\mathsf{#1}} }
\newcommand{\Tr}{\text{\upshape{Tr}}}
\newcommand{\st}{\text{\upshape{s.t. }}}
\newcommand{\Cesaro}{Ces\'{a}ro }
\begin{document}
\title{Minimum Bitrate Neuromorphic Encoding for Continuous-Time  Gauss-Markov Processes}
\author{Travis Cuvelier, Ronald Ogden, and Takashi Tanaka 
\thanks{T. Cuvelier is with the Chandra Department of Electrical and Computer Engineering at the University of Texas at Austin,  Austin, TX 78712 USA (e-mail: tcuvelier@utexas.edu). T. Tanaka, and R. Ogden, are with the Department of Aerospace Engineering and Engineering Mechanics at the University of Texas at Austin, Austin, TX 78712 USA (e-mail: ronnieogden95@utexas.edu, ttanaka@utexas.edu). }
}

\maketitle

\begin{abstract}
In this work, we study minimum data rate tracking of a dynamical system under a neuromorphic event-based sensing paradigm. We begin by bridging the gap between continuous-time (CT) system dynamics and information theory's causal rate distortion theory. We motivate the use of non-singular source codes to quantify bitrates in event-based sampling schemes. This permits an analysis of minimum bitrate event-based tracking using tools already established in the control and information theory literature. We derive  novel, nontrivial lower bounds to event-based sensing, and compare the lower bound with the performance of well-known schemes in the established literature. 
\end{abstract}

\begin{IEEEkeywords}
Kalman filters, Information theory, Continuous-time systems, Networked control systems,  Optimal control. 
\end{IEEEkeywords}

\section{Introduction}\label{sec:introduction}
Neuromorphic, or ``event-based", sensing is a biologically inspired paradigm that is fundamentally distinct from both conventional, periodic digital sensing and ``continuous-time" approaches. Rather than producing samples at discrete instances, an event-based sensor produces a stream of event tuples $(g(t_0),t_0),(g(t_1),t_1),(g(t_2),t_2),...$, where the sampling instances $t_i$ are chosen by the sensor real-time. A salient feature of event-based sampling is that samples are produced ``as needed", according to an application-specific rule. In an event-based framework, information about the source process $g(t)$ is encoded both in the samples $g(t_{i})$ and in their timing, $t_i$. In scenarios where the samples must be encoded and transmitted over a network, an event-based paradigm allows the encoder to remain ``silent" for long intervals.  In some scenarios, neuromorphic sampling can reduce the use of the communication/computational resources \cite{astrom2002comparison, guo2021optimal_IT}. Relatedly, there has been a recent resurgence of interest in event-based vision. Event-based cameras use asynchronous pixel readout mechanisms that are fundamentally different from their conventional, frame-based counterparts. They offer low latency and high-dynamic range with reduced power consumption \cite{censi2015power}\cite{prophesee2020neurips}. There is recent work that explores the application of event-cameras in computer vision \cite{gallego2020event} and as sensors for guidance, navigation and control of robotics \cite{mueggler2017event}. 

A foundational problem in sensing, vision, and control over communication networks is to minimize the use of communication resources subject to a constraint on quality. For any fixed reliability, data rates can be directly tied to the amount of scarce physical resources (time/bandwidth/power) that must be allocated to achieve the desired performance \cite{hjjournal}. The rate-distortion trade-off delineates the minimum bitrate required to achieve some desired control or estimation performance. Understanding this trade-off is key to the development of sensing and communication protocols; at a minimum, it sets the goalposts for system optimization, and at best, gives insight into optimal protocol designs. While in control and real-time vision, the system of interest usually operates in CT (e.g. an underlying visual scene), and the distortion metric depends on real-time reconstruction at some decoder (e.g. the pixels on an LCD screen), rate-distortion tradeoffs are typically formulated in terms of a discrete-time (DT) model that assumes a fixed sampling period.  These DT formulations are not amenable to the analysis of more general event-based sensing and communication modalities, and may obscure more realistic notions of communication and estimation performance as they arise in real time. With notable exceptions (cf. \cite{guo2021optimal_TAC,guo2021optimal_IT}), there is little work on event-based estimation and control in the modern literature on rate-distortion theory. 

Meanwhile, on the experimental side, event-based sensing paradigms have shown promise as a means for communication-efficient control over wireless networks. While it has been proposed that radios required for large-scale industrial automation must achieve extremely high reliability and low latency \cite{dahlman20145g}\cite{durisi2016toward}, using event-based control/communication co-design has demonstrated success with cheaper, less reliable radios \cite{baumann2021wireless}, motivating a rate-distortion-theoretic investigation of event-based sensing. 

The experimental success of event-based paradigms for communication efficient control and tracking motivates the pursuit of a theoretical understanding of the tradeoff between bitrate and control/estimation performance in systems that use event-based sampling. In this work, we formulate a causal rate-distortion optimization that captures an accurate notion of the bitrate/estimation performance tradeoff when a CT Gauss-Markov source process is observed, encoded, and reconstructed using an event-based sensing and communication paradigm. We assume that an encoder continuously monitors the source. The encoder produces discrete messages, nominally binary strings, and asynchronously conveys them over a noiseless channel to an estimator. The estimator then produces a causal CT estimate of the source, and its performance is quantified by its mean squared error (MSE). Our optimization aims to minimize the bitrate of the channel subject to a constraint on this MSE.  We subsequently combine tools from the theories of dynamical systems, estimation/control, and source coding to derive novel, nontrivial lower bounds on this rate-distortion tradeoff. To our knowledge, this is the first such result for vector-valued  Gauss-Markov source processes. We then provide a detailed analysis of the derived bound, and compare it to the performance achievable by classical event-based tracking algorithms. 

\subsection{Related work}
A variety of event-based paradigms for sensing and control have been studied over several decades. An overview of some classical approaches to event-based sampling in networked control systems (NCS) is available in \cite{heemels2012introduction}.  In the NCS context, we say an ``event is triggered" when a particular control or sensing action is taken.  

Most of the work in event-based network control assumes a real number communication model that neglects quantization. Under this model, there has been significant work on the design of event-based sampling rules that account for ``communication cost" as quantified by communication frequency. In a significant early work on linear-quadradic Gaussian (LQG) control,  \cite{astrom2002comparison} demonstrated that a simple, ad-hoc, thresholding policy to trigger plant measurements coupled with impulsive control could outperform a comparable, periodically sampled, system for the same expected number of plant measurements per unit time. The gain of the CT approach was shown to be nearly achievable for sufficiently high-rate sampled systems. For Gauss-Markov tracking, \cite{marck2010relevant} devised an informational sampling rule that provably ensures the estimator's error covariance remains bounded. In   \cite{imer2010optimal}, an observer makes $n$ DT noisy measurements of a source process and, given a fixed horizon, causally selects a subset $m<n$ of the measurements to convey to an estimator who seeks to minimize some distortion metric. In a similar setting,  \cite{lipsa2011remote} proposes to jointly optimize a weighted average of real-time communication frequency and mean square estimator error. A problem formulation falling between that of  \cite{ imer2010optimal} and \cite{lipsa2011remote} is treated in \cite{wu2012event}. Considering tracking a Wiener process in CT, \cite{narsampling2014} established that the sampling rule that minimizes the expected number of samples for a fixed estimator MSE is a threshold policy that samples when the real-time estimator error departs an elliptical region. Along similar lines, \cite{sun2020sampling} considered a setup where an observer conveys event-triggered real-valued samples of a Wiener process to a remote estimator over a queueing channel with random delay. Given a sample rate constraint and an estimator conditioned (only) on the sequence of causally received samples, the optimal sampling policy is shown to be a prescribed threshold policy \cite{sun2020sampling}. In contrast with this line of work, the event-based sensing and communication paradigm we propose accounts for quantization. We assume that the messages conveyed from the encoder to the estimation center are finite-length binary strings, and quantify communication cost via the strings' length. 

The literature on event-based control with discrete \textit{quantized} measurements is most relevant to this work.  An event-based sampling scheme, coupled with a quantizer design, that stabilizes a Gauss-Markov plant with bounded disturbances and communication delay is proposed in \cite{tallapragada2015event}. Fixed-length, dynamic quantization is used; in other words, a fixed number of bits are transmitted to the controller at every event-triggered sampling time \cite{tallapragada2015event}. Also relevant is \cite{pearson2017control}, which uses fixed-length quantization to stabilize a deterministic Gauss-Markov plant. In \cite{pearson2017control} the communication, rather than the sampling, is event-based. With sampling and symbol transmission times preset,  one particular symbol is designated as ``free". In other words, \cite{pearson2017control} allows for the possibility that a ``lack of transmission" or "silence" can convey information. A necessary and sufficient condition for stabilization is derived in terms of the system dynamics, the product of the quantizer size and sampling rate (quantizer bitrate), and the fraction of non-free symbols transmitted (utilization). An event-based encoder is proposed and is shown to stabilize the system with a quantizer bitrate and utilization on the same order as the optimum. A salient feature of event-based encoding is that information is conveyed in the content of transmissions as well as \textit{when} transmissions occur. This is explored in detail in \cite{khojasteh2020value}, which studies the stabilization of deterministic Gauss-Markov plants under a continuous-time ad hoc event-based sampling policy coupled with variable-length feedback. The communication channel is assumed to have a bounded, but unknown communication delay. It is demonstrated that if the delay is sufficiently short, the timing information conveyed by the event triggers is sufficient to guarantee stabilization \cite{khojasteh2020value}. Longer maximum delays require increasing the packet bitrate \cite{khojasteh2020value}. The recent work \cite{guo2021optimal_TAC, guo2021optimal_IT} is most relevant to our present investigation. In the more-general \cite{guo2021optimal_IT}, an encoder continuously monitors a scalar-valued Markov source and conveys variable-length bit strings asynchronously to a decoder that produces a CT reconstruction of the source process. Communication cost is quantified in the expected length of these strings per unit time.  For the class of sources considered, \cite{guo2021optimal_IT} devises an optimal encoder policy that minimizes the reconstruction MSE for a constraint on the expected bitrate. The optimal policy consists of a threshold rule similar to \cite{astrom2002comparison} that conveys one-bit messages to the decoder. In most of this work, we consider CT tracking of scalar stable Gauss–Markov processes, a subset of the class of sources considered in \cite{guo2021optimal_IT}, and use a nominally identical notion of communication cost. In contrast to \cite{guo2021optimal_IT}, while the lower bounds we derive apply to event-based schemes with a minimum temporal sampling resolution, they can be directly extended to vector sources. 

This work uses tools from minimum bitrate Gauss-Markov tracking and linear-quadratic Gaussian (LQG) control. DT, minimum bitrate causal tracking of vector Gauss-Markov sources is considered in \cite{tanaka2016semidefinite,photisSRDF_STSP}. Assuming a noiseless binary channel from a source observer to an estimation center, \cite{tanaka2016semidefinite} and \cite{photisSRDF_STSP} derive rate-distortion trade-offs that characterize the minimum expected bitrate of prefix-free coding required to achieve a given infinite horizon estimator performance. Similarly,  \cite{ kostina2019rate,tanaka2017lqg} used an analogous formulation to characterize the tradeoff between bitrate and LQG control performance. In contrast to this prior work, we consider CT source and reconstruction processes, and propose a CT notion of communication cost. In particular, we use ideas from \cite{szpankowski2011minimum,kostina2019rate} to account for a relaxation of prefix constraints. 

\subsection{Our Contributions} Our contributions are three-fold:  
\begin{enumerate}
\item For CT, vector-valued, time-invariant Gauss-Markov processes with a fixed sampling period $\tau$, we derive a lower bound to the data rate achievable by real-time codecs utilizing a non-prefix binary codeword every $\tau$ seconds to satisfy a mean-square distortion constraint. The bound incorporates ideas from \textit{minimum information Kalman-Bucy filtering} \cite{tanaka2022gaincontrol}. Given the solution of a single semidefinite optimization that depends only on the continuous-time state space model and target distortion, our bound can be computed immediately for any $\tau$. 

\item The CT bound we derive follows from (initially) casting the CT problem in a completely DT framework. The DT analysis provides another, tighter bound. Computing this bound, however, is more difficult; it requires the solution of a semidefinite optimization for every combination of state-space model, distortion target, and $\tau$. 

\item We investigate these bounds numerically. While we find that our bound becomes vacuous as $\tau\rightarrow 0$, it is significantly tighter for shorter sampling periods. Our simulation results also provide one of the first characterizations of the bitrate/mean square distortion tradeoff for the causal tracking of vector-valued Gauss-Markov plants. 
\end{enumerate}
Both of the bounds we develop are applicable to general real-time encoding scenarios for CT vector-valued Gauss-Markov processes. They serve as a novel, universal benchmarks against which algorithms for communication-efficient event-based tracking may be compared.
\subsection{Notation}
Bold symbols, e.g. $\bx$ indicate random variables. We use serif fonts $x$ ($\bx$) for (random) scalars,  sans-serif lower case $\dvec{x}$ ($\rvec{x}$) for (random) vectors, sans-serif capitals for matrices $\dmat{A}$. For $\dmat{A}$, $\dmat{B}$ symmetric, $\dmat{A}\succ \dmat{B}$ ($\dmat{A}\succeq \dmat{B}$) implies $\dmat{A}-\dmat{B}$ is positive (semi)definite. For $\dmat{Q}\succ 0$, $\lVert \dvec{v}\rVert^{2}_{\dmat{Q}} = \dvec{v}^{\top}\dmat{Q}\dvec{v}$, and $\lVert \dvec{v}\rVert^{2} =  \dvec{v}^{\top}\dmat{Q}\dvec{v}$. We denote $\dmat{v}$'s transpose as $\dvec{v}^{\top}$.  We use $\indep$ to denote independence,  $I(\bx;\by)$ the mutual information between $\bx$ and $\by$, and for $\ba$ a discrete random variable, $H(\ba)$ its entropy. CT (random) signals at time $t\in\mathbb{R}$ are denoted with subscripts $x_{t}$ ($\bx_{t}$). For $I\subset\mathbb{R}$ and $\mathbbm{1}_{t\in I}$ the indicator of the set $I$, $\bx_{I}$ denotes the random process $\mathbbm{1}_{t\in I}\bx_{t}$. We also consider periodic samples of CT (random) signals $x_{t}$ $(\bx_{t})$ with the sampling period $\tau$. For $k\in\mathbb{N}_{0}$, let $x(k)=x_{k\tau}$ (resp. $\bx(k) = \bx_{k\tau}$). For $i,j\in\mathbb{N}_{0}$, let $x(i:j) = (x_{i},x_{i+1},\dots,x_{j})$ if $j\ge i$ and $x(i:j)=\emptyset$ otherwise. Denote the set of finite-length binary strings $\{0,1\}^{*}$. If $a\in\{0,1\}^*$, $\ell(a)$ denotes the length of $a$. 

\section{System Model and Problem Formulation}
We consider a communication architecture shown in Fig.~\ref{fig:event_based_encoding}. 
\begin{figure*}[h]
    \centering   \includegraphics[width=0.9\textwidth]{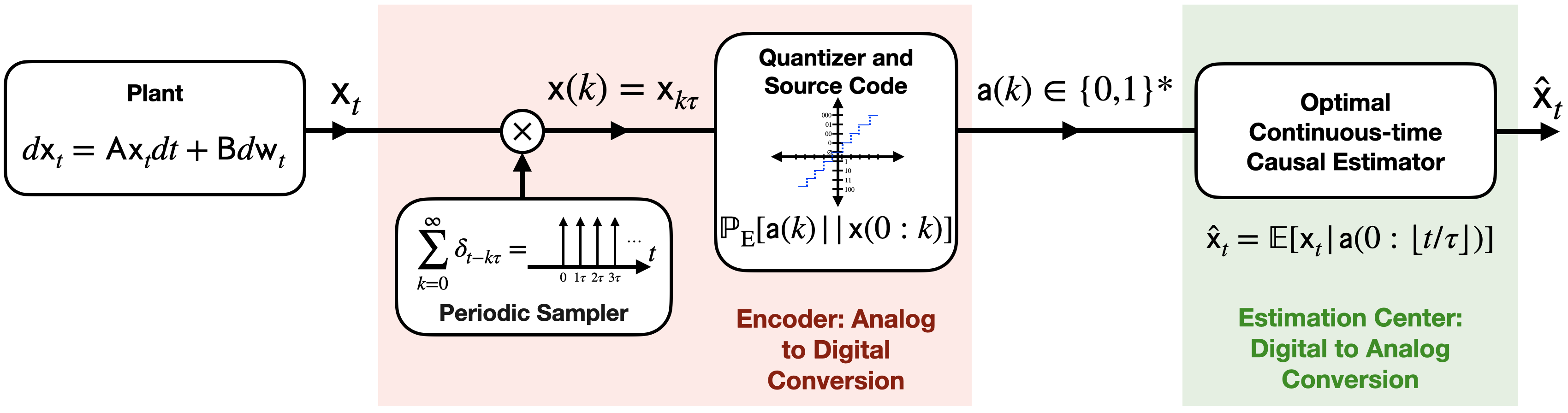}
    \caption{ We consider a scenario in which a CT source signal is sampled, quantized, and encoded into finite-length binary packets. We assume that an event-based encoder samples the source signal, then quantizes and then encodes the samples into packets containing finite-length binary strings. We assume the packets are conveyed without delay over an error-free binary channel to a remote ``estimation center". The estimation center uses the packets it has received by time $t$ to construct a CT estimate of the source process.}
    \label{fig:event_based_encoding}
    \vspace{-.5cm}
\end{figure*}
The signal to be tracked is an $n$-dimensional CT random process $\bx_t, t\geq0$ defined by a linear stochastic differential equation (SDE):
\begin{equation}
\label{eq:source_c}
d\rvec{x}_t = \dmat{A}\rvec{x}_t dt+\dmat{B}d\rvec{w}_t, \; \rvec{x}_0\sim \mathcal{N}(0,\dmat{\Sigma}_{0}), \; t \geq 0,
\end{equation} where  $\dmat{A}\in\mathbb{R}^{n\times n}$, $\dmat{B}\in\mathbb{R}^{n\times n}$, and $\dmat{\Sigma}_{0}\succ 0$ are known and $\rvec{w}_t\in\mathbb{R}^{m}$ is a standard Brownian motion. We assume $\dmat{A}\in\mathbb{R}^{m\times m}$ is Hurwitz stable (i.e. the eigenvalues of $\dmat{A}$ have strictly negative real part), and that $\dmat{B}\dmat{B}^{\top}\succ 0$. The encoder observes the process \eqref{eq:source_c} at discrete time instances with a fixed sampling period $\tau >0$ observing the sequence $\bx_\tau, \bx_{2\tau}, \bx_{3\tau}, \cdots$ noiselessly. For $k\in\mathbb{N}_{0}$, we denote $x(k) = \bx_{k\tau}$. At each $t=k\tau$, the encoder can convey a variable length packet of bits, denoted $\rvec{a}(k)\in \{\emptyset, 0, 1, 00, 01,\dots \}$ to the decoder, or estimation center. The length of these packets will be used to quantify communication cost; by permitting the encoder to transmit the empty string  $\emptyset$, we allow for the possibility of ``no transmission". The encoder's packets are drawn via a causal encoding policy, nominally the sequence of Borel-measurable stochastic kernels 
\begin{align}\label{eq:encoderspace}
   \mathbb{P}_{\mathrm{E}}[ \ba(k)|| \bx(k)] =\{ \mathbb{P}\left[ \ba(k)| \ba(0:k-1), \bx(0:k)\right]\}_{k\in\mathbb{N}_{0}}.
\end{align} In other words, each packet is chosen given the causally available samples and previous packets. We assume the packets are transmitted without error and without delay. While in a truly event-based sampling paradigm, the encoder continuously monitors the source process (\ref{eq:source_c}) and can transmit messages to the encoder at arbitrary instants, real-world systems have finite bandwidth. We use $\tau$ to constrain the minimum temporal resolution of the sensing and/or communication hardware. This is natural; any channel of (baseband) bandwidth $W$ can support at most $2W$ distortion-free transmissions per second \cite{nyquist}. We will consider $\tau>0$ as a variable, and will recover a CT analysis by letting $\tau\rightarrow 0$. 

At every $t$, the estimation center produces the \textit{CT} estimate $\hat{\bx}_t$ using \textit{only} the packets it has received on or before time $t$ (i.e. $\ba(0:\floor{t/\tau})$). We quantify the estimation center's tracking error via the time average MSE
\begin{align}\label{eq:mse}
     \limsup_{T\rightarrow \infty} \frac{1}{T}\int_0^T \mathbb{E}[\lVert\bx_t-\hat{\bx}_t\rVert_2^2] dt. 
\end{align} Communication cost is quantified by the time-averaged expected length of the codewords $\{\ba(k)\}$. The infinite horizon cost has the units ``bits per second" and, is given by
\begin{align}
    \limsup_{T\rightarrow\infty } \frac{1}{T}\sum_{k=0}^{\floor{\frac{T}{\tau}}}  \mathbb{E}[\ell(\ba(k))] =  \limsup_{K\rightarrow \infty} \frac{1}{K\tau}\sum_{k=0}^{K-1} \mathbb{E}[\ell(\ba(k))].\nonumber
\end{align} This communication cost is to be contrasted with the more common ``bits per sample" notion seen in the literature on minimum bitrate tracking for DT systems (c.f., e.g., \cite{tanaka2016semidefinite,photisSRDF_STSP,tanaka2017lqg,kostina2019rate}). Our goal is to design encoding and estimation policies that minimize the time-average expected bitrate (in bits/second) of the channel such that the causal estimate $\hat{\bx}_t$ satisfies a constraint on the distortion (\ref{eq:mse}). Since for any encoding policy, choosing $\hat{\bx}_{t} = \mathbb{E}[\bx_{t}|\ba(0:\floor{t/\tau})]$  minimizes (\ref{eq:mse}), we assume without loss of generality that this is the decoder's estimate. For $D_{c}>0$ the constraint on estimator distortion, this leads to the following optimization over encoding policies \begin{align} \label{eq:initialOptimization}
\mathcal{L}^*_{\mathrm{CT}}(D_{c},\tau) =  
\left\{
\begin{aligned}
&\inf_{\mathbb{P}_{\mathrm{E}}} \limsup_{K\rightarrow \infty} \frac{1}{K\tau}\sum_{k=0}^{K-1}  \mathbb{E}[\ell(\ba(k))] \\ &\st \limsup_{T\rightarrow \infty} \frac{1}{T}\int_0^T \mathbb{E}[\Vert\bx_t-\hat{\bx}_t\rVert_2^2] dt \leq D_c
\end{aligned}
\right. \end{align} 
Given the minimum sampling period $\tau$, it may seem this optimization can be reduced to those treated in prior work on causal rate-distortion theory (e.g. \cite{tanaka2016semidefinite, photisSRDF_STSP,kostina2019rate}). In fact, this formulation differs from conventional treatments in several ways. This permits us to retain many of the essential characteristics of event-driven sampling paradigms. Since by convention, the empty string has length $\ell(\emptyset)=0$, the encoder can choose not only which packet to send, but \textit{when} to send it; this is notionally equivalent to an event-driven encoder choosing not to transmit. Additionally, (\ref{eq:initialOptimization}) differs from \cite{tanaka2016semidefinite,photisSRDF_STSP,kostina2019rate} in that the distortion constraint  in (\ref{eq:initialOptimization}) is in continuous time. It turns out, however, that the properties of the sampled Ito processes allow us to immediately recover a more conventional purely DT formulation, which will be used to derive our first lower bound. A second, relaxed lower bound will follow from a return to a completely CT viewpoint.

\section{Main results}\label{sec:main}
In this section, we give a high-level overview of our main result, which is a lower bound on (\ref{eq:initialOptimization}). This will require solving a tractable convex optimization that involves the system matrices. Given the solution to this optimization, one can derive a lower bound on (\ref{eq:initialOptimization}) for any $\tau$. The derivation of this result is relegated to Section \ref{sec:proofs}.

The lower bound is based on CT {minimum information Kalman-Bucy filtering}. Minimum information Kalman-Bucy filtering, introduced in \cite{tanaka2022gaincontrol}, can be seen as a CT generalization 
of classical discrete-time minimum bitrate causal tracking (cf. e.g. \cite{tanaka2016semidefinite,photisSRDF_STSP,kostina2019rate}). Suppose that the process \eqref{eq:source_c} is observed through an $n$-dimensional observation process
\begin{equation}
\label{eq:obs}
d\by_t=\dmat{C}_{t} \bx_t dt+d\bv_t, 
\end{equation}
where $\dmat{C}_{t}$ is a time-varying sensor gain matrix (formally $\dmat{C}_{t}:\mathbb{R}_{+}\rightarrow \mathbb{R}^{n\times n}$), and $\bv_t$ is an $n$-dimensional standard Brownian motion independent of $\bw_t$. We assume that the CT measurement sequence $\rvec{y}_{t}$ is conveyed to a decoder, which computes the MMSE estimator via Kalman-Bucy filtering. Explicitly, the decoder computes
 $\hat{\bx}_t=\mathbb{E}[\bx_t|\by_s, 0\leq s \leq t]$ via the standard Kalman-Bucy filter
\begin{equation}
\label{eq:kb_filter}
d\hat{\bx}_t=\dmat{A}\hat{\bx}_t dt + \dmat{X}_{t} \dmat{C}_{t}^\top(d\by_t-\dmat{C}_{t} \hat{\bx}_t dt), \;\; \hat{\bx}_{0}=0,
\end{equation} where the error covariance matrix is denoted  $\dmat{X}_{t} = \mathbb{E}\left[ (\bx_t-\hat{\bx}_t)(\bx_t-\hat{\bx}_t)^{\top}\right]$. Rather than assuming that $\dmat{C}_{t}$ is given,  we will consider choosing $\dmat{C}_{t}$ by optimization. For any choice of sensor gain trajectory $\dmat{C}_{t}$, we have that $\dmat{X}_{t}$ satisfies the Riccati boundary value problem
\begin{equation}\label{eq:riccati}
\dot{\dmat{X}}_t=\dmat{A}\dmat{X}_{t}+\dmat{X}_{t}\dmat{A}^\top-\dmat{X}_{t}\dmat{C}_{t}^\top \dmat{C}_{t} \dmat{X}_{t}+\dmat{B}\dmat{B}^\top, \;\; \dmat{X}_0=\dmat{\Sigma}_{0}.
\end{equation} Both the CT squared error (\ref{eq:secost}) and  mutual information (\ref{eq:micost}) costs can be written in terms of $\dmat{X}_{t}$. We have \begin{subequations}\label{eq:riccatsimp}
\begin{align}
\int_0^T \mathbb{E} \|\bx_t-\hat{\bx}_t\|^2 dt &= \int_0^T   \Tr(\dmat{X}_{t}) dt\text{, and} \label{eq:mseX}\\
I(\bx_{[0,T]};\hat{\bx}_{[0,T]})&=\frac{1}{\ln(2) 2}\int_0^T   \Tr(\dmat{C}_{t} \dmat{X}_{t} \dmat{C}_{t}^\top) dt, \label{eq:mi}
\end{align} 
\end{subequations} where (\ref{eq:mi}) follows from \cite{duncan1970calculation} and is expressed in bits. The optimization proposed in \cite{tanaka2022gaincontrol} aims to choose $\dmat{C}_{t}$ to trade-off between the CT mutual information (\ref{eq:mi}) and squared error (\ref{eq:mseX}). We consider optimizing the time-varying gain $\dmat{C}_{t}$ over a space of functions. Let $\mathcal{F}_{\mathbb{R}\rightarrow \mathbb{R}^{m\times m}}$ be the space of measurable functions from $\mathbb{R}\rightarrow \mathbb{R}^{m\times m}$ such that if  $\dmat{C}_{t}\in\mathcal{F}_{\mathbb{R}\rightarrow \mathbb{R}^{m\times m}}$, then there exists  an absolutely continuous function $\dmat{X}_{t}$ with $\dmat{X}_{0}=\Sigma_{0}$ such that both $\lVert \dmat{X}_{t}\rVert_{2}<\infty$ and  $\dmat{X}_{t}\succ 0$ for all $t\in\mathbb{R}^{+}$ with  $\dmat{X}_{t}$ satisfying the differential equation in (\ref{eq:riccati}) almost everywhere. In other words so long as 
 $\dmat{C}_{t}\in\mathcal{F}_{\mathbb{R}\rightarrow \mathbb{R}^{m\times m}}$ then the solution to (\ref{eq:riccati}) exists in the Cartheradory sense for all $t\in\mathbb{R}^{+}$, and furthermore has $\dmat{X}_{t}\succ 0$ for all $t$. Consider the optimization 
\begin{subequations}\label{eq:rate_info_c}
\begin{align}
    \mathcal{I}^c(D) = \inf_{\dmat{C}_{t}\in \mathcal{F}_{\mathbb{R}\rightarrow \mathbb{R}^{m\times m}}} \quad & \limsup_{T\rightarrow +\infty} \frac{1}{T} I(\bx_{[0,T]};\hat{\bx}_{[0,T]})\label{eq:micost} \\
    \st \quad & \limsup_{T\rightarrow +\infty} \frac{1}{T} \int_0^T \mathbb{E}\|\bx_t-\hat{\bx}_t\|^2 dt \leq D \label{eq:secost} 
\end{align}
\end{subequations}
We call $\mathcal{I}^c(D)$ the \emph{CT information-distortion function}, and we will use it to develop a further lower bound on (\ref{eq:initialOptimization}). While the set $\mathcal{F}_{\mathbb{R}\rightarrow \mathbb{R}^{m\times m}}$ is difficult to characterize explicitly. It is, however, relatively rich; it includes, for example, any trajectory of $\dmat{C}_{t}$ such that 
$\dmat{C}_{t}^{\top}\dmat{C}_{t}$ locally integrable \cite{potterReport}.  Furthermore, it is not presently clear how to go about computing (\ref{eq:rate_info_c}), which is an infinite-dimensional optimization. It turns out that the constraint $\dmat{C}_{t}\in \mathcal{F}_{\mathbb{R}\rightarrow \mathbb{R}^{m\times m}}$ makes the optimization (\ref{eq:rate_info_c}) amenable to convexification, and in fact, given that the system dynamics in (\ref{eq:source_c}) are time-invariant, it allows  (\ref{eq:rate_info_c}) to be computed by a semidefinite program. This is the subject of the following lemma. 
\begin{lemma}\label{lemm:ct2sdp}
We have that \begin{subequations}\label{eq:the_ct_ti_sdp}
\begin{align}
{\mathcal{I}}^c(D)  = \inf_{\dmat{X} \succeq 0, \dmat{Y} \succeq 0} \quad &  \frac{a+   \Tr\left(\dmat{Y}\right)}{2\ln(2)}\label{eq:traceobjective4} \\
    \st & 0  \preceq \dmat{A}\dmat{X}+\dmat{X}\dmat{A}^\top+\dmat{B}\dmat{B}^\top,\\& \begin{bmatrix}
    \dmat{Y} & \dmat{B}^{\top} \\ \dmat{B} & \dmat{X}
    \end{bmatrix}\succeq 0  \\
    &   \Tr(\dmat{X}) \le D.\label{eq:matrixse4}
\end{align}
\end{subequations}
\end{lemma}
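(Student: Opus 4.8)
\emph{Proof idea.} The plan is to show that $\mathcal{I}^c(D)$ equals the optimal value of (\ref{eq:the_ct_ti_sdp}), which I denote $\mathrm{SDP}(D)$, by proving each is no larger than the other. In both directions I would reparametrize the optimization over the gain $\dmat{C}_t$ by the error covariance $\dmat{X}_t$ it induces. Since every $\dmat{C}_t\in\mathcal{F}_{\mathbb{R}\rightarrow\mathbb{R}^{m\times m}}$ forces $\dmat{X}_t\succ0$ for all $t$, equation (\ref{eq:riccati}) can be solved pointwise a.e.\ for $\dmat{C}_t^\top\dmat{C}_t=\dmat{X}_t^{-1}(\dmat{A}\dmat{X}_t+\dmat{X}_t\dmat{A}^\top+\dmat{B}\dmat{B}^\top-\dot{\dmat{X}}_t)\dmat{X}_t^{-1}$, and the cyclic property of the trace together with Jacobi's formula $\frac{d}{dt}\ln\det\dmat{X}_t=\Tr(\dmat{X}_t^{-1}\dot{\dmat{X}}_t)$ gives the identity
\begin{equation}\label{eq:pf_logdet}
\Tr(\dmat{C}_t\dmat{X}_t\dmat{C}_t^\top)=2\Tr(\dmat{A})+\Tr(\dmat{B}^\top\dmat{X}_t^{-1}\dmat{B})-\frac{d}{dt}\ln\det\dmat{X}_t .
\end{equation}
Integrating and inserting this into (\ref{eq:mi}) rewrites the information cost (\ref{eq:micost}) as $\frac{1}{2\ln2}\limsup_T\big(2\Tr(\dmat{A})+\frac1T\int_0^T\Tr(\dmat{B}^\top\dmat{X}_t^{-1}\dmat{B})\,dt-\frac1T[\ln\det\dmat{X}_T-\ln\det\dmat{\Sigma}_{0}]\big)$, which in particular identifies the constant $a$ in (\ref{eq:traceobjective4}) as $2\Tr(\dmat{A})$.

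For the converse ($\mathcal{I}^c(D)\ge\mathrm{SDP}(D)$), I would fix any $\dmat{C}_t\in\mathcal{F}_{\mathbb{R}\rightarrow\mathbb{R}^{m\times m}}$ with finite objective and set $\bar{\dmat{X}}_T=\frac1T\int_0^T\dmat{X}_t\,dt$. Integrating (\ref{eq:riccati}) over $[0,T]$ and dropping $\frac1T\int_0^T\dmat{X}_t\dmat{C}_t^\top\dmat{C}_t\dmat{X}_t\,dt\succeq0$ gives $\dmat{A}\bar{\dmat{X}}_T+\bar{\dmat{X}}_T\dmat{A}^\top+\dmat{B}\dmat{B}^\top\succeq(\dmat{X}_T-\dmat{\Sigma}_{0})/T\succeq-\dmat{\Sigma}_{0}/T$. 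The distortion constraint makes $\limsup_T\Tr(\bar{\dmat{X}}_T)\le D$, so along a subsequence $\bar{\dmat{X}}_{T_k}\to\dmat{X}^\star$ with $\dmat{X}^\star\succeq0$, $\Tr(\dmat{X}^\star)\le D$, and $\dmat{A}\dmat{X}^\star+\dmat{X}^\star\dmat{A}^\top+\dmat{B}\dmat{B}^\top\succeq0$; finiteness of the objective (via the rewriting above and $\Tr(\dmat{B}^\top\dmat{M}^{-1}\dmat{B})\to\infty$ as $\det\dmat{M}\to0$, using $\dmat{B}\dmat{B}^\top\succ0$) forbids $\dmat{X}^\star$ from being singular. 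Operator convexity of $\dmat{X}\mapsto\dmat{B}^\top\dmat{X}^{-1}\dmat{B}$ and Jensen's inequality give $\frac1{T_k}\int_0^{T_k}\Tr(\dmat{B}^\top\dmat{X}_t^{-1}\dmat{B})\,dt\ge\Tr(\dmat{B}^\top\bar{\dmat{X}}_{T_k}^{-1}\dmat{B})\to\Tr(\dmat{B}^\top(\dmat{X}^\star)^{-1}\dmat{B})$, so, modulo the vanishing of the boundary term discussed below, the objective of $\dmat{C}_t$ is at least $\frac{1}{2\ln2}(2\Tr(\dmat{A})+\Tr(\dmat{B}^\top(\dmat{X}^\star)^{-1}\dmat{B}))$. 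Since $(\dmat{X}^\star,\dmat{B}^\top(\dmat{X}^\star)^{-1}\dmat{B})$ satisfies every constraint in (\ref{eq:the_ct_ti_sdp}) — the block LMI by a Schur complement with equality — this lower-bounds the objective by $\mathrm{SDP}(D)$.

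For achievability ($\mathcal{I}^c(D)\le\mathrm{SDP}(D)$), I would note that (\ref{eq:the_ct_ti_sdp}) has a strictly feasible point (e.g.\ $\dmat{X}=\epsilon\dmat{I}$ for small $\epsilon$, since $\dmat{B}\dmat{B}^\top\succ0$) and that its objective diverges as $\dmat{X}$ degenerates, so it suffices to realize, for an arbitrary strictly feasible $\dmat{X}'$ (i.e.\ $\dmat{X}'\succ0$, $\dmat{A}\dmat{X}'+\dmat{X}'\dmat{A}^\top+\dmat{B}\dmat{B}^\top\succ0$, $\Tr(\dmat{X}')<D$), a gain in $\mathcal{F}_{\mathbb{R}\rightarrow\mathbb{R}^{m\times m}}$ whose covariance starts at $\dmat{\Sigma}_{0}$ and is frozen at $\dmat{X}'$ for all large $t$. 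I would build $\dmat{X}_t$ in three stages: (i) over a bounded interval, contract $\dmat{X}_t$ from $\dmat{\Sigma}_{0}$ down to $s_0\dmat{X}'$ for small $s_0>0$ with a large gain; (ii) move $\dmat{X}_t$ up the segment $\{s\dmat{X}':s\in[s_0,1]\}$ to $\dmat{X}'$ in finite time — a legitimate Riccati trajectory because $s(\dmat{A}\dmat{X}'+\dmat{X}'\dmat{A}^\top+\dmat{B}\dmat{B}^\top)+(1-s)\dmat{B}\dmat{B}^\top\succ0$ uniformly on the segment, leaving room for $\dmat{C}_t^\top\dmat{C}_t\succeq0$ while $\dot{\dmat{X}}_t=\dot s\,\dmat{X}'$ is kept small; (iii) thereafter hold $\dmat{X}_t\equiv\dmat{X}'$ with the constant equilibrium gain $\dmat{C}^\top\dmat{C}=(\dmat{X}')^{-1}(\dmat{A}\dmat{X}'+\dmat{X}'\dmat{A}^\top+\dmat{B}\dmat{B}^\top)(\dmat{X}')^{-1}\succeq0$. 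The resulting $\dmat{X}_t$ is absolutely continuous, positive definite and bounded, and the gain has $\dmat{C}_t^\top\dmat{C}_t$ locally integrable, hence lies in $\mathcal{F}_{\mathbb{R}\rightarrow\mathbb{R}^{m\times m}}$ by the observation following (\ref{eq:rate_info_c}); since the bounded transient does not affect the time averages, (\ref{eq:mseX}) gives distortion $\Tr(\dmat{X}')<D$ and (\ref{eq:pf_logdet}) gives information cost $\frac{1}{2\ln2}(2\Tr(\dmat{A})+\Tr(\dmat{B}^\top(\dmat{X}')^{-1}\dmat{B}))$, the value of (\ref{eq:the_ct_ti_sdp}) at $(\dmat{X}',\dmat{B}^\top(\dmat{X}')^{-1}\dmat{B})$. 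Letting $\dmat{X}'$ tend to an optimizer of (\ref{eq:the_ct_ti_sdp}) finishes this direction.

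The step I expect to be the main obstacle is the regularity needed to discard the boundary term $\frac1T\ln\det\dmat{X}_T$ in the converse: one must show that every $\dmat{C}_t\in\mathcal{F}_{\mathbb{R}\rightarrow\mathbb{R}^{m\times m}}$ with finite objective induces an error covariance that is uniformly bounded above and uniformly bounded away from $0$ — so that $\frac1{T_k}\ln\det\dmat{X}_{T_k}\to0$ along the chosen subsequence — which should follow from the distortion bound and the nondegeneracy $\dmat{B}\dmat{B}^\top\succ0$ but requires care because the Riccati feedback term $\dmat{X}_t\dmat{C}_t^\top\dmat{C}_t\dmat{X}_t$ is not a priori bounded.
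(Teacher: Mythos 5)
Your converse direction is essentially the paper's own argument: rewrite the information cost via the trace/Jacobi identity applied to (\ref{eq:riccati}) (cf.\ (\ref{eq:mipcvx})--(\ref{eq:duncingbooth})), integrate the Riccati inequality, time-average, extract a cluster point of $\overline{\dmat{X}}_{T}$, apply Jensen to the convex map $\dmat{X}\mapsto\Tr(\dmat{B}^{\top}\dmat{X}^{-1}\dmat{B})$, and check feasibility of the limit in (\ref{eq:the_ct_ti_sdp}) by a Schur complement (the paper routes this through the intermediate Lemma \ref{lemm:convexification1}, but the content is the same). Your achievability direction is genuinely different. The paper builds no transient at all: it takes the SDP minimizer $\dmat{X}^{*}$, imposes the constant gain $\dmat{C}^{\top}\dmat{C}=(\dmat{X}^{*})^{-1}(\dmat{A}\dmat{X}^{*}+\dmat{X}^{*}\dmat{A}^{\top}+\dmat{B}\dmat{B}^{\top})(\dmat{X}^{*})^{-1}$ as in (\ref{eq:tisensordef}), verifies the null-vector conditions of the cited Riccati-convergence results (by contradiction, using $\dmat{B}\dmat{B}^{\top}\succ0$ and $\dmat{X}^{*}\succ0$), and concludes $\dmat{X}_{t}\rightarrow\dmat{X}^{*}$ from any $\dmat{\Sigma}_{0}\succ0$, with \Cesaro means giving the rate and distortion. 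Your three-stage construction (fast contraction to $s_{0}\dmat{X}'$, slow ascent along the ray $s\dmat{X}'$, then freezing at a strictly feasible $\dmat{X}'$ and letting $\dmat{X}'$ tend to the optimizer) trades that asymptotic Riccati theory for an explicit trajectory; it is a legitimate alternative, provided you verify $\dot{\dmat{X}}_{t}\preceq\dmat{A}\dmat{X}_{t}+\dmat{X}_{t}\dmat{A}^{\top}+\dmat{B}\dmat{B}^{\top}$ across the switching instants and use continuity of the objective at the (necessarily nonsingular) optimizer when passing to the limit of strictly feasible points.

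The step you flag as the main obstacle is not actually an obstacle, and your diagnosis overshoots: to handle the boundary term you do not need $\dmat{X}_{t}$ bounded away from zero. Since the converse only requires a \emph{lower} bound on the information, a small $\det(\dmat{X}_{T})$ makes $-\frac{1}{T}\ln\det(\dmat{X}_{T})$ large and positive, which can only help; all you need is that this term does not go negative in the limit, i.e.\ an \emph{upper} bound on $\dmat{X}_{T}$. That bound is immediate from monotonicity of the filtering error in the gain: for every admissible $\dmat{C}_{t}$ one has $\dmat{X}_{T}\preceq\tilde{\dmat{X}}_{T}=\dmat{A}_{T}\dmat{\Sigma}_{0}\dmat{A}_{T}^{\top}+\dmat{B}_{T}\dmat{B}_{T}^{\top}$ (the zero-gain solution), and since $\dmat{A}$ is Hurwitz, $\tilde{\dmat{X}}_{T}$ converges to the Lyapunov solution, so $-\frac{1}{T}\ln\bigl(\det(\dmat{X}_{T})/\det(\dmat{X}_{0})\bigr)\ge-\frac{1}{T}\ln\bigl(\det(\tilde{\dmat{X}}_{T})/\det(\dmat{X}_{0})\bigr)\rightarrow0$. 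This is exactly how the paper disposes of the boundary term in arriving at (\ref{eq:misimpl}), and it closes your converse without any uniform-in-$t$ estimates on $\dmat{X}_{t}$ or on the feedback term $\dmat{X}_{t}\dmat{C}_{t}^{\top}\dmat{C}_{t}\dmat{X}_{t}$.
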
 Lemma \ref{lemm:ct2sdp} is proven in Section \ref{subsec:convexproofs}. The main result of this section is the following theorem, which is a lower bound on (\ref{eq:initialOptimization}) in terms of (\ref{eq:rate_info_c}). 
\begin{theorem}\label{thm:keybounds_abbrev}
    We have the following
    \begin{IEEEeqnarray}{rCl}
          \mathcal{L}^*_{\mathrm{CT}}(D_{c},\tau) &\ge& \frac{1}{\tau}\theta^{-1}\left(\tau \mathcal{I}^c(D_{c})\right)\label{eq:uselemmacont}
    \end{IEEEeqnarray}
\end{theorem}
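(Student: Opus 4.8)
\emph{Proof plan.} The plan is to recast the problem entirely in discrete time, apply a non-prefix source-coding converse to pass from the bitrate to an entropy rate, then to the directed-information rate, and finally identify the latter with a bound involving $\mathcal I^c$. First I would discretize. The samples $\bx(k)=\bx_{k\tau}$ obey $\bx(k+1)=e^{\dmat A\tau}\bx(k)+\bw_\tau(k)$ with $\{\bw_\tau(k)\}$ i.i.d.\ $\mathcal N(0,\dmat W_\tau)$, $\dmat W_\tau=\int_0^\tau e^{\dmat As}\dmat B\dmat B^\top e^{\dmat A^\top s}\,ds$. Since the encoder sees only the samples and the estimator only the packets $\ba(0:\lfloor t/\tau\rfloor)$, and since for $t\in[k\tau,(k+1)\tau)$ the It\^o increment $\int_{k\tau}^t e^{\dmat A(t-s)}\dmat B\,d\bw_s$ is independent of $\ba(0:k)$, the MMSE estimate on that interval is $\hat{\bx}_t=e^{\dmat A(t-k\tau)}\hat{\bx}(k)$ with $\hat{\bx}(k)=\mathbb E[\bx(k)\mid\ba(0:k)]$. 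Integrating the instantaneous MSE interval-by-interval expresses the CT time-averaged distortion as $G_\tau/\tau+\limsup_K\tfrac1{K\tau}\sum_{k=0}^{K-1}\Tr(\dmat M_\tau\dmat P(k))$ for fixed $\dmat M_\tau\succ0$, $G_\tau\ge0$ depending only on $(\dmat A,\dmat B,\tau)$ and $\dmat P(k)=\mathrm{Cov}(\bx(k)-\hat{\bx}(k))$; thus the constraint in \eqref{eq:initialOptimization} becomes a constraint on the \Cesaro average of $\Tr(\dmat M_\tau\dmat P(k))$, while the objective is already a DT quantity.

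Next I would lower-bound the bitrate by a directed-information rate. Each $\ba(k)$ is a $\{0,1\}^*$-valued random variable, so the non-prefix (non-singular) coding converse behind $\theta$ gives $\mathbb E[\ell(\ba(k))]\ge\theta^{-1}(H(\ba(k)))$. Summing over $k$, then using convexity of $\theta^{-1}$ with Jensen's inequality, sub-additivity of entropy, monotonicity of $\theta^{-1}$, and $H(\ba(0:K-1))\ge I(\bx(0:K-1)\to\ba(0:K-1))$, one gets for every $K$
\[
\frac1K\sum_{k=0}^{K-1}\mathbb E[\ell(\ba(k))]\ \ge\ \theta^{-1}\!\Big(\tfrac1K\,I\big(\bx(0:K-1)\to\ba(0:K-1)\big)\Big).
\]

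The heart of the proof is then to show $\liminf_K\tfrac1{K\tau}I(\bx(0:K-1)\to\ba(0:K-1))\ge\mathcal I^c(D_c)$ for every feasible encoder, in two moves. (i) A Gaussian lower bound on the directed information: $I(\bx(0:K-1)\to\ba(0:K-1))\ge\sum_{k=0}^{K-1}\tfrac12\log_2\tfrac{\det\dmat P^-(k)}{\det\dmat P(k)}$, with one-step predictor $\dmat P^-(k)=e^{\dmat A\tau}\dmat P(k-1)e^{\dmat A^\top\tau}+\dmat W_\tau$ and $\dmat P^-(0)=\dmat\Sigma_0$; this follows from $I(\bx(k);\ba(k)\mid\ba(0:k-1))\ge h(\bx(k)\mid\ba(0:k-1))-h(\bx(k)\mid\ba(0:k))$, the maximum-entropy bound on the subtracted term, and the Gauss--Markov structure of the leading term. (ii) Identification with $\mathcal I^c$: from $\{\dmat P(k)\}$ --- whose \Cesaro average inherits the distortion bound of the first step --- I would build a feasible point of the semidefinite program in Lemma~\ref{lemm:ct2sdp} (equivalently, an admissible CT sensor gain $\dmat C_t$, piecewise constant on the sampling grid, whose Kalman--Bucy error covariance satisfies $\dmat X_{k\tau}\preceq\dmat P(k)$, so that by the Riccati comparison theorem its CT distortion is no larger than the encoder's and hence $\le D_c$) and show --- using the identity $\Tr(\dmat C_t^\top\dmat C_t\dmat X_t)=2\Tr\dmat A+\Tr(\dmat B^\top\dmat X_t^{-1}\dmat B)-\tfrac{d}{dt}\log\det\dmat X_t$ together with $\log\det\dmat P^-(k)\ge 2\tau\Tr\dmat A+\log\det\dmat P(k-1)$ --- that its CT information rate \eqref{eq:mi} is at most $\liminf_K\tfrac1{K\tau}\sum_{k=0}^{K-1}\tfrac12\log_2\tfrac{\det\dmat P^-(k)}{\det\dmat P(k)}$. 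Chaining (i)--(ii) gives $\mathcal I^c(D_c)\le\liminf_K\tfrac1{K\tau}I(\bx(0:K-1)\to\ba(0:K-1))$.

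Finally I would assemble: taking $\limsup_K$ in the displayed inequality, using continuity and monotonicity of $\theta^{-1}$ and the bound just obtained, yields $\limsup_K\tfrac1{K\tau}\sum_{k=0}^{K-1}\mathbb E[\ell(\ba(k))]\ge\tfrac1\tau\theta^{-1}(\tau\mathcal I^c(D_c))$ for every feasible encoder, and taking the infimum over feasible encoders gives \eqref{eq:uselemmacont}. The hard part is the second step above: the Gaussian directed-information bound must be established for the \emph{non-Gaussian} conditional law of $\bx(k)$ given $\ba(0:k-1)$ (so one argues with conditional differential entropies and log-det/covariance inequalities rather than a direct Gaussian evaluation), and the passage from the resulting DT rate to $\mathcal I^c(D_c)$ requires showing the boundary $\log\det$ terms are asymptotically negligible (ruling out or harmlessly absorbing rapid decay of $\det\dmat P(k)$ for near-optimal encoders), verifying the stability-type constraint $\dmat A\dmat X+\dmat X\dmat A^\top+\dmat B\dmat B^\top\succeq0$ for the constructed $\dmat X$, and reconciling the weighting $\dmat M_\tau$ and offset $G_\tau$ from the discretization with the constraint $\Tr\dmat X\le D_c$ of Lemma~\ref{lemm:ct2sdp} --- this reconciliation is exactly where the minimum-information Kalman--Bucy structure does the work.
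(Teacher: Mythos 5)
Your overall architecture matches the paper's (discretize via the sampled model and the weighted constraint $D_{d,\tau}$, $\overline{\dmat{Q}}_{\dmat{A},\tau}$; apply the non-prefix converse $\mathbb{E}[\ell(\ba)]\ge\theta^{-1}(H(\ba))$ with Jensen and data processing; then bridge the resulting DT information quantity to $\mathcal I^c(D_c)$), but the middle of your argument has a genuine gap. In step (i) you assert, for an \emph{arbitrary} encoder, that $I(\bx(0{:}K{-}1)\to\ba(0{:}K{-}1))\ge\sum_k\tfrac12\log_2\bigl(\det\dmat{P}^-(k)/\det\dmat{P}(k)\bigr)$ and justify it by ``the maximum-entropy bound on the subtracted term, and the Gauss--Markov structure of the leading term.'' Max-entropy does give $h(\bx(k)\mid\ba(0{:}k))\le\tfrac12\log_2\det(2\pi e\,\dmat{P}(k))$, but the companion lower bound $h(\bx(k)\mid\ba(0{:}k{-}1))\ge\tfrac12\log_2\det(2\pi e\,\dmat{P}^-(k))$, with $\dmat{P}^-(k)$ built from the \emph{error covariance} $\dmat{P}(k{-}1)$, is false for non-Gaussian conditional laws: if the messages leave a near-two-point conditional distribution with widely separated atoms, the error covariance (hence $\det\dmat{P}^-(k)$) is arbitrarily large while the conditional differential entropy stays near that of the process noise. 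In the scalar case the aggregate (summed) inequality can be rescued via the conditional entropy-power inequality together with a monotonicity argument; in the vector case it is exactly the nontrivial Gaussian-optimality converse of the cited DT theory, not a two-line max-entropy computation. The paper never has to prove this: it passes from encoding policies to the strictly richer class of reconstruction kernels, so the causal rate-distortion function $R(D_{d,\tau},\overline{\dmat{Q}}_{\dmat{A},\tau},\tau)$ lower-bounds the mutual-information rate simply because the encoder-induced kernel is feasible in that infimum, and the log-det/SDP characterization is imported from \cite{tanaka2016semidefinite}. Replacing your step (i) by that reduction would close the gap.

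Step (ii) is also asserted rather than proved, and it is where the real work of the paper's Lemma~\ref{lemm:continuousrdflb} lives. The paper bridges DT and CT in the opposite direction from you: starting from the minimizer $\dmat{P}^*$ of an $\varepsilon$-perturbed version of the SDP (\ref{eq:RDF_disc_tc_vec_explicit}), it builds a $\tau$-periodic CT policy with $\dmat{C}_t=0$ on most of each period and a vanishing-length, unbounded-gain segment that drags the Kalman--Bucy covariance down to $\dmat{P}^*$, shows the information cost of that segment converges exactly to the log-det drop (hence to $R/\tau$), and then removes the artificial initial condition via the time-invariant gain and Ces\`aro argument of Lemma~\ref{lemm:ct2sdp}. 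Your proposal of a piecewise-constant gain on the sampling grid with $\dmat{X}_{k\tau}\preceq\dmat{P}(k)$ cannot work literally: bounded gains cannot reproduce the downward jumps of the covariance at sample times, and a policy whose covariance is uniformly smaller generally incurs a \emph{larger} information rate, so the claim that its CT rate is at most the DT log-det rate is precisely the delicate limiting argument you would still have to supply (together with an averaging argument, since an arbitrary encoder's covariance sequence need not be periodic or convergent). Your inequality $\log\det\dmat{P}^-(k)\ge 2\tau\Tr(\dmat{A})+\log\det\dmat{P}(k{-}1)$ is correct, and the final assembly via monotonicity of $\theta^{-1}$ matches the paper; but as written, both the key information inequality in (i) and the CT construction in (ii) are the actual content of the theorem and are left unestablished.
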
 Theorem \ref{thm:keybounds_abbrev} is proven in Section \ref{sec:proofs}. To prove Theorem \ref{thm:keybounds_abbrev}, we will first show that (\ref{eq:initialOptimization}) is amenable to a DT formulation reminiscent of prior work on minimum rate causal tracking \cite{tanaka2016semidefinite}. This leads to a ``DT" lower bound on (\ref{eq:initialOptimization}) that is computed via a semidefinite program. The program is parameterized by discrete-time system matrices corresponding to (\ref{eq:source_c}) sampled with period $\tau$. We then return to a completely CT viewpoint to further relax this bound. The utility of the CT relaxation is that it requires only solving a single semidefinite optimization to bound the required bitrate for the entire range of feasible sampling periods. We compare the CT and DT bounds in Section \ref{sec:numerical}, and find close agreement for a wide range of $\tau$.  However, it should be noted that the intermediate result Theorem \ref{thm:firstbound} in Section \ref{subsec:dtbounds} provides a tighter bound for any fixed $\tau$. 
\section{Derivation of the main result }\label{sec:proofs}
As discussed in Section \ref{sec:main}, Theorem \ref{thm:keybounds_abbrev} follows from a relaxation of a discrete-time lower bound on (\ref{eq:initialOptimization}). We derive this bound in the next subsection. 
\subsection{DT rate-distortion lower bounds}\label{subsec:dtbounds}
The first lower bound follows via an analysis of the source process (\ref{eq:source_c}) sampled at a frequency of $1/\tau$. We first describe this sampled process, and our description will lead to a structural result that casts the optimization (\ref{eq:initialOptimization}) in a DT framework. With this framework in hand, we derive our first main result, which is a lower bound on (\ref{eq:initialOptimization}) derived via the solution of a DT causal rate-distortion optimization.

By the definition of the process (\ref{eq:source_c}), for $s\ge 0$
\begin{equation}
\label{eq:source_d_vec}
\rvec{x}_{t+s} = \dmat{A}_{s}\rvec{x}_{t}+\dmat{B}_{s}\rvec{w}_{t}
\end{equation}  where $\rvec{w}_{t}\sim\mathcal{N}(0,\dmat{I}_{m\times m})$ is a standard Gaussian $m$-dimensional random vector with $\rvec{w}_{t}\indep (\rvec{x}_{t},\ba(0:\floor{t/\tau}))$ and 
\begin{equation}\label{eq:ab_dt_vec}
\dmat{A}_{s}=e^{\dmat{A}s} \text{ and } \dmat{B}_{s}=\left(\int_0^{s} e^{\dmat{A}\lambda}\dmat{B}\dmat{B}^\top e^{\dmat{A}^\top\lambda}d\lambda\right)^{\frac{1}{2}}.
\end{equation} By the Hurwitz stability of $\dmat{A}$ that the limits $\lim_{t\rightarrow\infty}\mathbb{E}[\rvec{x}_{t}\rvec{x}_{t}^{\top}]= \lim_{t\rightarrow \infty} \int_0^{t} e^{\dmat{A}\lambda}\dmat{B}\dmat{B}^\top e^{\dmat{A}^\top\lambda}d\lambda$ are well defined and equal to the unique $\dmat{X}$ that satisfies the Lyaponov equation $\dmat{A}\dmat{X}+\dmat{X}\dmat{A}^\top+\dmat{B}\dmat{B}^\top =0$. Note that if $D_{c} >   \Tr(\dmat{X})$, the distortion constraint in (\ref{eq:initialOptimization}) is trivially satisfied by a policy where the encoder chooses $\rvec{a}(k)=\emptyset$ for all $k$ and the decoder uses the estimate $\rvec{\hat{x}}_{t}=0$.  Thus, we will restrict our attention to the more interesting case when $D_c <   \Tr(\dmat{X})$. 

Let $\rvec{x}(k) = \rvec{x}_{k\tau}$ and $\hat{\rvec{x}}(k) = \hat{\rvec{x}}_{k\tau}$ be the samples of the source process (\ref{eq:source_c}) and the estimator.  In particular, we have 
 \begin{align}\label{eq:discretizeddynamics_vec}
     \rvec{x}(k+1) =\dmat{A}_{\tau}\rvec{x}(k)+\dmat{B}_{\tau}\rvec{w}(k)
 \end{align} where the $\rvec{w}(k)\sim \mathcal{N}(0_{m},\dmat{I}_{m})$ and $\rvec{w}(k)\indep (\rvec{w}(0:k-1),\rvec{x}(0:k))$. Furthermore, since $\hat{\bx}_{t} = \mathbb{E}[\bx_{t}|\ba(0:\floor{t/\tau})]$ and (\ref{eq:source_d_vec}),
    $\hat{\rvec{x}}_{t} = \dmat{A}_{t-\tau\floor{t/\tau}}\hat{\rvec{x}}(\floor{t/\tau})$.
 Thus, the optimal CT reconstruction depends only on the reconstructions \textit{of the samples}, e.g. $\{\hat{\bx}(k)\}$. We interpret this as ``optimal interpolation". This allows the distortion constraint in (\ref{eq:mse}) to be cast in terms of the sampled reconstruction error.  
Let $\rvec{e}_{t} = \rvec{x}_t-\hat{\rvec{x}}_t$ and $\rvec{e}(k) = \rvec{x}_{k\tau}-\hat{\rvec{x}}_{k\tau}$. Assume $t\in [k\tau,(k+1)\tau)$ so that  $\floor{t/\tau}=k$.
We have, at every $t$
\begin{multline}\label{eq:mmsesimp}
   \mathbb{E}[\Vert\rvec{e}_{t}\rVert_{2}^{2}]=\mathbb{E}[\rvec{e}(k)^{\top}\dmat{A}_{t-k\tau }^{\top}\dmat{A}_{t-k\tau }\rvec{e}(k))] +\\  \Tr(\dmat{B}_{t-k\tau }\dmat{B}_{t-k\tau }^{\top}).
\end{multline} We will now derive an expression for the integral of (\ref{eq:mmsesimp}) over the interval $t\in [k\tau,(k+1)\tau)$. Let
\begin{IEEEeqnarray}{rCl}\label{eq:aintegrals}
\overline{\dmat{Q}}_{\dmat{A},\tau} = \int_{k\tau} ^{(k+1)\tau}\dmat{A}_{t-k\tau }^{\top}\dmat{A}_{t-k\tau }dt=
\int_{0}^{\tau} e^{\dmat{A}^{\top}t}e^{\dmat{A}t}dt,\label{eq:avl}
\end{IEEEeqnarray}  and let 
$\overline{b}_{\tau} = 
\Tr(\int_{k\tau}^{(k+1)\tau}\dmat{B}_{t-k\tau }\dmat{B}_{t-k\tau }^{\top}dt)$.  Both $\overline{\dmat{Q}}_{\dmat{A},\tau}$ and $\overline{b}_{\tau}$ depend on the sampling interval $\tau$, but not on $k$. Both are well-defined and, since $\dmat{A}$ is Hurwitz,  the integrals can be computed via \cite{vanloan}. Since  $\overline{\dmat{Q}}_{\dmat{A},\tau}\succ 0_{m\times m}$, we can define the norm ($\dvec{v}\in\mathbb{R}^{m}$) $\lVert \dvec{v}\rVert_{\overline{\dmat{Q}}_{\dmat{A},\tau}} = \sqrt{\dvec{v}^\top\overline{\dmat{Q}}_{\dmat{A},\tau}\dvec{v}}$. Applying the linearity of the integral and trace, as well as the  Fubini theorem, gives
\begin{align}\label{eq:finalSimplified}
     \int_{k\tau} ^{(k+1)\tau} \mathbb{E}[\Vert\rvec{x}_t-\hat{\rvec{x}}_t\rVert_{2}^{2}]dt = \mathbb{E}[\lVert \rvec{e}(k) \rVert^{2}_{\overline{\dmat{Q}}_{\dmat{A},\tau}}]+\overline{b}_{\tau}. 
\end{align} The following lemma allows us to cast the distortion constraint (\ref{eq:mse}) in terms of the $\rvec{e}(k)$, i.e. reconstruction error at the sampling times $t=k\tau$. \begin{lemma}\label{lemm:dimplication_vec}
Define 
\begin{align}\label{eq:DcDd_vec}
    D_{d,\tau } = D_{c}\tau-\overline{b_{\tau}}.
\end{align} If $\tau>0$ is sufficiently small, then $D_{d,\tau } >0$. If $D_{d,\tau } >0$, $\limsup_{T\rightarrow \infty} \frac{1}{T}\int_0^T \mathbb{E}[\lVert\rvec{x}_t-\hat{\rvec{x}}_t\rVert_2^2 ]dt \leq D_c$ then $\limsup_{K\rightarrow \infty} \frac{1}{K}\sum_{k=0}^{K-1} \mathbb{E}[\lVert \rvec{x}(k) - \hat{\rvec{x}}(k) \rVert_{{\overline{\dmat{Q}}_{\dmat{A},\tau}}}^2] \le   D_{d,\tau }$.
\end{lemma}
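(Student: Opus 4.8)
The plan is to treat the two assertions of the lemma separately, both resting on the per‑interval identity \eqref{eq:finalSimplified}.

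For the first assertion, that $D_{d,\tau}>0$ once $\tau$ is small enough, I would observe that $\overline{b}_\tau$ vanishes quadratically in $\tau$ while $D_c\tau$ is linear. Since $\dmat{A}$ is Hurwitz, $\lVert e^{\dmat{A}\lambda}\rVert_2$ is bounded on $[0,1]$, so by \eqref{eq:ab_dt_vec} there is a constant $M>0$ depending only on $\dmat{A}$ and $\dmat{B}$ with $\Tr\!\big(\dmat{B}_s\dmat{B}_s^\top\big)\le Ms$ for all $s\in[0,1]$. Integrating over $s\in[0,\tau]$ gives $\overline{b}_\tau\le M\tau^2/2$ whenever $\tau\le 1$, hence $D_{d,\tau}=D_c\tau-\overline{b}_\tau\ge\tau\,(D_c-M\tau/2)>0$ as soon as $\tau<\min\{1,\,2D_c/M\}$.

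For the second assertion, I would sum \eqref{eq:finalSimplified} over $k=0,\dots,K-1$ to obtain, for every $K\in\mathbb{N}$,
\[
\int_0^{K\tau}\mathbb{E}\big[\lVert\rvec{x}_t-\hat{\rvec{x}}_t\rVert_2^2\big]\,dt=\sum_{k=0}^{K-1}\mathbb{E}\big[\lVert\rvec{x}(k)-\hat{\rvec{x}}(k)\rVert_{\overline{\dmat{Q}}_{\dmat{A},\tau}}^2\big]+K\,\overline{b}_\tau ,
\]
using $\rvec{e}(k)=\rvec{x}(k)-\hat{\rvec{x}}(k)$. Dividing by $K\tau$, taking $\limsup_{K\to\infty}$, and noting that a $\limsup$ along the subsequence of horizons $T=K\tau$ is dominated by the full $\limsup$ over $T\to\infty$, the distortion hypothesis gives
\[
D_c\ \ge\ \frac{\overline{b}_\tau}{\tau}+\limsup_{K\to\infty}\frac1{K\tau}\sum_{k=0}^{K-1}\mathbb{E}\big[\lVert\rvec{x}(k)-\hat{\rvec{x}}(k)\rVert_{\overline{\dmat{Q}}_{\dmat{A},\tau}}^2\big],
\]
the constant $\overline{b}_\tau/\tau$ being pulled out of the $\limsup$. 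Rearranging yields $\limsup_{K\to\infty}\frac1K\sum_{k=0}^{K-1}\mathbb{E}\big[\lVert\rvec{x}(k)-\hat{\rvec{x}}(k)\rVert_{\overline{\dmat{Q}}_{\dmat{A},\tau}}^2\big]\le D_c\tau-\overline{b}_\tau=D_{d,\tau}$, which is the claim.

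I do not expect a genuine obstacle here: the argument is essentially bookkeeping layered on \eqref{eq:finalSimplified}. The only point requiring mild care is the transition from the continuous‑time $\limsup$ over all horizons $T$ to the $\limsup$ over the grid $T=K\tau$; this needs nothing more than the elementary fact that a $\limsup$ along a subsequence cannot exceed the full $\limsup$, so no interpolation estimate for $T\in(K\tau,(K+1)\tau)$ is needed. Finally, the hypothesis $D_{d,\tau}>0$ is used only to guarantee that the conclusion is non‑vacuous—the left‑hand side, an average of $\overline{\dmat{Q}}_{\dmat{A},\tau}$‑weighted norms with $\overline{\dmat{Q}}_{\dmat{A},\tau}\succ0$, is automatically nonnegative—while the derivation of the inequality itself does not invoke it.
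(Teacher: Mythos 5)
Your proof is correct and follows essentially the same route as the paper: sum the per-interval identity \eqref{eq:finalSimplified} over $k$, restrict the continuous-time $\limsup$ to the grid $T=K\tau$, and rearrange. The only (immaterial) difference is in the first assertion, where you bound $\overline{b}_\tau\le M\tau^2/2$ explicitly while the paper obtains $\lim_{\tau\to 0^+}(D_c\tau-\overline{b}_\tau)/\tau=D_c$ via l'H\^{o}pital's rule and the Fundamental Theorem of Calculus; both establish $\overline{b}_\tau=o(\tau)$ and hence $D_{d,\tau}>0$ for small $\tau$.
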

\begin{proof}[Proof of Lemma \ref{lemm:dimplication_vec}:]
By l'H\^{o}pital's Rule and the Fundamental Theorem of Calculus, $\lim_{\tau\rightarrow 0^{+}} (D_{c}\tau-\overline{b_{\tau}})/\tau =  D_{c}$.
 Thus for sufficiently small $\tau>0$, we have $D_{d,\tau }>0$. Furthermore, if $\limsup_{T\rightarrow \infty} \frac{1}{T}\int_0^T \mathbb{E}[\lVert\rvec{x}_t-\hat{\rvec{x}}_t\rVert_2^2 ]dt  \le D_{c}$ then for $K\in\mathbb{N}$
\begin{align}\label{eq:integralprebreakup}
     \limsup_{K\rightarrow \infty} \frac{1}{K\tau}\int_0^{K\tau} \mathbb{E}[\lVert\rvec{x}_t-\hat{\rvec{x}}_t\rVert_2^2 ]dt \le D_{c}.
\end{align} 
Breaking up (\ref{eq:integralprebreakup}) into integrals over $[k\tau,(k+1)\tau)$ and applying (\ref{eq:finalSimplified}) gives
\begin{eqnarray} 
    \lefteqn{
    \limsup_{K\rightarrow \infty} \frac{1}{K\tau}\int_0^{K\tau} \mathbb{E}[\lVert\rvec{x}_t-\hat{\rvec{x}}_t\rVert_2^2 ]dt} \nonumber \\ 
    &=& \limsup_{K\rightarrow \infty} \frac{1}{K\tau}\sum_{k=0}^{K}\int_{k\tau}^{(k+1)\tau} \mathbb{E}[\lVert\rvec{x}_t-\hat{\rvec{x}}_t\rVert_2^2 ]dt \\
    &=& \limsup_{K\rightarrow \infty} \frac{1}{K}\sum_{k=0}^{K} \frac{\mathbb{E}[\lVert \rvec{e}(k) \rVert^{2}_{\overline{\dmat{Q}}_{\dmat{A},\tau}}]}{\tau}+\frac{\overline{b}_{\tau}}{\tau}, \label{eq:l1vecpf}
\end{eqnarray} 
establishing the inclusion in the statement of the lemma. 
\end{proof}
Lemma \ref{lemm:dimplication_vec} and the preceding comments immediately lead to the following result, which allows us to derive a lower bound to (\ref{eq:initialOptimization}) under a DT framework.
\begin{proposition}\label{prop:cd2dtprop}
Denoting $\rvec{e}(k)=  \bx(k) - \hat{\bx}(k)$, for $D>0$, $Q\in \mathbb{R}^{m\times m}$, $\dmat{Q}\succeq 0$, define 
\begin{align}\label{eq:optim_first_simp}
    \mathcal{L}^*_{\mathrm{DT}}(D,\dmat{Q},\tau) = \left\{\begin{aligned} &\inf_{\mathbb{P}_{\mathrm{E}}} \limsup_{K\rightarrow \infty} \frac{1}{K\tau}\sum_{k=0}^{K-1} \mathbb{E}[\ell(\ba(k))] \\ &\st \limsup_{K\rightarrow \infty} \frac{1}{K}\sum_{k=0}^{K-1} \mathbb{E}[\lVert \rvec{e}(k)\rVert_\dmat{Q}^2] \le D \end{aligned} \right.,
\end{align} where $\hat{\bx}(k) = \mathbb{E}[\bx(k)|\ba(0:k)]$ for all $k$ and expectations and information measures are with respect to the joint measure induced by the DT system dynamics (\ref{eq:discretizeddynamics_vec}) and the choice of the encoding kernel $\mathbb{P}_{\mathrm{E}}$. For a fixed  $D_{c}>0$ with a sufficiently small $\tau$  such  that $D_{d,\tau}>0$ we have 
\begin{align}
\label{eq:ct_to_dt_distortion_bound}
    \mathcal{L}^*_{\mathrm{DT}}(D_{d,\tau},\overline{\dmat{Q}}_{\dmat{A},\tau},\tau) \le \mathcal{L}^*_{\mathrm{CT}}(D_{c},\tau).
\end{align}
\end{proposition}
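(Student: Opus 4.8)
The plan is to prove (\ref{eq:ct_to_dt_distortion_bound}) by showing that every encoding policy feasible for the continuous-time problem (\ref{eq:initialOptimization}) is also feasible for the discrete-time problem (\ref{eq:optim_first_simp}) with $D=D_{d,\tau}$ and $\dmat{Q}=\overline{\dmat{Q}}_{\dmat{A},\tau}$, while attaining exactly the same value of the objective; taking the infimum over all such policies then yields $\mathcal{L}^*_{\mathrm{DT}}(D_{d,\tau},\overline{\dmat{Q}}_{\dmat{A},\tau},\tau)\le\mathcal{L}^*_{\mathrm{CT}}(D_{c},\tau)$. Throughout, fix $\tau>0$ small enough that $D_{d,\tau}>0$ (possible by Lemma \ref{lemm:dimplication_vec} via l'H\^{o}pital), and fix an arbitrary causal encoding kernel $\mathbb{P}_{\mathrm{E}}$ as in (\ref{eq:encoderspace}).

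The first step is to check that the two formulations induce the same joint law on $(\bx(0:k),\ba(0:k))$ for every $k$. By (\ref{eq:source_d_vec})--(\ref{eq:ab_dt_vec}), the samples $\bx(k)=\bx_{k\tau}$ of the It\^o process (\ref{eq:source_c}) obey exactly the recursion (\ref{eq:discretizeddynamics_vec}) with the stated independence structure of the $\rvec{w}(k)$, and the encoder in (\ref{eq:encoderspace}) conditions only on the samples $\bx(0:k)$ and the past packets $\ba(0:k-1)$. Hence running $\mathbb{P}_{\mathrm{E}}$ on the CT source produces the same joint distribution of $(\bx(0:k),\ba(0:k))$ as running $\mathbb{P}_{\mathrm{E}}$ on the DT source (\ref{eq:discretizeddynamics_vec}); in particular the communication-cost objective $\limsup_{K\to\infty}\frac{1}{K\tau}\sum_{k=0}^{K-1}\mathbb{E}[\ell(\ba(k))]$ is the same number in both problems.

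The second step handles the distortion constraint. The DT decoder uses $\hat{\bx}(k)=\mathbb{E}[\bx(k)\mid\ba(0:k)]$, which by the previous step equals the value at $t=k\tau$ of the CT MMSE decoder $\hat{\bx}_t=\mathbb{E}[\bx_t\mid\ba(0:\floor{t/\tau})]$, so $\bx(k)-\hat{\bx}(k)=\rvec{e}(k)$. Now suppose $\mathbb{P}_{\mathrm{E}}$ is CT-feasible, i.e.\ $\limsup_{T\to\infty}\frac{1}{T}\int_0^T\mathbb{E}[\lVert\bx_t-\hat{\bx}_t\rVert_2^2]\,dt\le D_c$. Lemma \ref{lemm:dimplication_vec} — which is precisely the implication from this CT constraint to the \Cesaro bound on $\mathbb{E}[\lVert\rvec{e}(k)\rVert_{\overline{\dmat{Q}}_{\dmat{A},\tau}}^2]$, obtained through the identity (\ref{eq:finalSimplified}) and the definition (\ref{eq:DcDd_vec}) — gives $\limsup_{K\to\infty}\frac{1}{K}\sum_{k=0}^{K-1}\mathbb{E}[\lVert\rvec{e}(k)\rVert_{\overline{\dmat{Q}}_{\dmat{A},\tau}}^2]\le D_{d,\tau}$, which is exactly the constraint in (\ref{eq:optim_first_simp}) with $D=D_{d,\tau}$ and $\dmat{Q}=\overline{\dmat{Q}}_{\dmat{A},\tau}$. (The cosmetic mismatch between the sum to $K$ in the proof of Lemma \ref{lemm:dimplication_vec} and the sum to $K-1$ here is immaterial, since a single term contributes $O(1/K)$ to the \Cesaro average.) Thus $\mathbb{P}_{\mathrm{E}}$ is DT-feasible with the same objective value, and the proposition follows by taking the infimum over $\mathbb{P}_{\mathrm{E}}$.

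I do not expect a genuinely hard step: this is a change-of-viewpoint/bookkeeping reduction rather than a new estimate. The one place warranting care is making the ``sampling the CT problem equals the DT problem'' identification rigorous — verifying that the admissible policy class, the induced joint measures, and the MMSE decoder all transport correctly between the two formulations — which rests on the Markov/independence structure in (\ref{eq:source_d_vec}) and on the ``optimal interpolation'' fact (established in the discussion preceding the proposition) that the optimal CT reconstruction $\hat{\bx}_t$ is a deterministic function of the sampled reconstructions $\{\hat{\bx}(k)\}$.
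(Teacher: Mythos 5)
Your proposal is correct and follows the same route the paper intends: the paper treats the proposition as an immediate consequence of Lemma \ref{lemm:dimplication_vec} together with the preceding sampled-data identification (that the encoder acts only on samples, the sampled process obeys (\ref{eq:discretizeddynamics_vec}), and the MMSE decoders agree at $t=k\tau$), which is precisely the feasibility-transfer-plus-infimum argument you spell out. Your write-up simply makes explicit the bookkeeping the paper leaves implicit, and the handling of the $K$ versus $K-1$ summation limit is a fair observation but, as you note, immaterial.
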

\begin{remark}
In fact, it can be shown that the converse holds in Lemma \ref{lemm:dimplication_vec}, i.e. given the sampled data model, it follows that if $D_{d,\tau } >0$, $\limsup_{K\rightarrow \infty} \frac{1}{K}\sum_{k=0}^{K-1} \mathbb{E}[\lVert \rvec{x}(k) - \hat{\rvec{x}}(k) \rVert_{{\overline{\dmat{Q}}_{\dmat{A},\tau}}}^2] \le   D_{d,\tau }$ implies that $\limsup_{T\rightarrow \infty} \frac{1}{T}\int_0^T \mathbb{E}[\lVert\rvec{x}_t-\hat{\rvec{x}}_t\rVert_2^2 ]dt \leq D_c$. This implies that (\ref{eq:ct_to_dt_distortion_bound}) is actually an equality. Establishing the inequality (\ref{eq:ct_to_dt_distortion_bound}) is sufficient for our work here. 
\end{remark}

While the optimization in (\ref{eq:optim_first_simp}) resembles those studied causal rate-distortion literature (cf. e.g. \cite{tanaka2016semidefinite}, \cite{photisSRDF_STSP}), the absence of prefix-constraints, critical to the ``event-based" formulation, requires additional analysis. The main result of this section combines prior work on causal rate-distortion with tools from lossless compression without prefix constraints \cite{szpankowski2011minimum} to derive a lower bound on $\mathcal{L}^*_{\mathrm{DT}}(D_{d,\tau},\overline{\dmat{Q}}_{\dmat{A},\tau},\tau)$.  Let $\ba$ be a random variable with a range in $\{0,1\}^{*}$ (by convention $\emptyset\in\{0,1\}^{*}$).  Recall the definition of $\theta:\mathbb{R}^{+}\rightarrow\mathbb{R}^{+}$ from Theorem \ref{thm:firstbound}. Since $\theta$ is strictly increasing and concave, $\theta^{-1}$ is strictly increasing and convex.  A direct consequence of \cite[Section II]{szpankowski2011minimum} is that
\begin{align}\label{eq:noprefix}
    \theta^{-1}(H(\ba))  \le \mathbb{E}[\ell(\ba)].
\end{align} Combining (\ref{eq:noprefix}) with the fact that $\theta^{-1}$ is strictly increasing and convex leads to the following lemma. \begin{lemma}\label{lemm:mutual_information_lb} For any causal encoding policy $\mathbb{P}_{\mathrm{E}}$
\begin{multline}\label{eq:converse}
        \frac{1}{K\tau}\sum_{k=0}^{K-1}  \mathbb{E}[\ell(\ba(k))] \ge\\ \frac{1}{\tau}  \theta^{-1}(\frac{1}{K}I(\bx(0:K-1);\hat{\bx}(0:K-1))).
\end{multline}
\end{lemma}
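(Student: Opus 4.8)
The plan is to reduce the claim to a short chain of elementary inequalities: a per-codeword bound supplied by (\ref{eq:noprefix}), an aggregation step exploiting the convexity of $\theta^{-1}$, and a data-processing argument that converts the resulting entropy into the mutual information $I(\bx(0:K-1);\hat{\bx}(0:K-1))$.

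First, since each packet $\ba(k)$ is a random variable with range in $\{0,1\}^{*}$, I would apply (\ref{eq:noprefix}) directly to it, giving $\mathbb{E}[\ell(\ba(k))]\ge \theta^{-1}(H(\ba(k)))$ for every $k$ (if $\mathbb{E}[\ell(\ba(k))]=\infty$ for some $k$, the lemma holds trivially, so assume all terms are finite). Averaging over $k\in\{0,\dots,K-1\}$ and applying Jensen's inequality — here the convexity of $\theta^{-1}$ must be used in the direction $\frac{1}{K}\sum_k\theta^{-1}(z_k)\ge\theta^{-1}\!\big(\frac{1}{K}\sum_k z_k\big)$ — yields
\[
\frac{1}{K}\sum_{k=0}^{K-1}\mathbb{E}[\ell(\ba(k))]\;\ge\;\theta^{-1}\!\left(\frac{1}{K}\sum_{k=0}^{K-1}H(\ba(k))\right).
\]
Subadditivity of Shannon entropy, $\sum_{k=0}^{K-1}H(\ba(k))\ge H(\ba(0:K-1))$, together with the monotonicity of $\theta^{-1}$, then replaces the right-hand side by $\theta^{-1}\!\big(\frac{1}{K}H(\ba(0:K-1))\big)$. (Equivalently one may apply (\ref{eq:noprefix}) conditionally on $\ba(0:k-1)$ and use the entropy chain rule, obtaining the same bound with an equality in place of the subadditivity step.)

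It then remains to lower bound $H(\ba(0:K-1))$ by $I(\bx(0:K-1);\hat{\bx}(0:K-1))$. Since $\hat{\bx}(k)=\mathbb{E}[\bx(k)\mid\ba(0:k)]$ is a deterministic (measurable) function of $\ba(0:k)$, the whole trajectory $\hat{\bx}(0:K-1)$ is a deterministic function of $\ba(0:K-1)$, so the data-processing inequality gives $I(\bx(0:K-1);\hat{\bx}(0:K-1))\le I(\bx(0:K-1);\ba(0:K-1))$; and because $\ba(0:K-1)$ is discrete-valued, $H(\ba(0:K-1)\mid\bx(0:K-1))\ge 0$, whence $I(\bx(0:K-1);\ba(0:K-1))\le H(\ba(0:K-1))$. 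Chaining these inequalities with the monotonicity of $\theta^{-1}$ and dividing by $\tau$ produces exactly (\ref{eq:converse}).

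There is no deep obstacle; the statement is an assembly of standard facts. The points that require care are: (i) using Jensen's inequality in the correct direction for the convex, increasing $\theta^{-1}$; (ii) routing through $I\le H$ after the data-processing step rather than trying to bound $\mathbb{E}[\ell(\ba(k))]$ by a per-step mutual information, which fails because $\hat{\bx}(k)$ depends on the entire prefix $\ba(0:k)$ and not on $\ba(k)$ alone; and (iii) dispatching the degenerate cases in which an expected length or an entropy is infinite, where the bound holds vacuously.
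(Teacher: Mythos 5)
Your proposal is correct and follows essentially the same route as the paper's proof: the per-codeword non-prefix bound $\mathbb{E}[\ell(\ba(k))]\ge\theta^{-1}(H(\ba(k)))$ from \cite{szpankowski2011minimum}, Jensen's inequality on the convex increasing $\theta^{-1}$, a passage from the sum of entropies to the joint mutual information with $\bx(0:K-1)$, and the data-processing inequality via the Markov chain $\bx(0:K-1)\leftrightarrow\ba(0:K-1)\leftrightarrow\hat{\bx}(0:K-1)$. The only cosmetic difference is the middle step, where you use subadditivity of entropy followed by $I\le H$ for the discrete string $\ba(0:K-1)$, whereas the paper conditions each $H(\ba(k))$ on the past and sums via the chain rule for mutual information; these are equivalent arguments.
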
 

\begin{proof}
For any $K\in\mathbb{N}_{+}$, applying (\ref{eq:noprefix}) to every codeword $\ba(k)$ gives 
\begin{align}\label{eq:the_lb_we_relax}
    \frac{1}{K\tau}\sum_{k=0}^{K-1}  \mathbb{E}[\ell(\ba(k))] \ge  \frac{1}{K\tau}\sum_{k=0}^{K-1}     \theta^{-1}(H(\ba(k))). 
\end{align} Note that since $\theta$ is strictly increasing an concave, $\theta^{-1}$ is strictly increasing and convex.  Thus, by Jensen's inequality 
\begin{align}\label{eq:lb_relaxed_once}
   \frac{1}{K\tau}\sum_{k=0}^{K-1}     \theta^{-1}\left(H\left(\ba(k)\right)\right)  \ge  \frac{1}{\tau}   \theta^{-1}\left(\frac{1}{K}\sum_{k=0}^{K-1} H\left( \ba(k)\right)\right). 
\end{align}  We also have the following chain of inequalities 
\begin{IEEEeqnarray}{rCl}
H(\ba(k)) &\ge&  H(\ba(k)|\ba(0:k-1))\label{eq:l1pf_cre}\\ &\ge& I(\ba(k);\bx(0:K-1)|\ba(0:k-1)) \label{eq:l1pf_defmi}
\end{IEEEeqnarray} where (\ref{eq:l1pf_cre}) follows since conditioning reduces entropy, and (\ref{eq:l1pf_defmi}) follows from the definition of mutual information and the fact that
$H(\ba(k)|\bx(0:{K-1}),\ba(0:k-1))\ge0$. 
Summing (\ref{eq:l1pf_defmi}) and applying the chain rule for mutual information gives
\begin{align}
    \sum_{k=0}^{K-1} H( \ba(k))) \ge I(\bx(0:K-1);\ba(0:K-1)).
\end{align}  As $\bx(0:K-1)\leftrightarrow \ba(0:K-1)  \leftrightarrow \hat{\bx}(0:K-1)$ is a Markov chain, the data processing inequality gives 
\begin{multline}
    I(\bx(0:K-1);\ba(0:K-1)) \ge\\ I(\bx(0:K-1);\hat{\bx}(0:K-1)). 
\end{multline}
Finally, since $\theta^{-1}$ is increasing
\begin{multline}\label{eq:thelastonewecombine}
    \frac{1}{\tau}   \theta^{-1}(\frac{1}{K}\sum_{k=0}^{K-1} H( \ba(k))) \ge\\  \frac{1}{\tau}   \theta^{-1}(\frac{1}{K}I(\bx(0:K-1);\hat{\bx}(0:K-1))) 
\end{multline}
Combining (\ref{eq:the_lb_we_relax}),  (\ref{eq:lb_relaxed_once}), and (\ref{eq:thelastonewecombine}) proves the lemma. 
\end{proof} Lemma \ref{lemm:mutual_information_lb} will allow us to lower-bound (\ref{eq:optim_first_simp}) via an optimization from conventional causal DT rate-distortion theory \cite{tanaka2016semidefinite}. We presently discuss on of the key results from \cite{tanaka2015semidefinite}.

Define the sequence of Borel measurable causal \textit{reconstruction} kernels  \begin{align}\label{eq:reconkern}
\mathbb{P}_{\mathrm{R}}[\hat{\bx}(k)||\bx(k)] = \{\mathbb{P}[\hat{\bx}(k)|\hat{\bx}(0:k-1),\bx(0:k)]\}_{k\in\mathbb{N}_{0}}.
\end{align} Note that while any choice causal encoder kernel 
$\mathbb{P}\left[ \ba(k)| \ba(0:k-1), \bx(0:k)\right]$ and 
 decoder/interpolator $\hat{\bx}(k) = \mathbb{E}[\bx(k)|\ba(0:k)]$ induces a reconstruction kernel of the form (\ref{eq:reconkern}), the set of \textit{all} reconstruction policies (\ref{eq:reconkern}) is more general (e.g., it is not restricted to policies where the ``message" from the encoder to the estimation center is a discrete codeword $\ba(k)$). For $D>0$, $\dmat{Q}\in\mathbb{R}^{m\times m}$, $\dmat{Q}\succ 0$ , define the DT causal rate-distortion function 
\begin{align}\label{eq:rdf_midef}
R(D,\dmat{Q},\tau) = \inf_{\mathbb{P}_{\mathrm{R}}}  & \limsup_{K\rightarrow \infty}\frac{I(\bx(0:K-1);\hat{\bx}(0:K-1))}{K} \nonumber\\
\st & \limsup_{K\rightarrow \infty} \frac{1}{K}\sum_{k=0}^{K-1} \mathbb{E}[\lVert\rvec{e}\rVert_{\dmat{Q}}^2] \leq D
,
\end{align}
 where the expectations and information measures are computed with respect to the joint measure induced by the DT system dynamics  (\ref{eq:discretizeddynamics_vec}) and the reconstruction policy. It turns out that $R(D,\dmat{Q},\tau)$ can be computed via semidefinite programming \cite[Section V]{tanaka2016semidefinite}. $R(D,\dmat{Q},\tau)$ is equivalent to the (convex) log-determinant optimization 
\begin{align}\label{eq:RDF_disc_tc_vec_explicit}
R(D,\dmat{Q},\tau) &=& \left\{ \begin{aligned}
& \inf_{\substack{\dmat{P},\dmat{\Pi} \in \mathbb{R}^{m\times m}\\ \dmat{P},\dmat{\Pi} \succ 0 }} -\frac{1}{2}\log_{2}\left(\frac{\det\dmat{\Pi}}{\det (\dmat{B}_{\tau}\dmat{B}_{\tau}^{\top})}\right)\\ &\text{ }\st  \Tr(\dmat{Q}\dmat{P})\le D \text{,  }\\&\text{ }\text{\phantom{s.t.} }\dmat{P}\preceq \dmat{A}_{\tau}\dmat{P}\dmat{A}_{\tau}^{\top}+ \dmat{B}_{\tau}\dmat{B}_{\tau}^{\top} \text{, } \\&\text{ }\text{ }\text{ }\begin{bmatrix} \dmat{P}-\dmat{\Pi} & \dmat{P}\dmat{A}_{\tau}^{\top}\\ \dmat{A}_{\tau}\dmat{P} & \dmat{A}_{\tau}\dmat{P} \dmat{A}_{\tau}^{\top}+\dmat{B}_{\tau}\dmat{B}_{\tau}^{\top}\end{bmatrix}\succeq 0
\end{aligned}\right. 
\end{align}  The proof that (\ref{eq:RDF_disc_tc_vec_explicit}) is equivalent to (\ref{eq:rdf_midef}) follows from proving that the optimal policy reconstruction policy
(\ref{eq:reconkern}) can be realized via a two-stage architecture consisting of a linear-Gaussian sensor and a Kalman filter \cite{tanaka2016semidefinite}. It can be shown that if $\dmat{P}^{*}$ is the minimizing $\dmat{P}$ from (\ref{eq:RDF_disc_tc_vec_explicit}) then for $\dmat{P}^{*}_{+}= \dmat{A}_{\tau}\dmat{P}^{*}\dmat{A}_{\tau}^{\top}+ \dmat{B}_{\tau}\dmat{B}_{\tau}^{\top}$ we have 
\begin{align}\label{eq:rdf_explicit_at_minimum}
    R(D,\dmat{Q},\tau) = \frac{1}{2}\log_{2}\left(\frac{\det{\dmat{P}^{*}_{+}}}{\det{\dmat{P}^{*}}}\right).
\end{align}

The main result of this section is the following theorem. 
\begin{theorem}\label{thm:firstbound}
Define the strictly increasing, concave function $\theta(x):\mathbb{R}^{+}\rightarrow\mathbb{R}^{+}$ via $\theta(x) = x+(1+x)\log_2(1+x)-x\log_2(x)$, and denote its inverse as $\theta^{-1}:\mathbb{R}^{+}\rightarrow\mathbb{R}^{+}$. Let $D_{c}>0$. We have 
\begin{IEEEeqnarray}{rCl}\label{eq:rdf_lb}
      \mathcal{L}^*_{\mathrm{CT}}(D_{c},\tau) &\ge& 
       \mathcal{L}^*_{\mathrm{DT}}(D_{d,\tau},\overline{\dmat{Q}}_{\dmat{A},\tau},\tau)
      \\&\ge & \frac{1}{\tau}\theta^{-1}\left( R(D_{d,\tau},\overline{\dmat{Q}}_{\dmat{A},\tau},\tau)\right).
\end{IEEEeqnarray}
\end{theorem}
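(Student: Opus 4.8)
The plan is to chain together results already in hand. The first inequality, $\mathcal{L}^*_{\mathrm{CT}}(D_{c},\tau) \ge \mathcal{L}^*_{\mathrm{DT}}(D_{d,\tau},\overline{\dmat{Q}}_{\dmat{A},\tau},\tau)$, is precisely Proposition \ref{prop:cd2dtprop}, valid under the standing assumption that $\tau$ is small enough that $D_{d,\tau}>0$, so nothing new is needed there. The real content is the second inequality, which lower bounds the prefix-free-relaxed DT problem $\mathcal{L}^*_{\mathrm{DT}}$ by the conventional causal DT rate--distortion function $R(\cdot)$ passed through $\theta^{-1}$.

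For the second inequality I would fix an arbitrary encoding kernel $\mathbb{P}_{\mathrm{E}}$ feasible for $\mathcal{L}^*_{\mathrm{DT}}(D_{d,\tau},\overline{\dmat{Q}}_{\dmat{A},\tau},\tau)$, i.e. one with $\hat{\bx}(k)=\mathbb{E}[\bx(k)\mid\ba(0:k)]$ and $\limsup_{K}\frac1K\sum_{k=0}^{K-1}\mathbb{E}[\lVert\rvec{e}(k)\rVert^{2}_{\overline{\dmat{Q}}_{\dmat{A},\tau}}]\le D_{d,\tau}$ (if no such policy exists the left side is $+\infty$ and there is nothing to prove). Lemma \ref{lemm:mutual_information_lb} gives, for every $K$, that $\frac1{K\tau}\sum_{k=0}^{K-1}\mathbb{E}[\ell(\ba(k))]\ge\frac1\tau\theta^{-1}\big(\frac1K I(\bx(0:K-1);\hat{\bx}(0:K-1))\big)$. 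Taking $\limsup_{K}$ of both sides of this per-$K$ inequality, and using that $\theta^{-1}$ is continuous and strictly increasing (inherited from $\theta$ being strictly increasing and concave) so that $\limsup_{K}\theta^{-1}(c_K)\ge\theta^{-1}(\limsup_{K}c_K)$ along any subsequence attaining the $\limsup$, yields $\limsup_{K}\frac1{K\tau}\sum_{k=0}^{K-1}\mathbb{E}[\ell(\ba(k))]\ge\frac1\tau\theta^{-1}\big(\limsup_{K}\frac1K I(\bx(0:K-1);\hat{\bx}(0:K-1))\big)$.

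Next I would note that the pair consisting of $\mathbb{P}_{\mathrm{E}}$ and the MMSE decoder $\hat{\bx}(k)=\mathbb{E}[\bx(k)\mid\ba(0:k)]$ induces a reconstruction kernel of the form (\ref{eq:reconkern}), and that, since $\mathbb{P}_{\mathrm{E}}$ was feasible, the induced reconstruction kernel satisfies the distortion constraint appearing in (\ref{eq:rdf_midef}) with $D=D_{d,\tau}$ and $\dmat{Q}=\overline{\dmat{Q}}_{\dmat{A},\tau}$; this transfer is immediate because the MMSE decoder attains exactly the distortion in that constraint. Hence the induced kernel lies in the feasible set of the infimum defining $R(D_{d,\tau},\overline{\dmat{Q}}_{\dmat{A},\tau},\tau)$, so $\limsup_{K}\frac1K I(\bx(0:K-1);\hat{\bx}(0:K-1))\ge R(D_{d,\tau},\overline{\dmat{Q}}_{\dmat{A},\tau},\tau)$. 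Applying monotonicity of $\theta^{-1}$ and then taking the infimum over all feasible $\mathbb{P}_{\mathrm{E}}$ gives $\mathcal{L}^*_{\mathrm{DT}}(D_{d,\tau},\overline{\dmat{Q}}_{\dmat{A},\tau},\tau)\ge\frac1\tau\theta^{-1}\big(R(D_{d,\tau},\overline{\dmat{Q}}_{\dmat{A},\tau},\tau)\big)$, which combined with Proposition \ref{prop:cd2dtprop} is the claim.

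Each step is individually light; the only places requiring care are the passage from the per-block inequality of Lemma \ref{lemm:mutual_information_lb} to a statement about $\limsup$-rates, where continuity and monotonicity of $\theta^{-1}$ must be used to move it past the $\limsup$ (and the $\limsup$ of a sum of nonnegative per-block bounds), and the verification that feasibility of $\mathbb{P}_{\mathrm{E}}$ for $\mathcal{L}^*_{\mathrm{DT}}$ transfers to feasibility of the induced reconstruction kernel for the definition of $R(\cdot)$ — the latter being immediate, but worth stating. No analytic machinery beyond Proposition \ref{prop:cd2dtprop}, Lemma \ref{lemm:mutual_information_lb}, and the definition/characterization of $R(D,\dmat{Q},\tau)$ is required, so I do not expect a substantive obstacle here.
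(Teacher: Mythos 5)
Your proposal is correct and follows essentially the same route as the paper: the first inequality is Proposition \ref{prop:cd2dtprop}, and the second combines Lemma \ref{lemm:mutual_information_lb} with the monotonicity and continuity of $\theta^{-1}$ (to pass it through the $\limsup$/infimum) and the observation that every feasible encoder, together with the MMSE decoder, induces a reconstruction kernel feasible for the optimization defining $R(D_{d,\tau},\overline{\dmat{Q}}_{\dmat{A},\tau},\tau)$. The only cosmetic difference is that you fix an arbitrary feasible $\mathbb{P}_{\mathrm{E}}$ and then infimize, whereas the paper manipulates the optimization directly; the content is identical.
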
 
\begin{proof}
    Since by Proposition \ref{prop:cd2dtprop} we have $  \mathcal{L}^*_{\mathrm{DT}}(D_{d,\tau},\overline{\dmat{Q}}_{\dmat{A},\tau},\tau) \le \mathcal{L}^*_{\mathrm{CT}}(D_{c},\tau)$, we will prove the inequality \begin{align}
        \frac{1}{\tau}\theta^{-1}\left( R(D_{d,\tau},\overline{\dmat{Q}}_{\dmat{A},\tau},\tau)\right) \le \mathcal{L}^*_{\mathrm{DT}}(D_{d,\tau},\overline{\dmat{Q}}_{\dmat{A},\tau},\tau). 
    \end{align} Via the definition of $\mathcal{L}^*_{\mathrm{DT}}(D,\dmat{Q},\tau)$ in (\ref{eq:optim_first_simp}) and Lemma \ref{lemm:mutual_information_lb}'s (\ref{eq:converse}) we have 
\begin{multline}
    \mathcal{L}^*_{\mathrm{DT}}(D,\dmat{Q},\tau) \ge\\  \left\{\begin{aligned} &\inf_{\mathbb{P}_{\mathrm{E}}} \limsup_{K\rightarrow \infty} \frac{1}{\tau}  \theta^{-1}\left(\frac{1}{K}I(\bx(0:K-1);\hat{\bx}(0:K-1))\right)
    \\ &\st \limsup_{K\rightarrow \infty} \frac{1}{K}\sum_{k=0}^{K-1} \mathbb{E}[\lVert \rvec{e}(k)\rVert_\dmat{Q}^2] \le D. \end{aligned} \right.
\end{multline} Since $\theta^{-1}$ is increasing and continuous, we have
\begin{multline}\label{eq:optim_first_simpprime}
    \mathcal{L}^*_{\mathrm{DT}}(D,\dmat{Q},\tau) \ge\\ \frac{1}{\tau}\theta^{-1}\left( \begin{aligned} &\inf_{\mathbb{P}_{\mathrm{R}}} \limsup_{K\rightarrow \infty} \frac{1}{K}I(\bx(0:K-1);\hat{\bx}(0:K-1))
    \\ &\st \limsup_{K\rightarrow \infty} \frac{1}{K}\sum_{k=0}^{K-1} \mathbb{E}[\lVert \rvec{e}(k)\rVert_\dmat{Q}^2] \le D. \end{aligned}\right)   
\end{multline}
Note that the right hand optimization (\ref{eq:optim_first_simpprime}) depends only on the $\rvec{x}(k)$ and $\rvec{\hat{x}}(k)$, and that with respect to (\ref{eq:optim_first_simpprime}), the policy space $\mathbb{P}_{\mathrm{R}}$ is richer than $\mathbb{P}_{\mathrm{E}}$ as it includes all causal kernels from $\rvec{x}(k)$ to $\rvec{\hat{x}}(k)$ that can be induced by $\mathbb{P}_{\mathrm{E}}$. Comparing the right-hand optimization in (\ref{eq:optim_first_simpprime}) to the definition of $R(D,\dmat{Q},\tau)$ in (\ref{eq:rdf_midef}), it is clear that  $\mathcal{L}^*_{\mathrm{DT}}(D,\dmat{Q},\tau)\ge \frac{1}{\tau}\theta^{-1}(R(D,\dmat{Q},\tau))$. 
\end{proof} In the next subsections, we will establish Theorem \ref{thm:keybounds_abbrev} via relaxing Theorem \ref{thm:firstbound}. The main utility of this is that, given Lemma \ref{lemm:ct2sdp}, we only need to solve a single semidefinite program (\ref{eq:the_ct_ti_sdp}) to derive a lower on (\ref{eq:initialOptimization}) for all values of the sampling period $\tau$. This is in stark contrast to the bound in Theorem \ref{thm:firstbound}, which requires solving a separate semidefinite program, depending on the discretized state-space model, for each $\tau$. The proof of (\ref{eq:uselemmacont}) in Theorem \ref{thm:keybounds_abbrev} will follow upon demonstrating that (\ref{eq:rate_info_c}) lower bounds the time-normalized DT rate distortion function $\frac{1}{\tau}R(D_{d,\tau},\overline{\dmat{Q}}_{\dmat{A},\tau},\tau)$ (cf. (\ref{eq:RDF_disc_tc_vec_explicit})) for any $\tau$. This makes intuitive sense, given that (\ref{eq:RDF_disc_tc_vec_explicit}) allows one to optimize over a richer space of CT linear policies as opposed to (\ref{eq:RDF_disc_tc_vec_explicit})'s DT policies. In the next section, we will explore the structural properties of (\ref{eq:rate_info_c}) before we establish a connection between the CT optimization (\ref{eq:rate_info_c}) and the DT optimization (\ref{eq:rdf_midef}). 

\subsection{Convexifying (\ref{eq:rate_info_c}) and establishing Lemma \ref{lemm:ct2sdp} }\label{subsec:convexproofs}
While the optimization (\ref{eq:rate_info_c}) has the advantage of not depending on $\tau$, it is not apparent that it can be computed. We now establish Lemma \ref{lemm:ct2sdp}, which states that
(\ref{eq:rate_info_c}) can be computed via a semidefinite program.

Consider the Riccati differential equation (\ref{eq:riccati}) and the expressions for the CT mutual information (\ref{eq:mi}). Using (\ref{eq:riccati}) and applying the cyclic property of the trace several times, irrespective of the choice of $\dmat{C}_{t}\in \mathcal{F}_{\mathbb{R}\rightarrow \mathbb{R}^{m\times m}}$,  we have 
\begin{align}\label{eq:mipcvx}
  \Tr(\dmat{C}_{t} \dmat{X}_{t} \dmat{C}_{t}^\top) =  2  \Tr(\dmat{A}) -  \Tr(\dmat{X}_{t}^{-1}\dot{\dmat{X}}_{t}) +  \Tr(\dmat{B}^{\top}\dmat{X}_{t}^{-1}\dmat{B}).
\end{align} Let $\dmat{Y}_{t} = \dmat{B}^{\top}\dmat{X}_{t}^{-1}\dmat{B}$ and  $a = 2  \Tr(A)$, and recall  that $\frac{d}{dt}\ln(\det(\dmat{X}_{t})) =   \Tr(\dmat{X}^{-1}_{t}\dot{\dmat{X}_{t}})$ \cite[(43)]{cookbook}. Then, integrating (\ref{eq:mipcvx}) from $t=0$ to $T$ gives 
\begin{align}\label{eq:duncingbooth}
     I(\bx_{[0,T]};\hat{\bx}_{[0,T]}) = \frac{\left(Ta - \ln\left(\frac{\det(\dmat{X}_{T})}{\det(\dmat{X}_{0})}\right)+\int_{0}^{T}  \Tr\left(\dmat{Y}_{t}\right)dt \right)}{2\ln(2)}.
\end{align} Let $\tilde{\dmat{X}}_{t} = \dmat{A}_{t}\dmat{X}_{0}\dmat{A}^{\top}_{t}+ \dmat{B}_{t}\dmat{B}_{t}^{\top}$ be the solution to (\ref{eq:riccati}) when $\dmat{C}_{t} =0$ for all $t$. We have for all $T$,  
$\tilde{\dmat{X}}_{T} \succeq \dmat{X}_{T}$, where since $\dmat{A}$ is Hurwitz, $\dmat{R} =\lim_{T\rightarrow \infty} \tilde{\dmat{X}}_{T}$ is well-defined with $\dmat{R}\succ 0$. Thus for any choice of sensing policy
$ -\frac{1}{T}\ln({\det(\dmat{X}_{T})/\det(\dmat{X}_{0})}) 
\ge -\frac{1}{T}\ln({\det(\tilde{\dmat{X}}_{T})/\det(\dmat{X}_{0})})$,
 however $\lim_{T\rightarrow \infty }-\frac{1}{T}\ln\left(\frac{\det(\tilde{\dmat{X}}_{T})}{\det(\dmat{X}_{0})}\right) = 0$.
Thus
\begin{align}
     \underset{T\rightarrow\infty}{\lim \sup} \frac{1}{T}I(\bx_{[0,T]};\hat{\bx}_{[0,T]}) \ge  \frac{\left(a +  \underset{T\rightarrow\infty}{\lim \sup}\frac{1}{T}\int_{0}^{T}  \Tr\left(\dmat{Y}_{t}\right)dt \right)}{2\ln(2)}.\label{eq:misimpl}
\end{align} This observation enables a convexification of (\ref{eq:rate_info_c}). We have the following intermediate result.
\begin{lemma}\label{lemm:convexification1}
    We have that $\underline{\mathcal{I}}^c(D) \le \mathcal{I}^c(D) $ where
    \begin{subequations}
\label{eq:pre_sdp}
\begin{align}
\underline{\mathcal{I}}^c(D) = \inf_{X_{t} \succ 0, Y_{t}\succeq 0} \quad &  \frac{a+ \underset{T\rightarrow\infty}{\lim \sup}\frac{1}{T}\int_{0}^{T}  \Tr\left(\dmat{Y}_{t}\right)dt}{2\ln(2)}\label{eq:traceobjective} \\
    \st& \dmat{X}_t\text{ differentiable a.e. } \\
    &\quad \dot{\dmat{X}}_t  \overset{\mathrm{a.e.}}{\preceq} \dmat{A}\dmat{X}_{t}+\dmat{X}_{t}\dmat{A}^\top+\dmat{B}\dmat{B}^\top,\nonumber \\
    &\quad \dmat{X}_{0}=\Sigma_{0} \label{eq:deineq} \\
    & \begin{bmatrix}
    \dmat{Y}_{t} & \dmat{B}^{\top} \\ \dmat{B} & \dmat{X}_{t}
    \end{bmatrix}\succeq 0 \label{eq:shurinequality}\\
    & \limsup_{T\rightarrow\infty }\frac{1}{T}\int_0^T  \Tr(\dmat{X}_{t}) dt \le D.\label{eq:matrixse}
\end{align}
\end{subequations}
\end{lemma}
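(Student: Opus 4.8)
The plan is to prove this inequality by a direct, feasibility-preserving construction: I show that every $\dmat{C}_t$ feasible for the optimization defining $\mathcal{I}^c(D)$ in (\ref{eq:rate_info_c}) induces a point $(\dmat{X}_t,\dmat{Y}_t)$ feasible for the optimization defining $\underline{\mathcal{I}}^c(D)$ in (\ref{eq:pre_sdp}) whose objective value is no larger, so that the infimum in (\ref{eq:pre_sdp}) is bounded above by the infimum in (\ref{eq:rate_info_c}). If no feasible $\dmat{C}_t$ exists then $\mathcal{I}^c(D)=+\infty$ and there is nothing to prove.

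So I fix a feasible $\dmat{C}_t\in\mathcal{F}_{\mathbb{R}\rightarrow\mathbb{R}^{m\times m}}$. By the definition of $\mathcal{F}_{\mathbb{R}\rightarrow\mathbb{R}^{m\times m}}$ there is an absolutely continuous $\dmat{X}_t\succ 0$ with $\dmat{X}_0=\Sigma_0$ solving the Riccati equation (\ref{eq:riccati}) almost everywhere; I take this as the candidate $\dmat{X}_t$ and set $\dmat{Y}_t=\dmat{B}^\top\dmat{X}_t^{-1}\dmat{B}$, which is continuous (hence locally integrable, so the integrals in (\ref{eq:pre_sdp}) are well defined) since $\dmat{X}_t$ is continuous and positive definite. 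Feasibility is then checked termwise: differentiability a.e. and $\dmat{X}_0=\Sigma_0$ are immediate; discarding the nonnegative term $\dmat{X}_t\dmat{C}_t^\top\dmat{C}_t\dmat{X}_t$ from (\ref{eq:riccati}) yields the differential inequality (\ref{eq:deineq}) a.e.; since $\dmat{X}_t\succ 0$ and $\dmat{Y}_t-\dmat{B}^\top\dmat{X}_t^{-1}\dmat{B}=0\succeq 0$, the Schur-complement criterion gives the block inequality (\ref{eq:shurinequality}), in particular $\dmat{Y}_t\succeq 0$; and combining the identity (\ref{eq:mseX}) with the distortion constraint (\ref{eq:secost}) gives the averaged trace constraint (\ref{eq:matrixse}). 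Hence $(\dmat{X}_t,\dmat{Y}_t)$ is feasible for (\ref{eq:pre_sdp}).

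To compare objectives, note that because $\dmat{Y}_t=\dmat{B}^\top\dmat{X}_t^{-1}\dmat{B}$ by construction, the bound (\ref{eq:misimpl}) — already established in the text above — states exactly that the objective of (\ref{eq:pre_sdp}) at $(\dmat{X}_t,\dmat{Y}_t)$ is at most the objective of (\ref{eq:rate_info_c}) at $\dmat{C}_t$, i.e. $\underline{\mathcal{I}}^c(D)\le\limsup_{T\rightarrow\infty}\frac{1}{T}I(\bx_{[0,T]};\hat{\bx}_{[0,T]})$ for this $\dmat{C}_t$. Taking the infimum over all feasible $\dmat{C}_t$ gives $\underline{\mathcal{I}}^c(D)\le\mathcal{I}^c(D)$.

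I do not expect any real obstacle here: all the analytic content, in particular the trace identity (\ref{eq:mipcvx})/(\ref{eq:duncingbooth}) and the monotonicity argument that discards the $-\frac{1}{T}\ln(\det\dmat{X}_T/\det\dmat{X}_0)$ term to obtain (\ref{eq:misimpl}), has already been carried out before the lemma. The only point requiring care is the feasibility bookkeeping — confirming that (\ref{eq:deineq}) and (\ref{eq:shurinequality}) are genuine relaxations of, respectively, the Riccati equation and the relation $\dmat{Y}_t=\dmat{B}^\top\dmat{X}_t^{-1}\dmat{B}$, and that local integrability of the constructed $\dmat{Y}_t$ causes no trouble. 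Since the claim is a one-sided relaxation, no converse construction is needed.
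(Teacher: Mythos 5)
Your proposal is correct and follows essentially the same route as the paper's proof: given a feasible $\dmat{C}_t$, take the Riccati-induced $\dmat{X}_t$, set $\dmat{Y}_t=\dmat{B}^\top\dmat{X}_t^{-1}\dmat{B}$, verify feasibility in (\ref{eq:pre_sdp}) (dropping the quadratic term for (\ref{eq:deineq}), Schur complement for (\ref{eq:shurinequality}), and (\ref{eq:mseX}) for (\ref{eq:matrixse})), and invoke (\ref{eq:misimpl}) to compare objectives before taking the infimum. Your explicit handling of the infeasible case and the integrability of $\dmat{Y}_t$ are minor, harmless additions.
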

\begin{proof} Any trajectory $\dmat{C}_{t}\in \mathcal{F}_{\mathbb{R}\rightarrow \mathbb{R}^{m\times m}} $ in (\ref{eq:rate_info_c}) uniquely specifies $\dmat{X}_{t}$ via the initial value problem (\ref{eq:riccati}). By the definition of $\mathcal{F}_{\mathbb{R}\rightarrow \mathbb{R}^{m\times m}}$, $\dmat{X}_{t}\succ 0$. Define $ \dmat{Y}_{t}=\dmat{B}^{\top}\dmat{X}_{t}^{-1}\dmat{B}$. We will show that $(\dmat{X}_{t},\dmat{Y}_{t})$ is a feasible point in (\ref{eq:pre_sdp}) and attains the same value of the objective in  (\ref{eq:rate_info_c}). By (\ref{eq:riccati}), the induced $\dmat{X}_{t}$ satisfies (\ref{eq:deineq}) and the inequality constraint that proceeds it in (\ref{eq:pre_sdp}). By Shur complements, (\ref{eq:shurinequality}) is equivalent to 
\begin{align}\label{eq:shurequation}
    \dmat{Y}_{t} \succeq \dmat{B}^{\top}\dmat{X}_{t}^{-1}\dmat{B},
\end{align} which is satisfied under the prescribed choice of $\dmat{Y}_{t}$. The distortion constraint (\ref{eq:matrixse}) is identical to  (\ref{eq:secost}); and a feasible choice of $\dmat{C}_{t}$ in (\ref{eq:rate_info_c}) ensures that the induced  $\dmat{X}_{t}$ satisfies (\ref{eq:matrixse}). Thus, under this choice of $(\dmat{X}_{t},\dmat{Y}_{t})$, the mutual information in (\ref{eq:rate_info_c}) is at least as large as the right hand side of (\ref{eq:misimpl}), which is the objective function in (\ref{eq:pre_sdp}). Thus since any choice $\dmat{C}_{t}\in\mathcal{F}_{\mathbb{R}\rightarrow \mathbb{R}^{m\times m}}$ in (\ref{eq:rate_info_c}) produces a feasible $(\dmat{X}_{t},\dmat{Y}_{t})$ in (\ref{eq:pre_sdp}) that attains a lower bound on the cost that the $\dmat{C}_{t}\in\mathcal{F}_{\mathbb{R}\rightarrow \mathbb{R}^{m\times m}}$ attains in (\ref{eq:rate_info_c}), we have that (\ref{eq:pre_sdp}) lower bounds (\ref{eq:rate_info_c}). 
\end{proof} Lemma \ref{lemm:convexification1} shows that  (\ref{eq:rate_info_c}) has a convex lower bound. The lower bound itself, however, remains an optimization over an infinite dimensional space, and furthermore the tightness of $\underline{\mathcal{I}}^c(D)$ with respect to ${\mathcal{I}}^c(D)$ remains unclear. In the proof of Lemma \ref{lemm:ct2sdp}, we demonstrate that, in fact, ${\mathcal{I}}^c(D)=\underline{\mathcal{I}}^c(D)$, and furthermore one can compute ${\mathcal{I}}^c(D)$ via a semidefinite program. This will follow from a demonstration that one can achieve the minimum in (\ref{eq:rate_info_c}) with a time-invariant sensor gain. This leads to Lemma \ref{lemm:ct2sdp}, which we presently prove. The result follows analogously to a similar, DT result in \cite{tanaka2015semidefinite}. 
\begin{proof}[Proof of Lemma \ref{lemm:ct2sdp}]
    From Lemma \ref{lemm:convexification1}, we have that $\underline{\mathcal{I}}^c(D) \le {\mathcal{I}}^c(D)$  where $\underline{\mathcal{I}}^c(D) $ is as given in (\ref{eq:pre_sdp}). Since (\ref{eq:shurinequality}) is equivalent to (\ref{eq:shurequation}) given $\dmat{X}_{t}\succ 0$ and the objective (\ref{eq:traceobjective}) is monotonic in $\dmat{Y}_{t}$, we have that (\ref{eq:pre_sdp}) is equivalent to 
       \begin{subequations}
\label{eq:pre_sdp2}
\begin{align}
\underline{\mathcal{I}}^c(D)  = \inf_{X_{t} \succ 0} \quad &  \frac{a+ \underset{T\rightarrow\infty}{\lim \sup}\frac{1}{T}\int_{0}^{T}  \Tr\left(\dmat{B}^{\top}\dmat{X}^{-1}_{t}\dmat{B}\right)dt}{2\ln(2)}\label{eq:traceobjective2} \\
    \st& \dmat{X}_t\text{ differentiable a.e. } \\&\quad \dot{\dmat{X}}_t  \overset{\mathrm{a.e.}}{\preceq} \dmat{A}\dmat{X}_{t}+\dmat{X}_{t}\dmat{A}^\top+\dmat{B}\dmat{B}^\top, \label{eq:depineq2}\\
    &\quad \dmat{X}_{0}=\Sigma_{0} \label{eq:deineq2} \\
    & \limsup_{T\rightarrow\infty }\frac{1}{T}\int_0^T   \Tr(\dmat{X}_{t}) dt \le D.\label{eq:matrixse2}
\end{align}
\end{subequations} Assume for now that $\dmat{X}^{*}_{t}$ is the feasible minimizer of (\ref{eq:pre_sdp2}). We will consider the case that the infimum is not achieved by a feasible point later. Define the partial time average via
\begin{align}
    \overline{\dmat{X}}_{T}= \frac{1}{T}\int_{0}^{T}\dmat{X}^{*}_{t}dt,
\end{align} which depends on the horizon $T$. $ \overline{\dmat{X}}_{T}$ is a net (with respect to $T\in\mathbb{R}^{+}$). Integrating the equation preceding (\ref{eq:deineq2}) gives
\begin{multline}
    \frac{\dmat{X}^{*}_{T}-\dmat{X}^{*}_{0}}{T}  \preceq \dmat{A}\overline{\dmat{X}}_{T}+\overline{\dmat{X}}_{T}\dmat{A}^{\top}+\dmat{B}\dmat{B}^{\top} \Rightarrow \\ \frac{-1}{T}\dmat{\Sigma}^{*}_{0} \preceq \dmat{A}\overline{\dmat{X}}_{T}+\overline{\dmat{X}}_{T}\dmat{A}^{\top}+\dmat{B}\dmat{B}^{\top},
\end{multline} where the relaxation follows from the fact that $\dmat{X}^{*}_{0}=\dmat{\Sigma}^{*}_{0}$ and $\dmat{X}^{*}_{T}\succeq 0$. Thus for any $\epsilon>0$, there exists a $T'_{\epsilon}$ such that 
\begin{align}
    -\frac{\epsilon}{T}\dmat{I} \preceq \dmat{A}\overline{\dmat{X}}_{T}+\overline{\dmat{X}}_{T}\dmat{A}^{\top}+\dmat{B}\dmat{B}^{\top}\text{ }\forall\text{ }T\ge T'_{\epsilon}. 
\end{align} Likewise, by (\ref{eq:matrixse2}), for any $\epsilon>0$ there exists $T''_{\epsilon}$ such that 
\begin{IEEEeqnarray}{rCl}
         \Tr\left( \overline{\dmat{X}}_{T}\right) &=& \frac{1}{T}\int_0^T   \Tr(\dmat{X}_{t}) dt \\ &\le& D+\epsilon\text{ for all }T\ge T''_{\epsilon}.
\end{IEEEeqnarray}  Denote the compact set $\mathcal{C}_{\epsilon} = \left\{\vphantom{-\frac{\epsilon}{T}\dmat{I} }   \dmat{X}\succeq 0 :   \Tr\left(\dmat{X}\right)\le D+\epsilon, \right. \\ \left.-\frac{\epsilon}{T}\dmat{I} \preceq \dmat{A}\dmat{X}+\dmat{X}\dmat{A}^{\top}+\dmat{B}\dmat{B}^{\top} \right\}$ and define $T_{\epsilon} = \max(T'_{\epsilon}, T''_{\epsilon})$. For all $T\ge T_{\epsilon}$, $\overline{\dmat{X}}_{T}\in \mathcal{C}_{\epsilon} $. Fix some $\epsilon>0$. Since $\mathcal{C}_{\epsilon}$ is compact and $\overline{\dmat{X}}_{T}\in \mathcal{C}_{\epsilon}$ for all $T$ sufficiently large, $\overline{\dmat{X}}_{T}$ has a cluster point in $\mathcal{C}_{\epsilon}$, e.g. there exists $\overline{\dmat{X}}^{*}\in \mathcal{C}_{\epsilon}$, such if $U$ is any open neighborhood of $\overline{\dmat{X}}^{*}$, for any $T_{0}\in\mathbb{R}^{+}$ there exists a $T>T_{0}$ such that $\overline{\dmat{X}}_{T}\in U$. We claim that the cluster point necessarily has  $\overline{\dmat{X}}^{*} \in \underset{{N\in\mathbb{N}}}{\cap}\mathcal{C}_{\frac{1}{N}}$ since for any $N$, there exists $J$ such that $\overline{\dmat{X}}_{T}$ is contained in $\mathcal{C}_{\frac{1}{N}}$ for $T\ge J$. Thus, we claim that $\overline{\dmat{X}}^{*}$ satisfies 
\begin{IEEEeqnarray}{rCl}
0 &\preceq& \dmat{A}\overline{\dmat{X}}^{*}+\overline{\dmat{X}}^{*}\dmat{A}^{\top}+\dmat{B}\dmat{B}^{\top}\text{ and }  \Tr(\overline{\dmat{X}}^{*})\le D.
\end{IEEEeqnarray}
Note finally that for any $T$, by Jensen's inequality and the convexity of $  \Tr(\dmat{B}^{\top}\dmat{X}^{-1}\dmat{B})$ we have
\begin{align}
      \Tr(\dmat{B}^{\top}\overline{\dmat{X}}_{T}^{-1}\dmat{B})\le \frac{1}{T}\int_{0}^{T}  \Tr(\dmat{B}^{\top}\left({\dmat{X}}^{*}_{t}\right)^{-1}\dmat{B})dt
\end{align} and since $\overline{\dmat{X}}^{*}$ is a cluster point of $\overline{\dmat{X}}_{T}$ we have
\begin{IEEEeqnarray}{rCl}
      \Tr(\dmat{B}^{\top}(\overline{\dmat{X}}^{*})^{-1}\dmat{B})&\le & \limsup_{T\rightarrow \infty}   \Tr(\dmat{B}^{\top}(\overline{\dmat{X}}_{T})^{-1}\dmat{B})\\ &\le & \limsup_{T\rightarrow \infty} \frac{1}{T}\int_{0}^{T}  \Tr(\dmat{B}^{\top}\left({\dmat{X}}^{*}_{t}\right)^{-1}\dmat{B})dt\nonumber 
\end{IEEEeqnarray} Thus we claim that if (\ref{eq:pre_sdp2}) is minimized by a feasible point, (\ref{eq:pre_sdp2}) is lower bounded by the following finite dimensional optimization
       \begin{subequations}
\label{eq:pre_sdp3}
\begin{align}
\underline{\underline{\mathcal{I}}}^c(D) = \inf_{\dmat{X} \succeq 0} \quad &  \frac{a+   \Tr\left(\dmat{B}^{\top}\dmat{X}^{-1}\dmat{B}\right)}{2\ln(2)}\label{eq:traceobjective3} \\
    \st& 0  \preceq \dmat{A}\dmat{X}+\dmat{X}\dmat{A}^\top+\dmat{B}\dmat{B}^\top,\\
    &   \Tr(\dmat{X}) \le D.\label{eq:matrixse3}
\end{align}
\end{subequations} 
In the case that the infimum in (\ref{eq:pre_sdp2}) is not achieved by a feasible point, there must exist a sequence of feasible trajectories that achieve a value in the objective (\ref{eq:traceobjective2}) arbitrarily close to the infimum. One can demonstrate that (\ref{eq:pre_sdp3}) lower bounds the value of (\ref{eq:traceobjective2}) achieved by each element of this sequence. In other words, we have $\underline{\underline{\mathcal{I}}}^c(D)\le \underline{\mathcal{I}}^c(D) \le  {\mathcal{I}}^c(D)$. Note that in (\ref{eq:pre_sdp3}), $\dmat{X} \succ 0$ holds without loss of generality since the objective function (\ref{eq:traceobjective3}) tends to infinity for near-singular $\dmat{X}$. These results have a useful interpretation.  Let $\tilde{\dmat{X}}_{t}$ be a feasible choice of $\dmat{X}_{t}$ in (\ref{eq:pre_sdp2}) such that the infinite-horizon time-average $\lim_{T\rightarrow \infty}\int_{0}^{T}\tilde{\dmat{X}}_{t}dt/T = \overline{\dmat{X}}$ is well defined. Replacing $\dmat{X}_{t}=\overline{\dmat{X}}$ for all $t$ will results in a reduced mutual information cost (\ref{eq:traceobjective2}) with respect to that incurred by $\dmat{X}_{t}=\tilde{\dmat{X}}_{t}$, will attain the same value of the distortion constraint (\ref{eq:matrixse2}), and will satisfy (\ref{eq:depineq2}). The constrained initial condition (\ref{eq:deineq2}), however, will not necessarily be met.  We can think of (\ref{eq:pre_sdp3}) as searching for the best ``time-invariant" choice of $\dmat{X}_{t}$ that satisfies the constraints of (\ref{eq:pre_sdp2}), neglecting that initial condition. By \Cesaro means, choosing $\dmat{X}_{t}$ to be asymptotically equal to $\overline{\dmat{X}}$ will incur the same rate and distortion performance as the time-invariant solution. In the next paragraph, we will demonstrate how to design a trajectory $\dmat{X}_{t}$ that satisfies both the initial condition (\ref{eq:deineq2}) and (\ref{eq:depineq2}) and also has $\lim_{t\rightarrow\infty}\dmat{X}_{t} = \overline{\dmat{X}}$. We do this via solving (\ref{eq:riccati}) under a particular, time-invariant choice of measurement matrix $\dmat{C}_{t}$. 

We now demonstrate that in fact, ${\mathcal{I}}^c(D)=\underline{\underline{\mathcal{I}}}^c(D)$. To see this, we will prove that ${\mathcal{I}}^c(D) \le \underline{\underline{\mathcal{I}}}^c(D) $ by returning to the CT viewpoint in (\ref{eq:rate_info_c}). Let $\dmat{X}^{*}$ be the minimizing $\dmat{X}$ in (\ref{eq:pre_sdp3}), and let $\dmat{C}$ be any matrix that satisfies
\begin{align}\label{eq:tisensordef}
    \dmat{C}^{\top}\dmat{C} = (\dmat{X}^{*})^{-1}\left(\dmat{A}\dmat{X}^{*}+\dmat{X}^{*}\dmat{A}^\top+\dmat{B}\dmat{B}^\top\right)(\dmat{X}^{*})^{-1}.
\end{align} Note that the right hand side of (\ref{eq:tisensordef}) is positive semidefinite. Consider the policy $\dmat{C}_{t} = \dmat{C}$ for all $t$ in (\ref{eq:rate_info_c}). This choice of $\dmat{C}_{t}$ is locally integrable, and thus  $\dmat{C}_{t}\in  \mathcal{F}_{\mathbb{R}\rightarrow \mathbb{R}^{m\times m}}$. We will now demonstrate that this choice of $\dmat{C}_{t}$ is also feasible with respect to the distortion constraint (\ref{eq:secost}), and that it achieves a cost in (\ref{eq:micost}) that is equal to $\underline{\underline{\mathcal{I}}}^c(D)$. This will follow from a proof that under this choice of $\dmat{C}_{t}$, $\lim_{t\rightarrow \infty } \dmat{X}_{t} = \dmat{X}^{*}$.

Under this choice of $\dmat{C}_{t}$, the error covariance $\dmat{X}_{t}=\mathbb{E}\|\bx_t-\hat{\bx}_t\|^2$ satisfies the initial value problem 
\begin{multline}
\dot{\dmat{X}}_t=\dmat{A}\dmat{X}_{t}+\dmat{X}_{t}\dmat{A}^\top+\dmat{B}\dmat{B}^\top-\\ \dmat{X}_{t}\left( (\dmat{X}^{*})^{-1}\left(\dmat{A}\dmat{X}^{*}+\dmat{X}^{*}\dmat{A}^\top+\dmat{B}\dmat{B}^\top\right)(\dmat{X}^{*})^{-1} \right) \dmat{X}_{t}.
\end{multline} with $\dmat{X}_0=\dmat{\Sigma}_{0}$ and $\dmat{\Sigma}_{0}\succ 0$. Note that $\dot{\dmat{X}}_{t}=0$ when $\dmat{X}_{t} = \dmat{X}^{*}$. Using \cite[Section 3]{potterReport}, it can be shown that $\dmat{X}_{t} = \dmat{X}^{*}$ is the unique positive semidefinite solution of $\dmat{A}\dmat{X}_{t}+\dmat{X}_{t}\dmat{A}^\top+\dmat{B}\dmat{B}^\top-\\ \dmat{X}_{t}\left( (\dmat{X}^{*})^{-1}\left(\dmat{A}\dmat{X}^{*}+\dmat{X}^{*}\dmat{A}^\top+\dmat{B}\dmat{B}^\top\right)(\dmat{X}^{*})^{-1} \right) \dmat{X}_{t} = 0 $, and, in fact, given that $\Sigma_{0}\succ 0$, we have that $\lim_{t\rightarrow \infty } \dmat{X}_{t} = \dmat{X}^{*}$. To demonstrate this via the results in  \cite[Section 3]{potterReport}, we need only verify that \cite[pgs. 12, 25]{potterReport} (1)$ \dmat{B}\dmat{B}^\top\succeq 0$ and $\dmat{C}^{\top}\dmat{C}\succeq 0$, where $\dmat{C}^{\top}\dmat{C}$ is as in (\ref{eq:tisensordef}), (2) no (right) eigenvector of $\dmat{A}^{\top}$ with nonnegative real part is a null vector of $\dmat{B}\dmat{B}^{\top}$, and (3) that no (right) eigenvector of $\dmat{A}$ with nonnegative real part is a null vector of $\dmat{C}^{\top}\dmat{C}$. The first of these requirements is immediate via the optimization problem (\ref{eq:pre_sdp3}). The second follows since by assumption, $\dmat{B}\dmat{B}^{\top} \succ 0$. To prove the last requirement, assume that for $\dvec{v}\in\mathbb{R}^{m}$ we have $\dmat{A}\dvec{v} = \lambda\dvec{v}$ with $\mathcal{R}(\lambda)\ge 0$. We prove that $\dvec{v}$ is not a null vector of $\dmat{C}^{\top}\dmat{C}$ by contradiction. From (\ref{eq:tisensordef}), if $\dvec{v}^{\top}\dmat{C}^{\top}\dmat{C}\dvec{v} = 0$ then 
\begin{multline}\label{eq:tiproofbycontra}
\dvec{v}^{\top}(\dmat{X}^{*})^{-1}\dmat{B}\dmat{B}^\top(\dmat{X}^{*})^{-1}\dvec{v} =\\ -\dvec{v}^{\top}\left((\dmat{X}^{*})^{-1}\dmat{A}+\dmat{A}^\top(\dmat{X}^{*})^{-1}\right)\dvec{v}. 
\end{multline} Note that the left hand side of (\ref{eq:tiproofbycontra}) is strictly positive, since both $\dmat{B}\dmat{B}^\top \succ 0$ and $\dmat{X}^{*}\succ 0$. However, by assumption, the right hand side is equal to $-\dvec{v}^{\top}\left((\dmat{X}^{*})^{-1}\dmat{A}+\dmat{A}^\top(\dmat{X}^{*})^{-1}\right)\dvec{v} = -2\mathcal{R}(\lambda)\dvec{v}^{\top}(\dmat{X}^{*})^{-1}\dvec{v}$
which is nonpositive, since by assumption $\mathcal{R}(\lambda)\ge 0$. Thus (\ref{eq:tiproofbycontra}) is a contradiction; no (right) eigenvector of $\dmat{A}$ with nonnegative real part is a null vector of $\dmat{C}^{\top}\dmat{C}$. Thus $\lim_{t\rightarrow \infty } \dmat{X}_{t} = \dmat{X}^{*}$ \cite[pg. 25]{potterReport}. Via the \Cesaro mean, we have
\begin{IEEEeqnarray}{rCl}
    \lim_{T\rightarrow \infty}\frac{1}{T}\int_{0}^{T}   \Tr(\dmat{X}_{t})dt &=&    \Tr(\dmat{X}^{*})dt \\ &\le& D\label{eq:feasibilityinti}
\end{IEEEeqnarray} where (\ref{eq:feasibilityinti}) follows from the fact that $\dmat{X}=\dmat{X}^{*}$ is feasible with respect to (\ref{eq:matrixse3}) in (\ref{eq:pre_sdp3}). Consider the expression for the mutual information on the right-hand-side of (\ref{eq:duncingbooth}) (where by definition $\dmat{Y}_{t} = \dmat{B}^{\top}\dmat{X}_{t}^{-1}\dmat{B}$). Since $\lim_{t\rightarrow \infty } \dmat{X}_{t} = \dmat{X}^{*}$, we have
\begin{multline}
    \underset{T\rightarrow\infty}{\limsup} \frac{\left(- \ln\left(\frac{\det(\dmat{X}_{T})}{\det(\dmat{X}_{0})}\right)+\int_{0}^{T}  \Tr\left(\dmat{B}^{\top}\dmat{X}_{t}^{-1}\dmat{B}\right)dt \right)}{2\ln(2)T} =\\ \frac{ \Tr\left(\dmat{B}^{\top}(\dmat{X}^{*})^{-1}\dmat{B}\right) }{2\ln(2)T},
\end{multline} which follows via the \Cesaro mean. Thus (given (\ref{eq:duncingbooth}))
\begin{IEEEeqnarray}{rCl}
    \underset{T\rightarrow\infty}{\lim} \frac{1}{T}I(\bx_{[0,T]};\hat{\bx}_{[0,T]}) =  \frac{a +  \Tr\left(\dmat{B}^{\top}\left(\dmat{X}^{*}\right)^{-1}\dmat{B}\right)}{2\ln(2)}\label{eq:usecmeanandoptimality}.
\end{IEEEeqnarray} Note that (\ref{eq:usecmeanandoptimality}) is exactly the value of $\underline{\underline{\mathcal{I}}}^c(D)$ (i.e. (\ref{eq:usecmeanandoptimality}) is equal to (\ref{eq:traceobjective3}) at the minimizer $\dmat{X}= \dmat{X}^{*}$). Thus,  choosing $\dmat{C}_{t} = \dmat{C}$ with $\dmat{C}$ as given in (\ref{eq:tisensordef}) is a feasible policy in (\ref{eq:rate_info_c}) that attains an objective value equal to $\underline{\underline{\mathcal{I}}}^c(D)$. Thus,  ${\mathcal{I}}^c(D)\le \underline{\underline{\mathcal{I}}}^c(D)$. Since by our comments above (\ref{eq:pre_sdp3}), $ \underline{\underline{\mathcal{I}}}^c(D)\le\underline{\mathcal{I}}^c(D)\le  {\mathcal{I}}^c(D)$, it must be that  ${\mathcal{I}}^c(D)=\underline{\mathcal{I}}^c(D)=\underline{\underline{\mathcal{I}}}^c(D)$.

It remains to write (\ref{eq:pre_sdp3}) as a standard form semidefinite optimization. In analogy to our work in the proof of Lemma \ref{lemm:convexification1} (cf. around (\ref{eq:shurequation})) we have that (\ref{eq:pre_sdp3}) is equivalent to (\ref{eq:the_ct_ti_sdp}). 
\end{proof}
While we can now compute (\ref{eq:rate_info_c}) via Lemma \ref{lemm:ct2sdp}, we have yet to establish an operational interpretation of  (\ref{eq:rate_info_c}). This is the subject of the next section, where we establish Theorem \ref{thm:keybounds_abbrev}. 
\subsection{Proof of Theorem \ref{thm:keybounds_abbrev}}\label{subsec:ctfinalproofs}
Theorem \ref{thm:keybounds_abbrev} is an almost immediate consequence of the following result, which explicitly establishes a relationship between the DT rate-distortion function (\ref{eq:rdf_midef}) (or, equivalently (\ref{eq:RDF_disc_tc_vec_explicit})) and the CT information-distortion function (\ref{eq:rate_info_c}).  
\begin{lemma}\label{lemm:continuousrdflb}
Assume a CT distortion constraint $D_{c}>0$ and any sampling interval $\tau>0$ sufficiently small such that  $D_{d,\tau} = D_{c}\tau-\overline{b_{\tau}}$ has $D_{d,\tau}>0$ (c.f. Lemma \ref{lemm:dimplication_vec}). We have
\begin{align}\label{eq:rdf_to_continuous}
\frac{1}{\tau}R(D_{d,\tau},\overline{\dmat{Q}}_{\dmat{A},\tau},\tau) \ge \mathcal{I}^c(D_{c}).
\end{align}
\label{lemma:newlb}
\end{lemma}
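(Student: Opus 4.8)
The plan is to derive \eqref{eq:rdf_to_continuous} by constructing, from an optimal solution of the discrete-time log-determinant program \eqref{eq:RDF_disc_tc_vec_explicit} at parameters $(D_{d,\tau},\overline{\dmat{Q}}_{\dmat{A},\tau},\tau)$, a \emph{feasible} point of the continuous-time program \eqref{eq:the_ct_ti_sdp} at parameter $D_{c}$ whose objective value is no larger than $\tfrac1\tau R(D_{d,\tau},\overline{\dmat{Q}}_{\dmat{A},\tau},\tau)$; combined with Lemma \ref{lemm:ct2sdp}, which identifies \eqref{eq:the_ct_ti_sdp} with $\mathcal{I}^{c}(D_{c})$, this gives the claim. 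Let $\dmat{P}^{*}\succ 0$ be the minimizer of \eqref{eq:RDF_disc_tc_vec_explicit} (when the infimum is not attained, the same argument is run along a minimizing sequence, as in the proof of Lemma \ref{lemm:ct2sdp}), and set $\dmat{P}^{*}_{+}=\dmat{A}_{\tau}\dmat{P}^{*}\dmat{A}_{\tau}^{\top}+\dmat{B}_{\tau}\dmat{B}_{\tau}^{\top}$, so that $R(D_{d,\tau},\overline{\dmat{Q}}_{\dmat{A},\tau},\tau)=\tfrac12\log_{2}(\det\dmat{P}^{*}_{+}/\det\dmat{P}^{*})$ by \eqref{eq:rdf_explicit_at_minimum}. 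The central device is the ``open-loop interpolant'': let $\dmat{X}_{t}$ for $t\in[0,\tau]$ solve $\dot{\dmat{X}}_{t}=\dmat{A}\dmat{X}_{t}+\dmat{X}_{t}\dmat{A}^{\top}+\dmat{B}\dmat{B}^{\top}$ with $\dmat{X}_{0}=\dmat{P}^{*}$, i.e. \eqref{eq:riccati} with the sensor switched off ($\dmat{C}_{t}\equiv 0$). Then $\dmat{X}_{t}=e^{\dmat{A}t}\dmat{P}^{*}e^{\dmat{A}^{\top}t}+\dmat{B}_{t}\dmat{B}_{t}^{\top}\succ 0$ on $[0,\tau]$ and $\dmat{X}_{\tau}=\dmat{P}^{*}_{+}$. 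The candidate continuous-time point is the time average $\overline{\dmat{X}}=\tfrac1\tau\int_{0}^{\tau}\dmat{X}_{t}\,dt\succ 0$ together with $\dmat{Y}=\dmat{B}^{\top}\overline{\dmat{X}}^{-1}\dmat{B}$.

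Next I would verify feasibility of $(\overline{\dmat{X}},\dmat{Y})$ in \eqref{eq:the_ct_ti_sdp}. Integrating the open-loop equation over $[0,\tau]$ gives $(\dmat{P}^{*}_{+}-\dmat{P}^{*})/\tau=\dmat{A}\overline{\dmat{X}}+\overline{\dmat{X}}\dmat{A}^{\top}+\dmat{B}\dmat{B}^{\top}$, and the discrete feasibility constraint $\dmat{P}^{*}\preceq\dmat{A}_{\tau}\dmat{P}^{*}\dmat{A}_{\tau}^{\top}+\dmat{B}_{\tau}\dmat{B}_{\tau}^{\top}=\dmat{P}^{*}_{+}$ shows the right-hand side is positive semidefinite, which is exactly the Lyapunov-type constraint of \eqref{eq:the_ct_ti_sdp}. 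For the distortion constraint, writing $\Tr(\dmat{X}_{t})=\Tr(e^{\dmat{A}^{\top}t}e^{\dmat{A}t}\dmat{P}^{*})+\Tr(\dmat{B}_{t}\dmat{B}_{t}^{\top})$, integrating over $[0,\tau]$, and using the definitions of $\overline{\dmat{Q}}_{\dmat{A},\tau}$ in \eqref{eq:avl} and of $\overline{b_{\tau}}$, I get $\int_{0}^{\tau}\Tr(\dmat{X}_{t})\,dt=\Tr(\overline{\dmat{Q}}_{\dmat{A},\tau}\dmat{P}^{*})+\overline{b_{\tau}}\le D_{d,\tau}+\overline{b_{\tau}}=D_{c}\tau$, i.e. $\Tr(\overline{\dmat{X}})\le D_{c}$, where the inequality is precisely the discrete distortion constraint $\Tr(\overline{\dmat{Q}}_{\dmat{A},\tau}\dmat{P}^{*})\le D_{d,\tau}$. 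The Schur-complement constraint holds by the very choice $\dmat{Y}=\dmat{B}^{\top}\overline{\dmat{X}}^{-1}\dmat{B}$.

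Finally I would compare objectives. Exactly as in the manipulation leading to \eqref{eq:mipcvx}, $\tfrac{d}{dt}\ln\det\dmat{X}_{t}=\Tr(\dmat{X}_{t}^{-1}\dot{\dmat{X}}_{t})=2\Tr(\dmat{A})+\Tr(\dmat{B}^{\top}\dmat{X}_{t}^{-1}\dmat{B})$, so integrating over $[0,\tau]$ and dividing by $\tau$ yields
\[
\frac1\tau R(D_{d,\tau},\overline{\dmat{Q}}_{\dmat{A},\tau},\tau)=\frac{\ln(\det\dmat{P}^{*}_{+}/\det\dmat{P}^{*})}{2\tau\ln 2}=\frac{1}{2\ln 2}\left(a+\frac1\tau\int_{0}^{\tau}\Tr(\dmat{B}^{\top}\dmat{X}_{t}^{-1}\dmat{B})\,dt\right),
\]
with $a=2\Tr(\dmat{A})$. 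By operator convexity of $\dmat{X}\mapsto\dmat{X}^{-1}$ (equivalently, Jensen's inequality for the convex map $\dmat{X}\mapsto\Tr(\dmat{B}^{\top}\dmat{X}^{-1}\dmat{B})$, as already used in the proof of Lemma \ref{lemm:ct2sdp}), $\tfrac1\tau\int_{0}^{\tau}\Tr(\dmat{B}^{\top}\dmat{X}_{t}^{-1}\dmat{B})\,dt\ge\Tr(\dmat{B}^{\top}\overline{\dmat{X}}^{-1}\dmat{B})=\Tr(\dmat{Y})$, so the right-hand side above is at least $\tfrac{1}{2\ln 2}(a+\Tr(\dmat{Y}))$, which is precisely the objective \eqref{eq:traceobjective4} evaluated at the feasible point $(\overline{\dmat{X}},\dmat{Y})$; hence $\tfrac1\tau R(D_{d,\tau},\overline{\dmat{Q}}_{\dmat{A},\tau},\tau)\ge\mathcal{I}^{c}(D_{c})$ by Lemma \ref{lemm:ct2sdp}. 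I expect the only real obstacle to be identifying the open-loop interpolant as the correct bridge between the two programs; once it is in hand, each constraint of \eqref{eq:the_ct_ti_sdp} is inherited from a single constraint of \eqref{eq:RDF_disc_tc_vec_explicit}, the $\log\det$-ratio objective telescopes via the log-derivative identity into exactly the continuous-time mutual-information integrand, and the averaging step supplies the last inequality. The non-attainment case is a routine adaptation to minimizing sequences, as in Lemma \ref{lemm:ct2sdp}.
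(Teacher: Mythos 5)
Your proposal is correct, and it reaches the inequality by a genuinely different route than the paper. The paper argues operationally: it perturbs the DT SDP (\ref{eq:RDF_disc_tc_vec_explicit}) with an $\varepsilon$-strict constraint, takes the resulting minimizer $\dmat{P}^*$ as an (artificial) initial covariance, and builds an explicit $\tau$-periodic sensing policy $\dmat{C}_t$ in (\ref{eq:rate_info_c}) --- no observation on $[0,\tau-\Delta)$, then a gain $\dmat{H}_t$ forcing the covariance linearly back down to $\dmat{P}^*$ on $[\tau-\Delta,\tau)$ --- and evaluates its distortion and mutual information in the limits $\Delta\to 0$ and $\varepsilon\to 0$, finally invoking the time-invariant-policy construction from Lemma \ref{lemm:ct2sdp} to remove the initial-condition assumption. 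You instead compare the two convex programs directly: the unobserved Riccati flow from $\dmat{P}^*$ to $\dmat{P}^*_+$ is used only as an algebraic interpolant whose time average $\overline{\dmat{X}}$ (with $\dmat{Y}=\dmat{B}^\top\overline{\dmat{X}}^{-1}\dmat{B}$) is feasible for (\ref{eq:the_ct_ti_sdp}) at distortion $D_c$ --- each CT constraint being inherited from one DT constraint, with the distortion bookkeeping exactly matching $D_{d,\tau}=D_c\tau-\overline{b_\tau}$ --- while the log-determinant identity plus Jensen's inequality for $\dmat{X}\mapsto\Tr(\dmat{B}^\top\dmat{X}^{-1}\dmat{B})$ shows the CT objective at $(\overline{\dmat{X}},\dmat{Y})$ is at most $\tfrac1\tau R$; Lemma \ref{lemm:ct2sdp} then converts feasibility into the bound on $\mathcal{I}^c(D_c)$. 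Your argument is shorter and avoids the perturbed SDP, the construction of $\dmat{H}_t$, and both limiting passages, at the price of leaning entirely on the SDP characterization in Lemma \ref{lemm:ct2sdp} (you only need the direction $\mathcal{I}^c(D)\le$ SDP value, which is the part of that lemma proved via the time-invariant policy); the paper's proof, while heavier, additionally exhibits a concrete CT sensing policy that emulates the DT sampling scheme, which has some independent interpretive value. Your treatment of the non-attainment case is indeed routine, with one small point worth making explicit: for a merely feasible $(\dmat{P},\dmat{\Pi})$ the identity (\ref{eq:rdf_explicit_at_minimum}) is replaced by the inequality $-\tfrac12\log_2\bigl(\det\dmat{\Pi}/\det(\dmat{B}_\tau\dmat{B}_\tau^\top)\bigr)\ge\tfrac12\log_2\bigl(\det\dmat{P}_+/\det\dmat{P}\bigr)$, which follows from the LMI by a Schur-complement/determinant argument and is exactly what the minimizing-sequence version of your proof needs.
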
\begin{proof}
The strategy is to find a CT policy whose information-distortion function is arbitrarily close in performance to the equivalent DT rate-distortion function for a given $\tau$. Conceptually, the CT policy we will design is $\tau$-periodic where the first segment of each period consists of no observation and the second segment of each period is selected to force the error covariance along a linear trajectory that achieves the desired periodicity. The length of the second segment is then made sufficiently small to approach the behavior of the DT policy. We design the periodic policy assuming an artificial initial condition, and then use Lemma \ref{lemm:ct2sdp} to argue that a (CT) time-invariant policy exists that achieves (at worst) the same rate/distortion performance for any initial condition $\dmat{X}_0\succ 0$. 

Given $\varepsilon>0$, consider the SDP (\ref{eq:RDF_disc_tc_vec_explicit}), but with the following more strict constraint: 
\begin{equation}\label{eq:strict_SDP_con}
    \dmat{P} \preceq \dmat{A}_\tau \dmat{P} \dmat{A}_\tau^\top+\dmat{B}_\tau \dmat{B}_\tau^\top - \varepsilon I.
\end{equation} 
The optimal solution of this perturbed SDP is denoted $R(D,\dmat{Q},\tau,\varepsilon)$. Let $\dmat{P}^* \succ 0$ be the minimizing $\dmat{P}$ for the SDP corresponding to $R(D_{d,\tau},\overline{\dmat{Q}}_{\dmat{A},\tau},\tau,\varepsilon)$. Assume for now that $\dmat{X}_0 = \dmat{P}^*$. In the absence of observation ($\dmat{C}_t = 0$), we have that the solution to (\ref{eq:riccati}) is given by $\dmat{X}_t = \Tilde{\dmat{X}}_t := \dmat{A}_t \dmat{P}^* \dmat{A}_t^\top + \dmat{B}_t\dmat{B}_t^\top$. By (\ref{eq:strict_SDP_con}), we have that $\dmat{P}^* \prec \Tilde{\dmat{X}}_\tau$. Using this fact and the continuity of $\Tilde{\dmat{X}}_t$, we have that there exists a $\Delta > 0$ small enough such that for this initial condition and unobserved policy, $\dmat{P}^* \prec \Tilde{\dmat{X}}_{\tau - \Delta}$.

If $\tau-\Delta$ is chosen as the end of the unobserved segment of the CT policy, then the desired time-linear trajectory for the second segment, $t\in [\tau-\Delta, \tau)$, is given by
\begin{equation}\label{eq:des_lin_traj}
    \dmat{F}_t := \Tilde{\dmat{X}}_{\tau-\Delta} + \dmat{M}(t-(\tau-\Delta)), 
\end{equation}
where $\dmat{M} := (\dmat{P}^* - \Tilde{\dmat{X}}_{\tau-\Delta})/\Delta \prec 0$. Note that for $t\in [\tau-\Delta, \tau)$, $0\prec \dmat{P}^*\prec \dmat{F}_t \prec \Tilde{\dmat{X}}_\tau$, so $\dmat{F}_t^{-1}$ exists and $\dmat{F}_t$ is bounded on this interval. Thus, the following matrix is well-defined for $t\in [\tau-\Delta, \tau)$:
\begin{equation}\label{eq:policy_squared}
    \dmat{G}_t := -\dmat{F}_t^{-1}(\dmat{M} - \dmat{A}\dmat{F}_t - \dmat{F}_t\dmat{A}^\top - \dmat{B}\dmat{B}^\top)\dmat{F}_t^{-1}.
\end{equation}
$\dmat{F}_t$ being bounded on this interval implies $\Vert \dmat{A}\dmat{F}_t + \dmat{F}_t\dmat{A}^\top + \dmat{B}\dmat{B}^\top\Vert$ is bounded. Furthermore, $\lim_{\Delta \rightarrow 0} \Tilde{\dmat{X}}_{\tau-\Delta} = \Tilde{\dmat{X}}_\tau$, so 
$\Delta$ can be chosen such that $\Vert \dmat{M} \Vert$ is arbitrarily large. Combining this with the fact that $\dmat{M} \prec 0$ for small $\Delta$, we can choose $\Delta$ small enough such that $\dmat{M} - \dmat{A}\dmat{F}_t - \dmat{F}_t\dmat{A}^\top - \dmat{B}\dmat{B}^\top \prec 0$, resulting in $\dmat{G}_t \succ 0$ for $t\in [\tau-\Delta, \tau)$. 

$\dmat{G}_t$ is real and symmetric over this interval, so we may apply the spectral theorem and the fact that $\dmat{G}_t \succ 0$ to conclude that there exists a matrix $\dmat{H}_t \in \mathbb{R}^{n\times n}$ such that $\dmat{G}_t = \dmat{H}_t^\top\dmat{H}_t$. The preceding argument shows that for any small $\varepsilon > 0$ there is a sufficiently small $\Delta > 0$ such that the following policy is well-defined and real over for $t \in [0,\tau)$:
\begin{equation}\label{eq:linearizing_policy}
    \dmat{C}_t = \begin{cases}
        0, & t \in [0,\tau-\Delta) \\
        \dmat{H}_t, & t \in [\tau-\Delta, \tau).
    \end{cases}
\end{equation}
This policy produces the following error covariance solution:
\begin{equation}\label{eq:err_cov_sol}
    \dmat{X}_t = \begin{cases}
        \Tilde{\dmat{X}}_t, & t \in [0,\tau-\Delta) \\
        \dmat{F}_t, & t \in [\tau-\Delta, \tau).
    \end{cases}
\end{equation} 

\begin{figure}[h]
    \centering \includegraphics[width=0.48\textwidth]{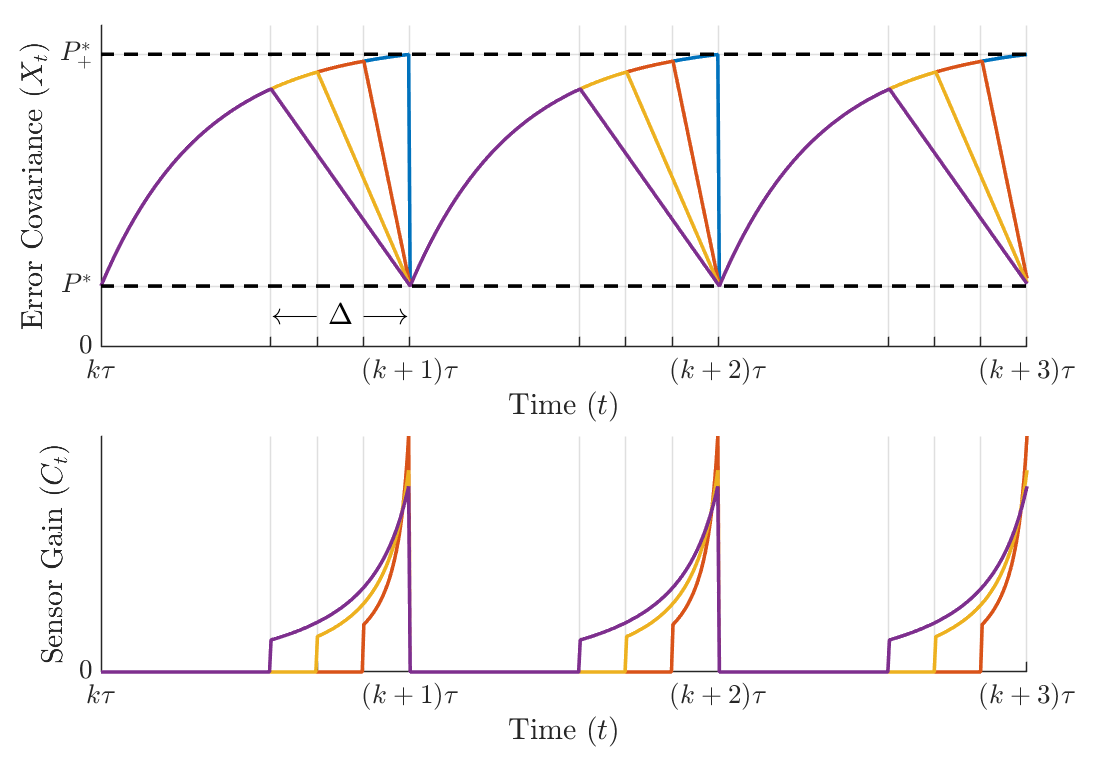}
    \caption{Diagram of the proposed CT observation policy in (\ref{eq:linearizing_policy}) and the corresponding covariance in (\ref{eq:err_cov_sol}) for some $D_{d,\tau}$ and $\varepsilon$ in the 1-dimensional case. The plots depict the change in the policy as $\Delta$ becomes smaller (purple, yellow, red, blue, respectively).}
    \label{fig:ctrdlb_obs_scheme}
\end{figure}
By (\ref{eq:mseX}), the mean squared error is given by
\begin{IEEEeqnarray}{rCl} 
    \IEEEeqnarraymulticol{3}{l}{\limsup_{T\rightarrow \infty} \frac{1}{T} \int_0^T   \Tr(\dmat{X}_t) dt} \nonumber\\ 
    &=&  \frac{1}{\tau} \left[ \int_0^{\tau-\Delta}   \Tr(\Tilde{\dmat{X}}_t) dt + \int_{\tau-\Delta}^\tau   \Tr(\dmat{F}_t) dt\right]  \\
    &=& \frac{1}{\tau} \left[ \int_0^{\tau-\Delta}   \Tr(\Tilde{\dmat{X}}_t) dt + \frac{\Delta}{2}  \Tr(\dmat{P}^*+\Tilde{\dmat{X}}_{\tau-\Delta})\right]  \label{eq:prelim_mse}
\end{IEEEeqnarray}
In the limit of (\ref{eq:prelim_mse}) as $\Delta \rightarrow 0$, the MSE is given by
\begin{align}
    \frac{1}{\tau}\int_0^\tau   \Tr (\Tilde{\dmat{X}}_t) dt &= \frac{1}{\tau}\left(  \Tr(\overline{\dmat{Q}}_{\dmat{A},\tau}\dmat{P}^{*})+ \overline{b}_{\tau} \right)\\
    &\leq \frac{1}{\tau}\left( D_{d,\tau}+\overline{b}_{\tau}\right) = D_c,
\end{align}
where the inequality follows from the fact that $\dmat{P^*}$ is the minimizing $\dmat{P}$ for the SDP corresponding to $R(D_{d,\tau},\overline{\dmat{Q}}_{\dmat{A},\tau},\tau,\varepsilon)$.

We now analyze the mutual information of this policy. Via (\ref{eq:mi}), (\ref{eq:mipcvx}), and the logic leading to (\ref{eq:duncingbooth}) we have that the mutual information for this policy is given by the following expression scaled by a factor of $(2\ln 2)^{-1}$ 
\begin{multline} 
    \limsup_{T\rightarrow \infty} \frac{1}{T} \int_0^T   \Tr(\dmat{C}_t\dmat{X}_t\dmat{C}_t^\top) dt  
    =\\ \frac{1}{\tau}\left[ \ln\left( \frac{\det(\Tilde{\dmat{X}}_{\tau-\Delta})}{\det(\dmat{P}^*)} \right) + \int_{\tau-\Delta}^\tau   \Tr(2\dmat{A}+\dmat{F}_t^{-1}\dmat{B}\dmat{B}^{\top}) dt\right] \label{eq:prelim_mi}
\end{multline}
For any $\Delta$, we have $\dmat{P}^*\prec \dmat{F}_t \prec \Tilde{\dmat{X}}_\tau$ for all $t \in [\tau-\Delta,\tau)$, so $  \Tr(2\dmat{A}+\dmat{F}_t^{-1}\dmat{B}\dmat{B}^{\top})$ is bounded by some constant for all $\Delta$, so the integral in (\ref{eq:prelim_mi}) goes to 0 as $\Delta \rightarrow 0$. Thus, in this limit, we find that the mutual information is
\begin{equation}
    \frac{1}{\tau}\frac{1}{2}\log_2\left(\frac{\det(\Tilde{\dmat{X}}_{\tau})}{\det(\dmat{P}^*)}\right) = \frac{1}{\tau}R(D_{d,\tau},\overline{\dmat{Q}}_{\dmat{A},\tau},\tau,\varepsilon),
\end{equation} where we used (\ref{eq:rdf_explicit_at_minimum}) and the fact that $\Tilde{\dmat{X}}_{\tau} = \dmat{P}^{*}_{+}$ as defined there. The unperturbed SDP (\ref{eq:RDF_disc_tc_vec_explicit}) is convex, so the perturbed value function, $R(D_{d,\tau},\overline{\dmat{Q}}_{\dmat{A},\tau},\tau,\varepsilon)$, is a convex function of $\varepsilon$ \cite[\S 5.6.1]{boyd2004convex} and is therefore continuous with respect to $\varepsilon$. Thus, the function can be made arbitrarily close to $R(D_{d,\tau},\overline{\dmat{Q}}_{\dmat{A},\tau},\tau) = R(D_{d,\tau},\overline{\dmat{Q}}_{\dmat{A},\tau},\tau,0)$ by choosing sufficiently small $\varepsilon$.

We will now eliminate the requirement that $\dmat{X}_0= \dmat{P}^*$. Let $\overline{\dmat{X}} = \frac{1}{\tau}\int_{0}^{\tau}\dmat{X}_{s}ds = \lim_{T\rightarrow \infty}\frac{1}{T}\int_{0}^{T}\dmat{X}_{s}ds$, where $\dmat{X}_{s}$ is the periodic solution in (\ref{eq:err_cov_sol}) we derived assuming $\dmat{X}_0= \dmat{P}^*$. Since $\dmat{P}^*\succ 0$,  $\overline{\dmat{X}}\succ 0$. Let $\dmat{C}^{\top}\dmat{C} = (\overline{\dmat{X}})^{-1}\left(\dmat{A}\overline{\dmat{X}}+\overline{\dmat{X}}\dmat{A}^\top+\dmat{B}\dmat{B}^\top\right)(\overline{\dmat{X}})^{-1}$. Following from the proof of Lemma \ref{lemm:ct2sdp}, it can be shown that choosing $\dmat{C}_{t} =\dmat{C}$ for all $t$ results in a policy that achieves a lower value of the mutual information objective (\ref{eq:micost}) than that of the periodic policy while attaining the same distortion cost (\ref{eq:secost}). This holds for any initial condition $\dmat{X}_{0}\succ 0$;  under this time-invariant policy, $\dmat{X}_{t}$ tends to $\overline{\dmat{X}}$ exponentially fast.
\end{proof}
Given Theorem \ref{thm:firstbound} and Lemma \ref{lemm:continuousrdflb}, Theorem \ref{thm:keybounds_abbrev} is immediate. In the next section, we perform a numerical study to gauge the tightness of the bounds in Theorem \ref{thm:keybounds_abbrev}.

\section{Numerical Experiments}\label{sec:numerical}
In this section, we simulate various encoding policies to explore the tightness of the bound in (\ref{eq:uselemmacont}).

\subsection{Scalar source with A-B scheme} We begin first by comparing the performance achievable by a variation of the {\AA}str\"om-Bernhardsson ({\AA}-B) scheme \cite{astrom2002comparison} to the lower bound in (\ref{eq:uselemmacont}) for scalar systems. Consider a system governed by (\ref{eq:source_c}) with $n = 1$ and $\rvec{x}_0 = 0$. Under the original ({\AA}-B) scheme, at any time $t$ such that $|\rvec{x}_t - \hat{\rvec{x}}_t| = d$ where $d>0$ is a fixed threshold, the encoder sends a bit that indicates the sign of the error and the estimate is updated accordingly. Unlike the ({\AA}-B) scheme which encodes messages continuously, we consider only encoders restricted to sending packets at discrete times. Therefore, we utilize the following modification. Let $\hat{\rvec{x}}(0) = 0$. Given $\hat{\rvec{x}}(k)$ and a fixed threshold $d>0$, update the estimate at time step $k+1$ as follows:
\begin{equation}
    \label{eq:modAB}
    \hat{\rvec{x}}(k+1)=
    \begin{cases}
    \dmat{A}_{\tau}(\hat{\rvec{x}}(k)+d) & \text{ if } \rvec{x}(k)-\hat{\rvec{x}}(k)\geq d \\
    \dmat{A}_{\tau}\hat{\rvec{x}}(k) & \text{ if } |\rvec{x}(k)-\hat{\rvec{x}}(k)| < d \\
    \dmat{A}_{\tau}(\hat{\rvec{x}}(k)-d) & \text{ if } \rvec{x}(k)-\hat{\rvec{x}}(k)\leq -d.
    \end{cases}
\end{equation}

For $n = 1$, the CT information-distortion function is given by:
\begin{equation}
    \label{eq:1d_ctid}
    \mathcal{I}^c(D) = (\log_2 e) \max \left\{ 0, \dmat{A} + \frac{\dmat{B}^2}{2D}\right\}.
\end{equation}
This provides an efficient method to compute the lower bound (\ref{eq:uselemmacont}) for the case of a scalar source. Figure~\ref{fig:ab_compare} compares this lower bound, $\theta^{-1}(\tau \mathcal{I}^c(D))/\tau$, to the simulated performance of the policy in (\ref{eq:modAB}) subject to the dynamics in (\ref{eq:discretizeddynamics_vec}).
\begin{figure}[h]
    \centering \includegraphics[width=0.48\textwidth]{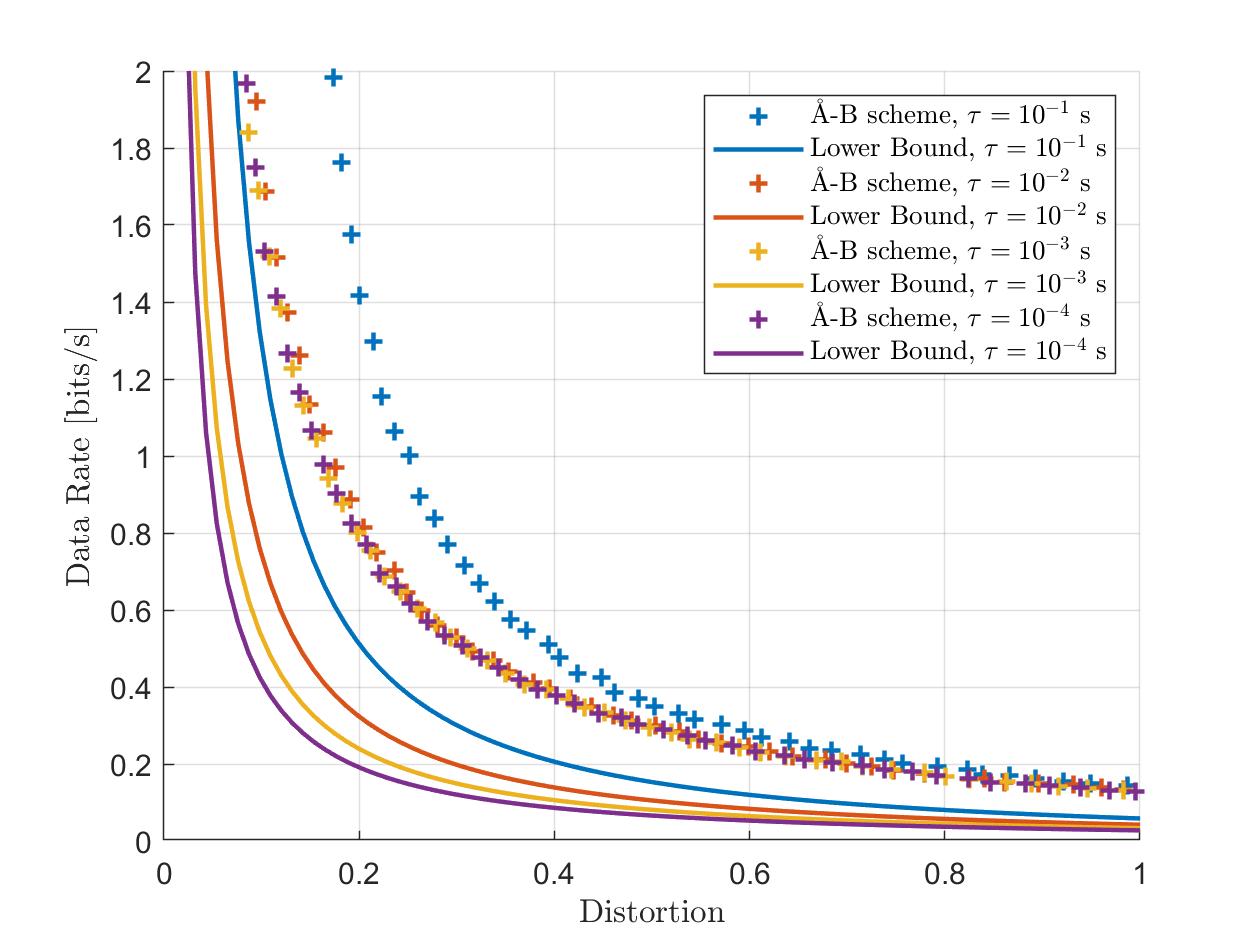}
    \caption{Comparison of the modified \AA-B scheme performance to the lower bound $\theta^{-1}(\tau \mathcal{I}^c(D))/\tau$. For this simulation, the system parameters were chosen to be $\dmat{A} = -0.1$ and $\dmat{B} = 1$. The threshold $d$ was varied between 0.5 and 2.5. Each simulation was run over a time horizon of $10000d^2$.}
    \label{fig:ab_compare}
\end{figure}

A key consequence of \cite{guo2021optimal_IT,guo2021optimal_TAC} is that the CT {\AA}-B approach is optimal for CT (i.e. $\tau\rightarrow 0$) minimum rate tracking with event-based sampling. We conjecture that in the limit of $\tau\searrow 0$ the rate-distortion performance achieved by the DT {\AA}-B scheme recovers that of the original CT version, although we are not aware of a formal proof of this claim. This suggests that for sufficiently small $\tau$ (e.g., $\tau=0.001$), the performance of the {\AA}-B scheme in Fig.~\ref{fig:ab_compare} (1) approximates the optimal continuous-time rate-distortion trade-off attainable by general real-time encoders/decoders and (2) provides relatively tight upper bounds on the rate-distortion tradeoff among systems with a minimum sampling period $\tau$. Note that in Fig. \ref{fig:ab_compare}, the empirical rate attained by the {\AA}-B scheme decreases as the minimum sampling period $\tau$ decreases but there is little improvement after $\tau<10^{-2}$ seconds. For each $\tau$, the function $\theta^{-1}(\tau \mathcal{I}^c(D))/\tau$ indeed provides a lower bound to the rate-distortion function attained by the {\AA}-B scheme. However, the gap between the two widens as $\tau$ tends to zero, and furthermore, the lower bound becomes vacuous in this limit. In the scalar case, one can show explicitly that both the DT and CT bounds give tend to $0$ for all feasible distortions. This suggests that this lower bound most informative at intermediate values of $\tau$, where $\tau$ is small enough to keep the error from diverging but not so small that the bound becomes essentially vacuous.

\subsection{Linear vector source}\label{subsec:vecnum}
There is no direct analog to \cite{astrom2002comparison} in for higher-dimensional sources. To gauge the tightness of our lower bounds for vector plants, we simulated an approach similar to that in \cite{tccdc23} to establish an upper bound. The encoding approach consists of two stages; first, the source is sampled with period $\tau$ and quantized via dithered innovations quantization (DIQ, cf. e.g. \cite{tccdc23,tccjsait}), then the (discrete) quantization is losslessly encoded into a binary codeword. The decoder performs these operations in reverse and updates its estimate with the decoded, dequantized measurement. Note that this quantizer design requires a dither signal, nominally a sequence of uniform random variables, independent of the plant process, that are assumed to be shared by the encoder and decoder. For each $\tau$, we design a quantizer based on the solution to (\ref{eq:RDF_disc_tc_vec_explicit}). Both the encoder and decoder maintain an empirical probability mass function
of the quantizer's output. A codebook maps the realizations in the model to the set of binary strings. At each timestep, the codebook is optimized to map more (empirically) likely realizations to shorter strings. As in \cite{tccdc23}, the support of the quantizer's output is countably infinite, and thus the empirical model cannot include all possible realizations. When a realization occurs that falls outside the model, the ``truncation" procedure is defined using the Elias Omega code \cite{eliasUniversal}. We provide a detailed explanation of this achievability approach in an Appendix which can be found with the simulation source code (and hyperparameter definitions) in \cite{github_appendix}. 

This encoder was simulated for a crude model of the phugoid mode of a Boeing 747-100 flying at Mach 0.5 at an altitude of 20,000 ft. The model was generated using approximations from \cite{drela2014flight} and aircraft data from \cite{etkin1995dynamics}. The simplified system is governed by
\begin{equation}\label{eq:phugoid}
    \left[\begin{matrix}
        d(\Delta \rvec{u}) \\
        d(\Delta \rvec{\theta})
    \end{matrix}\right] = \left[\begin{matrix}
        X_u/m & -g \\
        -Z_u/(mu_0) & 0
    \end{matrix}\right] \left[\begin{matrix}
        \Delta \rvec{u} \\
        \Delta \rvec{\theta}
    \end{matrix}\right] dt +\dmat{B}d\rvec{w}_t,
\end{equation}
where $\Delta \rvec{u}$ is the aircraft's forward airspeed perturbation, $\Delta \rvec{\theta}$ is its pitch perturbation, $m$ is its mass, $u_0$ is its unperturbed forward airspeed, $X_u$ and $Z_u$ are stability derivatives and $g$ is acceleration due to gravity. We simulated this system together with the aforementioned ad hoc coding/quantization approach over a horizon of 10,000 samples. We plot both the DT lower bound from Theorem \ref{thm:firstbound} and the CT bound from \ref{thm:keybounds_abbrev}.

\begin{figure}[h]
    \centering \includegraphics[width=0.48\textwidth]{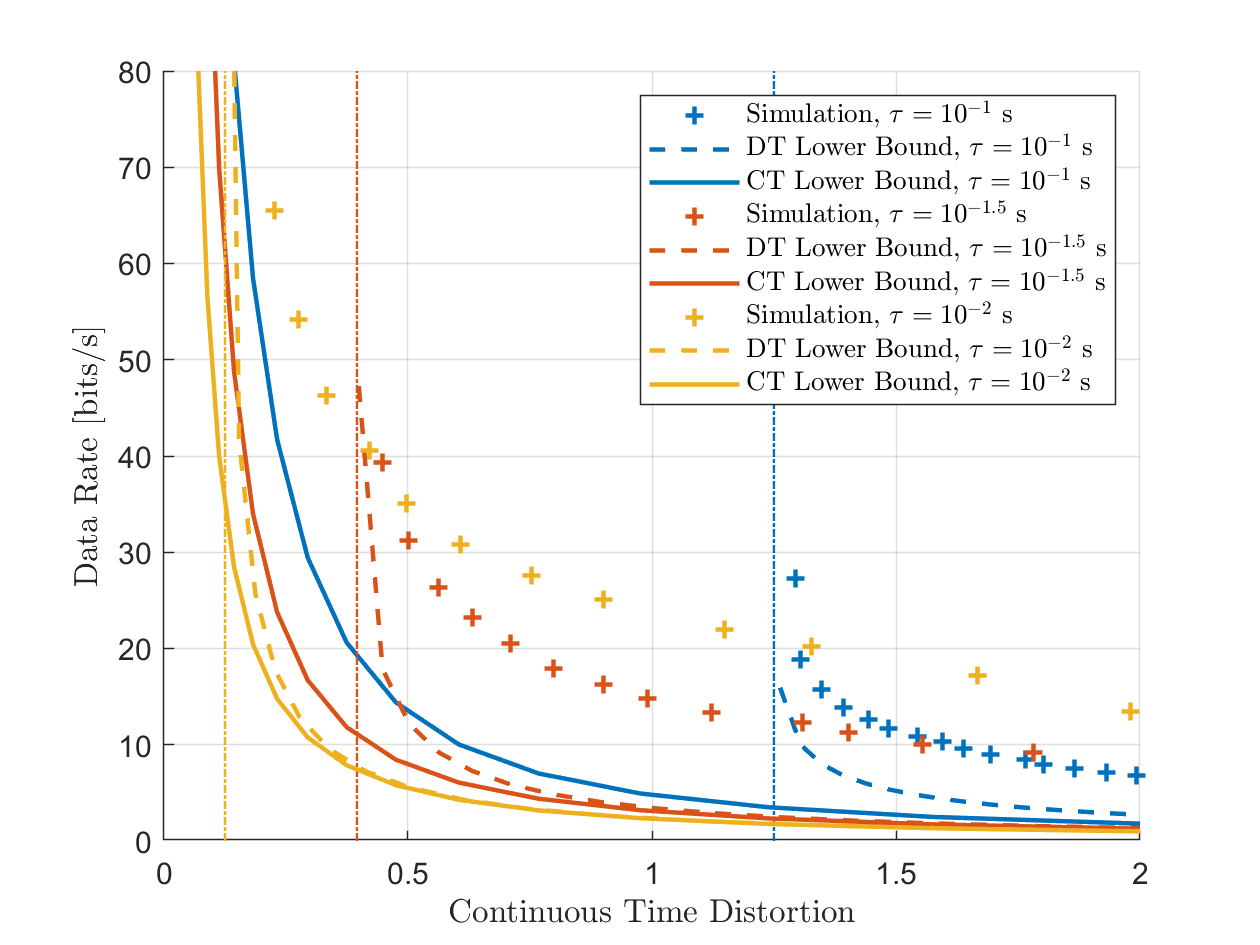}
    \caption{Comparison of the simulated encoder to the CT lower bound $\theta^{-1}(\tau \mathcal{I}^c(D))/\tau$ and DT lower bound $\theta^{-1}( R(D_{d,\tau},\overline{\dmat{Q}}_{\dmat{A},\tau},\tau))/\tau$. }
    \label{fig:b747_sim_RD}
\end{figure}

As expected, the DT lower bound is tighter than the continuous time bound. The best continuous-time distortion achievable by a discrete-time system is $\underline{D_{c}} = \overline{b_{\tau}}/\tau$. This corresponds to noiseless observations of the plant at every $k=k\tau$, and is the critical value of $D_{c}$ that makes $ D_{d,\tau } = 0$ in (\ref{eq:DcDd_vec}). We plot this critical distortion for each $\tau$ with a  vertical dotted line. One can see that the DT lower bound captures this as an asymptote; as $D_{c}$ approaches $\underline{D_{c}}$ from the right, the lower bound on bitrate tends to infinity. The continuous-time lower bound does not capture this behavior; it tracks the discrete-time bound reasonably well for distortions above the critical value, but gives a finite bitrate at the asmyptote. 

Again, there is a significant gap between the performance achieved by the simulated quantizer and source codec and the lower bound. In this case, however, the simulated sampling/compression architecture is not known to be optimal (we suspect it is far from so). Indeed, one will note that for some distortions a system with a higher sampling frequency (lower-$\tau$) gives a higher bitrate than the system that uses a lower one. To an extent, this is to be expected. Let $\rvec{q}_{t}$ denote the quantizers output in the simulated system. The analysis of the quantizer architecture in \cite{tccjsait} suggests that the quantizer's output entropy goes like $H(\rvec{q}_{t})\sim R(D,\dmat{Q},\tau)+1.26n$, where $n$ is the plant dimension. In other words, the quantizer's output entropy has a constant per sample overhead, regardless of the sampling rate.  When the $\rvec{q}_{t}$ are quantized, the bitrate goes like  $\theta^{-1}(H(\rvec{q}_{t}))/\tau$, which goes to infinity as $\tau \rightarrow 0$. The use of a lower sampling rate allows one to amortize this overhead over more time, and can reduce the bitrate (see also \cite{tanaka2016optimalblocklength}). Note that at each distortion, the best upper bound for each sampling period can be read as the lowest empirical curve whose sampling period is at  least target $\tau$. 

\section{Discussion and future work}
While we have derived novel lower bounds on the minimum bitrate required to track a multidimensional plant with an event-based encoding and constrained sampling rate, Section \ref{sec:numerical} illustrated a significant gap between our lower bound and upper bounds on the optimal bitrate. This becomes especially pronounced in the CT $\tau \rightarrow 0$ limit, where both the CT and DT lower bounds from Theorems \ref{thm:keybounds_abbrev} and \ref{thm:firstbound} become vacuous. In the scalar case (i.e. $\dmat{A}\in\mathbb{R}$ we can 
show that (\ref{eq:rdf_midef}) can be written 
\begin{equation}
\label{eq:RDF_scalar}
R(D_{d,\tau },\dmat{Q}_{\dmat{A},\tau},\tau)=\max\left\{0, \frac{1}{2}\log_2\left(\dmat{A}_{\tau}^2+\frac{\dmat{B}_{\tau}^2}{D_{d,\tau }/\dmat{Q}_{\dmat{A},\tau}}\right)\right\}.
\end{equation} Given the expression for $\mathcal{I}^{c}(D_{c})$(\ref{eq:1d_ctid}), we have the following.
\begin{lemma}\label{lemm:comparison}
For every $\dmat{A}\in\mathbb{R}$, $\dmat{A}<0$, $\dmat{B}\in\mathbb{R}$, fixed $D_{c}>0$, and $D_{d,\tau}$ given by (\ref{eq:DcDd_vec}), then $\tau = 0$ is a local minimum of the function $f(\tau) =\frac{R(D_{d,\tau },\dmat{Q}_{\dmat{A},\tau},\tau)}{\tau}$, and  $\lim_{\tau\rightarrow 0}f(\tau) = \mathcal{I}^{c}(D_{c})$.
\end{lemma}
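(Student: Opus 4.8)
The plan is to establish both claims directly from the closed-form scalar formulas (\ref{eq:RDF_scalar}) and (\ref{eq:1d_ctid}): a first-order Taylor expansion at $\tau = 0^{+}$ handles the limit, and the local-minimum claim then follows immediately from Lemma~\ref{lemm:continuousrdflb}. First I would record the scalar specializations: $\dmat{A}_\tau = e^{\dmat{A}\tau}$, $\dmat{Q}_{\dmat{A},\tau} = \overline{\dmat{Q}}_{\dmat{A},\tau} = (e^{2\dmat{A}\tau}-1)/(2\dmat{A})$, the identity $\dmat{B}_\tau^{2} = \dmat{B}^{2}\,\dmat{Q}_{\dmat{A},\tau}$ (immediate from (\ref{eq:ab_dt_vec}) and (\ref{eq:aintegrals})), and $\overline{b_\tau} = \tfrac{\dmat{B}^{2}}{2\dmat{A}}\bigl(\tfrac{e^{2\dmat{A}\tau}-1}{2\dmat{A}}-\tau\bigr)$, so that $D_{d,\tau} = D_c\tau-\overline{b_\tau}$ by (\ref{eq:DcDd_vec}). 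Substituting into (\ref{eq:RDF_scalar}) gives $f(\tau) = \max\bigl\{0,\ \tfrac{1}{2\tau}\log_2 g(\tau)\bigr\}$ with $g(\tau):=\dmat{A}_\tau^{2}+\dmat{B}^{2}\dmat{Q}_{\dmat{A},\tau}^{2}/D_{d,\tau}$, and since by (\ref{eq:1d_ctid}) the target value is $\mathcal{I}^{c}(D_{c}) = (\log_2 e)\max\{0,\mu\}$ with $\mu:=\dmat{A}+\dmat{B}^{2}/(2D_c)$, the whole statement reduces to the behaviour of $g$ near $\tau=0$.

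For the limit I would Taylor-expand. From $e^{2\dmat{A}\tau}-1 = 2\dmat{A}\tau+2\dmat{A}^{2}\tau^{2}+O(\tau^{3})$ one obtains $\dmat{Q}_{\dmat{A},\tau}=\tau+\dmat{A}\tau^{2}+O(\tau^{3})$ and $\overline{b_\tau}=\tfrac12\dmat{B}^{2}\tau^{2}+O(\tau^{3})$, hence $D_{d,\tau}=D_c\tau\bigl(1-\tfrac{\dmat{B}^{2}}{2D_c}\tau+O(\tau^{2})\bigr)$; in particular $D_{d,\tau}>0$ on a one-sided neighborhood of $0$, so $f$ is well defined there. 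Organizing the expansion so that the vanishing denominator appears as $D_c\tau(1+O(\tau))$ before dividing, this yields $\dmat{B}^{2}\dmat{Q}_{\dmat{A},\tau}^{2}/D_{d,\tau}=\tfrac{\dmat{B}^{2}}{D_c}\tau+O(\tau^{2})$, and with $\dmat{A}_\tau^{2}=1+2\dmat{A}\tau+O(\tau^{2})$ we get $g(\tau)=1+2\mu\tau+O(\tau^{2})$. Consequently $\tfrac{1}{2\tau}\log_2 g(\tau)=(\log_2 e)\,\mu+O(\tau)$, and since $\max\{0,\cdot\}$ is continuous, $f(\tau)=\max\{0,(\log_2 e)\mu+O(\tau)\}\to(\log_2 e)\max\{0,\mu\}=\mathcal{I}^{c}(D_c)$; no separate treatment of the boundary case $\mu=0$ is required.

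For the local-minimum claim I would extend $f$ continuously to $\tau=0$ by setting $f(0):=\mathcal{I}^{c}(D_c)$ (legitimate by the limit just established) and then invoke Lemma~\ref{lemm:continuousrdflb}, which in the scalar case reads exactly $f(\tau)=\tfrac1\tau R(D_{d,\tau},\overline{\dmat{Q}}_{\dmat{A},\tau},\tau)\ge \mathcal{I}^{c}(D_c)=f(0)$ for every $\tau>0$ small enough that $D_{d,\tau}>0$. Hence $f(\tau)\ge f(0)$ on a one-sided neighborhood of $0$, so $\tau=0$ is a local (indeed a one-sided global) minimum. One could also attempt a self-contained argument: for $\mu<0$ the expansion already gives $g(\tau)<1$ near $0$, so $f\equiv 0=\mathcal{I}^{c}(D_c)$ there; for $\mu\ge0$ one would need $g(\tau)\ge e^{2\mu\tau}$, and here lies what I expect to be the \emph{main obstacle} --- the $\tau$ and $\tau^{2}$ coefficients of $g$ coincide with those of $e^{2\mu\tau}$, so a direct proof would have to push the expansion to third order and control the sign of the remainder. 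Routing the local-minimum claim through Lemma~\ref{lemm:continuousrdflb} sidesteps this, leaving the first-order expansion of $g$ (with the only real care being that $D_{d,\tau}\to0$ sits in a denominator) as the sole genuine computation.
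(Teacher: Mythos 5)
Your proposal is correct, and it splits naturally into two parts relative to the paper. For the limit $\lim_{\tau\to 0}f(\tau)=\mathcal{I}^{c}(D_{c})$ you do essentially what the paper does — substitute the scalar expressions $\dmat{A}_\tau=e^{\dmat{A}\tau}$, $\overline{\dmat{Q}}_{\dmat{A},\tau}=(e^{2\dmat{A}\tau}-1)/(2\dmat{A})$, $\dmat{B}_\tau^{2}=\dmat{B}^{2}\overline{\dmat{Q}}_{\dmat{A},\tau}$, $\overline{b_\tau}$ into (\ref{eq:RDF_scalar}) and expand — except you carry out the first-order Taylor expansion by hand (your expansions and the handling of the vanishing denominator $D_{d,\tau}=D_c\tau(1+O(\tau))$ check out, and you are right that $\max\{0,\cdot\}$ makes the boundary case $\mu=0$ automatic), whereas the paper delegates the computation to symbolic software. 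For the local-minimum claim your route is genuinely different: the paper again argues via symbolic computation that $\lim_{\tau\to0}f'(\tau)=0$ and $\lim_{\tau\to0}f''(\tau)>0$, while you extend $f$ continuously to $\tau=0$ and invoke Lemma~\ref{lemm:continuousrdflb}, which in the scalar case says exactly $f(\tau)\ge\mathcal{I}^{c}(D_c)=f(0)$ for all small $\tau>0$ with $D_{d,\tau}>0$; there is no circularity, since Lemma~\ref{lemm:continuousrdflb} is proved independently of this lemma. Your route buys a short, software-free, fully rigorous argument (and in fact a one-sided global minimum on a neighborhood), at the cost of delivering only the non-strict minimum and no curvature information; the paper's derivative-limit computation additionally indicates that $f$ is strictly increasing just to the right of $0$, which is more than the lemma claims but is obtained less transparently. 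Your diagnosis of the obstacle to a self-contained direct proof — that the first- and second-order coefficients of $g(\tau)$ match those of $e^{2\mu\tau}$, so one would need a signed third-order remainder estimate — is precisely the content the paper's $\lim f''>0$ computation supplies, so routing around it through Lemma~\ref{lemm:continuousrdflb} is a sensible and valid simplification.
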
 
\begin{proof}
Note that all the arguments of $R(D_{d,\tau },\dmat{Q}_{\dmat{A},\tau},\tau)$ depend on $\tau$. Substituting for these arguments in terms of $\tau$, one can show the limit, that $\underset{\tau\rightarrow 0}{\lim}\frac{d}{d\tau}f(\tau)=0$, and $\underset{\tau\rightarrow 0}{\lim}\frac{d^2}{d\tau^2}f(\tau)>0$ using symbolic software.
\end{proof} While this enables us to verify that the DT bound from Theorem \ref{thm:firstbound} and the CT bound of Theorem \ref{thm:keybounds_abbrev} coincide for small $\tau$, unfortunately both bounds become vacuous, as we now show. \begin{theorem}
\label{thm:vacuousLB}
For $\dmat{A}$, $\dmat{B}$, $D_{c}$, and $D_{d,\tau}$ as in Lemma \ref{lemm:comparison}, we have $\lim_{\tau\rightarrow 0}\frac{1}{\tau}\theta^{-1}(R(D_{d,\tau },\dmat{Q}_{\dmat{A},\tau},\tau)) = 0.$
\end{theorem}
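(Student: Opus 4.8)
The plan is to factor the quantity of interest as a product of two pieces: a ratio $\theta^{-1}(R_\tau)/R_\tau$ that isolates the behaviour of $\theta^{-1}$ near the origin, and the time‑normalized rate $R_\tau/\tau$ whose limit is already pinned down by Lemma~\ref{lemm:comparison}. Here I abbreviate $R_\tau := R(D_{d,\tau},\dmat{Q}_{\dmat{A},\tau},\tau)$. By Lemma~\ref{lemm:comparison}, $R_\tau/\tau \to \mathcal{I}^c(D_c)$ as $\tau\to 0^+$, and by (\ref{eq:1d_ctid}) the limit $\mathcal{I}^c(D_c)$ is finite; hence $R_\tau \to 0$, and $R_\tau/\tau$ is bounded, say by some $M<\infty$, for all sufficiently small $\tau$. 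So the entire burden reduces to showing that $\theta^{-1}(y)/y \to 0$ as $y\to 0^+$.

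For that step I would work directly with the explicit formula $\theta(x)=x+(1+x)\log_2(1+x)-x\log_2 x$. One checks $\theta(0)=0$ (with the convention $0\log_2 0=0$), and for $x\in(0,1)$ both $(1+x)\log_2(1+x)\ge 0$ and $-x\log_2 x\ge 0$, so $\theta(x)\ge x(1-\log_2 x)$, whence $\theta(x)/x\ge 1-\log_2 x\to+\infty$ as $x\to 0^+$. Since $\theta$ is continuous, strictly increasing, and onto $\mathbb{R}^+$, its inverse $\theta^{-1}$ is continuous and strictly increasing with $\theta^{-1}(0)=0$; substituting $x=\theta^{-1}(y)$ gives $\theta^{-1}(y)/y = x/\theta(x)\to 0$ as $y\to 0^+$.

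Finally I would assemble the two pieces. For $\tau$ small enough that $R_\tau>0$ we have $\frac{1}{\tau}\theta^{-1}(R_\tau)=\big(\theta^{-1}(R_\tau)/R_\tau\big)\cdot\big(R_\tau/\tau\big)\le M\,\theta^{-1}(R_\tau)/R_\tau$, which tends to $0$ because $R_\tau\to 0$ forces $\theta^{-1}(R_\tau)/R_\tau\to 0$ by the previous paragraph; and whenever $R_\tau=0$ the expression is simply $\theta^{-1}(0)/\tau=0$. Since $\frac{1}{\tau}\theta^{-1}(R_\tau)\ge 0$ throughout, the squeeze yields $\lim_{\tau\to 0^+}\frac{1}{\tau}\theta^{-1}(R_\tau)=0$.

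The only mildly delicate point — and the one I would spend the most care on — is the behaviour of $\theta$ at the origin: the $-x\log_2 x$ term makes $\theta$ superlinear near $0$, so $\theta^{-1}$ is sublinear there, which is exactly what kills the $1/\tau$ prefactor. Everything else is bookkeeping layered on top of Lemma~\ref{lemm:comparison}; in particular one should note that the argument degenerates gracefully in the trivial regime where $\mathcal{I}^c(D_c)=0$ and $R_\tau$ is eventually $0$, since then the left‑hand side is identically $0$ for small $\tau$.
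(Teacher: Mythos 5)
Your proposal is correct and follows essentially the same route as the paper: factor the quantity into the normalized rate $R_\tau/\tau$, whose limit $\mathcal{I}^c(D_c)$ is finite by Lemma~\ref{lemm:comparison}, times the behaviour of $\theta^{-1}$ near the origin, which vanishes. The only difference is cosmetic — the paper invokes l'H\^{o}pital's rule and the vanishing gradient of $\theta^{-1}$ at $0$, whereas you establish the sublinearity directly via the bound $\theta(x)\ge x(1-\log_2 x)$, which is a slightly more self-contained justification of the same step.
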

\begin{proof}
Applying l'H\^{o}pital's rule, we obtain $\lim_{\tau\rightarrow 0}\frac{1}{\tau}\theta^{-1}(R(D_{d,\tau },\dmat{Q}_{\dmat{A},\tau},\tau))=\lim_{y\rightarrow 0} \frac{d}{dy}\theta^{-1}(y)\cdot \lim_{\tau\rightarrow 0} \frac{1}{\tau} R(D_{d,\tau },\dmat{Q}_{\dmat{A},\tau},\tau)$.
By Lemma~\ref{lemm:comparison}, we have $\lim_{\tau\rightarrow 0} \frac{1}{\tau} R(D_{d,\tau },\dmat{Q}_{\dmat{A},\tau},\tau)=\mathcal{I}^{c}(D_{c})$. 
However, by definition of $\theta(\cdot)$, $\theta^{-1}(\cdot)$ has a vanishing gradient at the origin, i.e., $\lim_{y\rightarrow 0} \frac{d}{dy}\theta^{-1}(y)=0$.
\end{proof} Thus, our bounds become vacuous for small $\tau$ owing to the fact that $\theta^{-1}$ has a vanishing gradient. The development of tighter lower bounds is thus an opportunity for future work. 

The proof of the lower bound in Theorem \ref{thm:firstbound} follows from, essentially, from five uses of the $\le$ symbol. As we already indicated, Prop. \ref{prop:cd2dtprop} is actually an equality. The remaining lower bounds are  Lemma \ref{lemm:mutual_information_lb}'s   (\ref{eq:the_lb_we_relax}),  (\ref{eq:lb_relaxed_once}), (\ref{eq:l1pf_defmi}), and Theorem \ref{thm:firstbound}'s (\ref{eq:optim_first_simpprime}). The first of these, (\ref{eq:the_lb_we_relax}), follows from the well-established tractable lower bound on the bitrate of lossless encoding without prefix constraints (\ref{eq:noprefix}) from \cite{szpankowski2011minimum}. To date, we are not aware of a tractable bound tighter than (\ref{eq:noprefix}). We suspect that the first major ``loosening" of our bound occurs in the application of Jensen's inequality in (\ref{eq:lb_relaxed_once}). The difference between the minimum achievable encoder output entropy on the right hand side of (\ref{eq:l1pf_defmi}) and the rate-distortion function \ref{eq:rdf_midef} has been well studied in the literature \cite{tccjsait,tanaka2017lqg,kostina2019rate}, and is known to be tight. This suggests to us that (\ref{eq:l1pf_defmi}) and (\ref{eq:optim_first_simpprime}) are less promising candidates for tightening.

\bibliographystyle{IEEEtran}
\bibliography{references}

\appendix
\begin{figure*}[h]
    \centering   \includegraphics[width=0.8\textwidth]{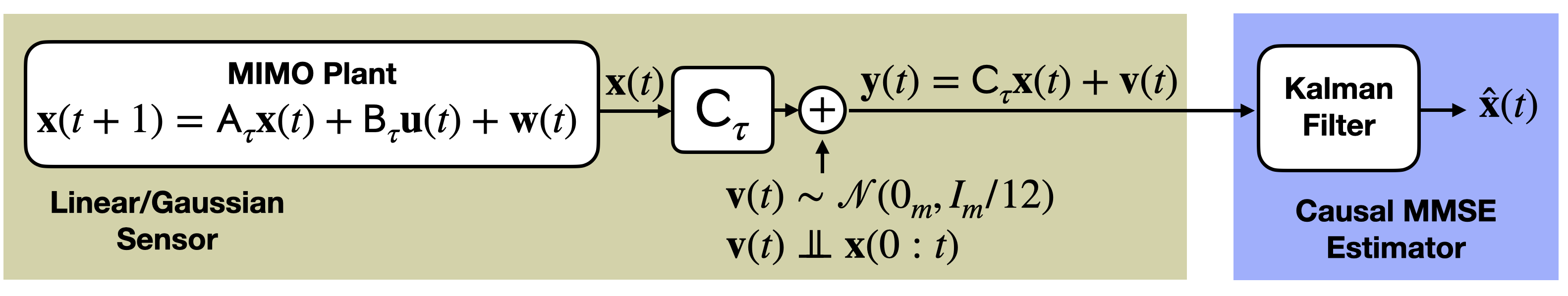}
    \caption{Denote the optimal $\dmat{P}$ and $\dmat{\Pi}$ from (\ref{eq:RDF_disc_tc_vec_explicit}) as $\dmat{P}^{*}$ and $\dmat{\Pi}^*$. It can be shown that the optimal ($\dmat{P}^{*}$, $\dmat{\Pi}^*$) have $\dmat{\Pi}^*= \dmat{A}_{\tau}\dmat{P}^{*}\dmat{A}_{\tau}^{\top}+ \dmat{B}_{\tau}\dmat{B}_{\tau}^{\top}$, and that the  minimum satisfies $R(D,\dmat{Q},\tau) = \frac{1}{2}\log_{2}(\frac{\det \dmat{\Pi}^{*}}{\det\dmat{P}^{*}})$
    \cite{tanaka2016semidefinite}. Let $\dmat{C}_{\tau}$ be any matrix such that $\dmat{C}_{\tau}^{\top}\dmat{C}_{\tau}=(\dmat{P}^{-1}-\dmat{\Pi}^{-1})/12$.
    In the above architecture, given this choice of $\dmat{C}_{\tau}$, it can be shown that the Kalman filter's posterior error covariance has $\lim_{t\rightarrow\infty}\mathbb{E}\left[(\rvec{x}_{t}-\rvec{\hat{x}}_{t})(\rvec{x}_{t}-\rvec{\hat{x}}_{t})^{\top}\right] =\dmat{P}^{*}$. The filter's prediction error covariance tends to $\dmat{\Pi}^{*}$. It can be shown that the reconstruction kernel (\ref{eq:reconkern}) induced by the above architecture attains (\ref{eq:rdf_midef}'s minimum time-average mutual information $\limsup_{K\rightarrow \infty}I(\bx(0:K-1);\hat{\bx}(0:K-1))/K = R(D,\dmat{Q},\tau)$. Thus the optimization (\ref{eq:rdf_midef}) has the interpretation of choosing, by optimization, a sensing matrix $C_{\tau}$ such that the linear-Gaussian measurement $\rvec{y}(t)=C_{\tau}\rvec{x}(t)+\rvec{v}(t)$ (with $\rvec{v}(t)\indep \rvec{x}(0:t)$) conveys the minimum information required to enable the decoder's estimate to satisfy the distortion constraint.}
    \label{fig:rdf_explanation}
    \vspace{-.5cm}
\end{figure*}
In these appendices, we describe the algorithm simulated in Section \ref{subsec:vecnum} to upper bound $\mathcal{L}^*_{\mathrm{CT}}(D_{c},\tau)$. The approach is based on that of \cite[Section IV]{tccdc23}, which defines a quantizer and prefix-free encoding algorithm for discrete-time LQG control with minimum bitrate. 

The dithered innovations quantizer design used in \cite[Section IV]{tccdc23} (defined originally in \cite{silvafirst,tanaka2016rate}) can be adapted to the discrete-time Gauss-Markov tracking problem. For each minimum sampling period $\tau$, we use a likewise adapted quantizer to discretize plant measurements sampled at the maximal rate. 
The quantizer design is discussed in Appendix \ref{app:quantizer}. The quantizations are then losslessly encoded and conveyed to the decoder/estimation center. Given the causally received quantizations, the decoder estimates the plant's state. 

What makes our approach ``event-based" is how the quantizations are encoded. The encoding of quantizations is described in Section \ref{app:en} While \cite{tccdc23} encodes quantizations with a prefix-free code, we use an non-singular encoding that is not prefix. This enables the use of the no transmission symbol $\emptyset$. In a sense, the encoding used here is simpler than that of \cite{tccdc23}. 

\appendices

\section{Description of the Quantizer Design}\label{app:quantizer}

Fig. \ref{fig:rdf_explanation} describes an interpretation of the discrete-time rate-distortion function (\ref{eq:RDF_disc_tc_vec_explicit}). This intuition will be used to define the quantizer. The intuition behind this quantizer design is discussed extensively in \cite{tanaka2016rate,tccjsait}. 

For a fixed sampling period $\tau$, we solve the rate-distortion semidefinite program (\ref{eq:RDF_disc_tc_vec_explicit}), and compute the optimal measurement matrix $\dmat{C}_{\tau}$ from Fig. \ref{fig:rdf_explanation}. The encoder and the decoder maintain almost synchronized discrete-time Kalman filters with respect to the $\tau$ sampled model. We use dithered quantization (cf. \cite{tccdc23}), and so assume that the encoder and decoder share access to a sequence of uniform random variables independent of the plant state.

We first describe the encoder. Denote the encoder's Kalman filter a priori (posterior) estimate at $t=k\tau$ as $\hat{\rvec{x}}(k|k-1)$ ($\hat{\rvec{x}}(k|k)$). Define an elementwise uniform quantizer with sensitivity $\Delta=1$ as the map from $\mathbb{R}^{m}\rightarrow\mathbb{Z}^{m}$ given by
\begin{align}\label{eq:unifquantdef}
    [Q_{1}(\dmat{x})]_{{i}} = \begin{cases}
        j\text{ if }[\dmat{x}]_{{i}}\in [j-\frac{1}{2},j+\frac{1}{2})
    \end{cases}.
\end{align} The map (\ref{eq:unifquantdef}) rounds each element of $\dmat{x}$ to the nearest integer. Define the dither sequence as $\{\rvec{d}_{k}\}$ as a countable sequence of IID uniform random vectors whose elements are mutually independent and marginally uniformly distributed on $[-.5,.5]$. We assume that $\{\rvec{d}_{k}\} \indep \rvec{x}_{t\in [0,\infty]}$.  At each discrete-time $k$, the encoder computes the quantized plant measurement
\begin{align}
    \rvec{q}(k) = Q_{1}\left(\dmat{C}_{\tau}(\rvec{x}(k)-\hat{\rvec{x}}(k|k-1))+\rvec{d}_{k}\right).
\end{align} The encoder will losslessly encode $\rvec{q}(k)$ into a codeword using a procedure specified in the next section. The decoder will be able to recover $\rvec{q}(k)$ exactly by time $k+1$. The encoder then computes the ``decoder's effective measurement" via $\rvec{m}(k) =  \rvec{q}(k)  - \rvec{d}_{k}$. It can be shown that, for $\rvec{v}(k)$ a random vector whose elements are IID uniform on $[-.5,.5]$ we have
\begin{IEEEeqnarray}{rCl}
    \rvec{m}(k)  &=& \dmat{C}_{\tau}(\rvec{x}(k)-\hat{\rvec{x}}(k|k-1))+\rvec{v}(k),
\end{IEEEeqnarray}  and $\rvec{v}(k)\indep (\rvec{x}({0:k}), \hat{\rvec{x}}(0:k|-1:k-1), \hat{\rvec{x}}(0:k-1|0:k-1))$. The encoder then computes the centered measurement $ \rvec{y}_{k}=\rvec{m}(k)+\dmat{C}_{\tau}(\rvec{x}(k)\hat{\rvec{x}}(k|k-1)$, which gives the plant measurement 
\begin{align}\label{eq:centeredemas}
     \rvec{y}(k)  = \dmat{C}_{\tau}\rvec{x}(k)+\rvec{v}(k).
\end{align} Note that  $\mathbb{E}[\rvec{v}(k)\rvec{v}^{\top}(k)] = \dmat{I}_{m}/12$. Thus, the measurement (\ref{eq:centeredemas}) is equivalent, up to second order, to the optimal measurement in Fig. \ref{fig:rdf_explanation}. The encoder then updates its Kalman filter's a priori estimate  $\hat{\rvec{x}}(k|k-1)$ using the measurement  $\rvec{y}(k)$. 

\section{Description of the lossless encoding/decoder procedure.}\label{app:en}
The support of the random variable $\rvec{q}(k)$ is countably infinite, which complicates the lossless encoding of $\rvec{q}(k)$ into binary strings. As in \cite{tccdc23}, we encode more probable realizations of $\rvec{q}(k)$ using a zero-delay lossless coding technique, and encode less likely realizations differently. This permits the use of fixed precision arithmetic and reduces computation time in encoding/decoding.  

Define $\dvec{c}\in\mathbb{N}^{{m}}_{+}$ be a vector of \textit{cutoffs}. The $\dvec{c}$ are fixed hyperparameters for our encoding algorithm. Assume arithmetic is to be performed with $p$ bit unsigned integers. We require that $ {n}=\prod_{j=1}^{{m}}([\dvec{c}]_{{j}}+1)$ has ${n} \le 2^{p}$. In practice, $n$ controls the computational complexity of the encoder, and so we generally use ${n} \ll 2^{p}$. 

To encode $\rvec{q}(k)$, we transform it into a source on $\mathbb{N}_{+}^{{m}}$ by computing the vector $\rvec{s}(k)$ via
\begin{align}\label{eq:wrap}
    [\rvec{s}(k)]_{{i}} = \begin{cases}
         2[\rvec{q}(k)]_{{i}}\text{, }&[\rvec{q}(k)]_{{i}}>0 \\ 
          -2[\rvec{q}(k)]_{{i}}+1\text{, }&[\rvec{q}(k)]_{{i}}\le 0 \\ 
    \end{cases}.
\end{align} 
Define the truncation operator $\text{trunk}_{\dvec{c}}:\mathbb{N}_{+}^{{m}}\rightarrow\mathbb{N}_{0}^{{m}}$  via
\begin{align}
   [\text{trunk}_{\dvec{c}}(\dvec{s})]_{{i}}  = \begin{cases}
         [\dvec{s}]_{{i}}\text{, }&  [\dvec{s}]_{{i}} \le [\dvec{c}]_{{i}} \\ 
          0\text{, }&\text{otherwise}\\ 
    \end{cases},
\end{align} i.e. given a vector input in the strictly positive quadrant,  we replace the vector components that exceed the cutoff with zero. 
Define the truncated ``wrapped" quantization sequence via
\begin{align}
    \overline{\rvec{s}}(k) = \text{trunk}_{\dvec{c}}(\rvec{s}(k)).
\end{align} We will say a truncation occurs at time $k$ if $\overline{\rvec{s}}(k) \neq {\rvec{s}}(k)$. Denote the sequence of symbols that were truncated $\hat{\rvec{s}}(k)$, so that $[\hat{\rvec{s}}(k)]_{{i}}$ is the  ${i}^{\mathrm{th}}$ symbol truncated from $\rvec{s}(k)$. Let $\mathcal{S} = \mathbb{N}_{0}^{m}\cap ([0,[\dvec{c}]_{0}]\times[0,[\dvec{c}]_{1}]\dots \times [0,[\dvec{c}]_{m-1}] )$. Note that the support of  $\overline{\rvec{s}}(k)$ is a subset of the countably finite set $\mathcal{S}$.  At every $t$, the encoder losslessly conveys $\overline{\rvec{s}}(k)$ to the decoder. The encoder and decoder maintain synchronized empirical probability models of the source $\overline{\rvec{s}}(k)$ based on observations of $\overline{\rvec{s}}(0:t-1)$, namely, they iteratively maintain a (not normalized) PMF estimator $   \hat{\boldsymbol{\mathbb{P}}}_{\overline{\rvec{s}}(k)}:\mathcal{S}\rightarrow [0,\min(2^p,k)]$. For $k\le 2^{p}-1$, we have
\begin{align}\label{eq:pmfest}
    \hat{\boldsymbol{\mathbb{P}}}_{\overline{\rvec{s}}(k)}(\dvec{s}) = \sum_{i=0}^{k-1} \mathbbm{1}_{\overline{\rvec{s}}(i)=\dvec{s}}. 
\end{align} Note that $ \hat{\boldsymbol{\mathbb{P}}}_{\overline{\rvec{s}}(k)}(\dvec{s})$ is a random variable.

At each $k$, both the encoder and decoder compute identical ``sortings" of the PMF $\hat{\boldsymbol{\mathbb{P}}}_{\overline{\rvec{s}}(k)}$. They compute the bijective mapping $\text{sort}_{\overline{\rvec{s}}(k)}:\mathcal{S}\rightarrow [1,|\mathcal{S}|]\cap \mathbb{N}$ such that if: $\text{sort}_{\overline{\rvec{s}}(k)}(\dvec{s}^{1}) \le \text{sort}_{\overline{\rvec{s}}(k)}(\dvec{s}^{2})$ then  $ \hat{\boldsymbol{\mathbb{P}}}_{\overline{\rvec{s}}(k)}(\dvec{s}^{1}) \ge\hat{\boldsymbol{\mathbb{P}}}_{\overline{\rvec{s}}(k)}(\dvec{s}^{2})$, e.g. for a source realization $\dvec{s}$, $\text{sort}_{\overline{\rvec{s}}(k)}(\dvec{s})$ outputs the rank of $\dvec{s}$, in probability order with respect to the estimated PMF (\ref{eq:pmfest}). Define the nonsingular codebook $\mathcal{C}_{\mathrm{ns}}:[1,|\mathcal{S}|]\cap \mathbb{N} \rightarrow \{0,1\}^{*}$ as in Table \ref{tab:nsencoding}. At time $k$, the encoder conveys $\overline{\rvec{s}}(k)$ to the decoder via transmitting the binary string $\rvec{a}^{1}(k) = \mathcal{C}_{\mathrm{ns}}(\text{sort}_{\overline{\rvec{s}}(k)}(\overline{\rvec{s}}(k)))$. In this way, realizations of $\overline{\rvec{s}}(k)$ that are more empirically probable with respect to $\hat{\boldsymbol{\mathbb{P}}}_{\overline{\rvec{s}}(k)}$ are mapped to shorter strings. Since  $\text{sort}_{\overline{\rvec{s}}(k)}$ and $\mathcal{C}_{\mathrm{ns}}$ are both one-to-one, they can be inverted and the decoder can recover $\overline{\rvec{s}}(k)$ exactly. The encoder and decoder then update their empirical probability models. If $k < 2^{p}-1$ or  
$\hat{\boldsymbol{\mathbb{P}}}_{\overline{\rvec{s}}(k)}(\rvec{s}(k)) < 2^{p}-1$ then we define
\begin{align}\label{eq:pmfest2}
    \hat{\boldsymbol{\mathbb{P}}}_{\overline{\rvec{s}}(k+1)}(\dvec{s}) = \hat{\boldsymbol{\mathbb{P}}}_{\overline{\rvec{s}}(k)}(\dvec{s})+\mathbbm{1}_{\dvec{s}(k)=s}.
\end{align} If $\hat{\boldsymbol{\mathbb{P}}}_{\overline{\rvec{s}}(k)}(\rvec{s}(k)) \ge 2^{p}-1$, then applying (\ref{eq:pmfest2}) will cause an arithmetic overflow. In this case, we perform a rescaling via 
\begin{align}
   \hat{\boldsymbol{\mathbb{P}}}_{\overline{\rvec{s}}(k+1)}(\dvec{s}) = \begin{cases}
    \lfloor{ \hat{\boldsymbol{\mathbb{P}}}_{\overline{\rvec{s}}(k)}(\dvec{s})}/{2} \rfloor+1 , &\dvec{s} = \overline{\rvec{s}}(k)\\ 
        \lfloor{ \hat{\boldsymbol{\mathbb{P}}} _{\overline{\rvec{s}}(k)}(\dvec{s})}/{2} \rfloor,& \dvec{s}\neq \overline{\rvec{s}}(k)
    \end{cases},
\end{align} where $\lfloor r \rfloor$ is the ``floor" operator. This is performed synchronously at the encoder and decoder. If a truncation occurs at time $k$, before transmitting the encoded version 

If a truncation occurs at time $k$, at time $k+1$, before conveying $\rvec{a}^{1}(k+1)$,  the encoder will losslessly convey the sequence of truncated symbols $\hat{\rvec{s}}(k)$ to the decoder. It will encode the string $\hat{\rvec{s}}(k)$ sequentially via the Elias Omega code \cite{eliasUniversal}\cite{tccdc23}. The Elias Omega code is a prefix-free encoding of the positive integers. Since at time $k+1$ the decoder knows $\overline{\rvec{s}}(k)$, it knows how many truncations occurred at time $k$ (i.e. it knows the length of $\hat{\rvec{s}}(k)$ by counting the zeros in $\overline{\rvec{s}}(k)$). In this case, at time $k+1$ the decoder can reconstruct the non-truncated ${\rvec{s}}(k)$ and  ${\rvec{q}}(k)$ exactly. In the event no truncations occur at time $k$, the decoder identifies that ${\rvec{s}}(k)=\overline{\rvec{s}}(k)$ and recovers ${\rvec{q}}(k)$ immediately. In this way, at time $k$ the decoder will have received (at least) $\rvec{q}(0:k-1)$, and can thus compute $\hat{\rvec{x}}(0:k-1)$ and $\hat{\rvec{x}}(0:k|-1:k-1)$.  

At every timestep, the decoder maintains an estimate of the plant state given the measurements it has received. Let $\hat{\rvec{x}}_{\mathrm{post}}(k)$ denote the decoder's estimate at time $k$ after receiving $\rvec{a}(k)$. This is used to compute CT distortion as in via (\ref{eq:finalSimplified}). Assume for now that the decoder receives $\rvec{a}(k)$, and that no truncation occurred at the previous timestep (e.g. $\overline{\rvec{s}}(k-1) = {\rvec{s}}(k-1)$). In this case, we will have
\begin{IEEEeqnarray}{rCl}\label{eq:decodersbest}
    \hat{\rvec{x}}_{\mathrm{post}}(k-1) &=& \hat{\rvec{x}}(k-1)\\ \dmat{A}_{\tau}\hat{\rvec{x}}_{\mathrm{post}}(k-1) &=& \hat{\rvec{x}}(k|k-1)
\end{IEEEeqnarray} where $\hat{\rvec{x}}(k-1)$ and $\hat{\rvec{x}}(k|k-1)$ correspond to the decoders estimates. Recall that $\hat{\rvec{x}}(k|k-1)$ is just the encoder's Kalman prediction from the effective measurements $\rvec{y}(0:k-1)$ in (\ref{eq:centeredemas}).  If there are no truncations at time $k$, then the decoder can immediately recover ${\rvec{q}}(k)$ and, since it can compute $\hat{\rvec{x}}(k|k-1)$ via (\ref{eq:decodersbest}), it can compute ${\rvec{y}}(k)$. It then applies the Kalman measurement update to $\dmat{A}_{\tau}\hat{\rvec{x}}_{\mathrm{post}}(k-1) = \hat{\rvec{x}}(k|k-1)$ and sets  $\hat{\rvec{x}}_{\mathrm{post}}(k) = \hat{\rvec{x}}(k)$.  If there is a truncation at time $k$, the decoder simply uses a prediction, setting  $\hat{\rvec{x}}_{\mathrm{post}}(k) = \dmat{A}_{\tau}\hat{\rvec{x}}_{\mathrm{post}}(k-1)$ (equivalently  $\hat{\rvec{x}}_{\mathrm{post}}(k)=\hat{\rvec{x}}(k|k-1)$).  

If a truncation occurred at time $k-1$, at time $k$ the the decoder first receives an encoded version of $\hat{\rvec{s}}(k-1)$. It can then recover $\rvec{q}(k-1)$. It uses this to recover $\hat{\rvec{x}}(k-1|k-1)$ and to compute $\hat{\rvec{x}}(k|k-1)$. It then receives $\overline{\rvec{s}}(k)$. If no truncations occur at time $k$, it computes $\rvec{y}(k)$ via (\ref{eq:centeredemas}) and the measurement update $\hat{\rvec{x}}(k|k)$. It sets 
$\hat{\rvec{x}}_{\mathrm{post}}(k)=\hat{\rvec{x}}(k|k)$. If a truncation occurs, it sets $\hat{\rvec{x}}_{\mathrm{post}}(k)=\hat{\rvec{x}}(k|k-1)$.
\begin{table}[h]
\centering
\begin{tabular}{|c||c|c|c|c|c|c|}
\hline 
$r$ & 1 & 2 & 3 & 4 & 5 & ... \\
\hline
$\mathcal{C}_{\mathrm{ns}}(r)$ & $\emptyset$ & $0$ & 1 & 00 & 01 & ... \\
\hline
\end{tabular}
\caption{The nonsingular encoding.}\label{tab:nsencoding}
\end{table}

\end{document}